\newtheorem{theorem}{Theorem}[section]
\newtheorem{cor}[theorem]{Corollary}
\newtheorem{obs}[theorem]{Observation}
\newtheorem{problem}[theorem]{Problem}
\newtheorem{lemma}[theorem]{Lemma}
\newtheorem{claim}[theorem]{Claim}
\theoremstyle{definition}
\newtheorem{definition}[theorem]{Definition}
\theoremstyle{remark}
\newtheorem{example}[theorem]{Example}
\newcommand{\ie}{\emph{i.e.}}
\newcommand{\eg}{\emph{e.g.}}
\newcommand{\etc}{\emph{etc}}
\newcommand{\etal}{\emph{et al}}
\newcommand{\iod}{\ensuremath{\mathsf{IoD}}}
\newcommand{\run}[1]{\ensuremath{\mathbf{#1}}}
\newcommand{\ex}[2]{\ensuremath{\left<#1;#2\right>}}
\renewcommand{\P}{\ensuremath{\mathcal{P}}}
\newcommand{\Runs}{\ensuremath{\mathcal{R}}}
\newcommand{\N}{\ensuremath{\mathbb{N}}}
\newcommand{\R}{\ensuremath{\mathbb{R}}}
\begin{document}

\date{}
\title{Distributed Computing with Adaptive Heuristics\\
{\small Revised version will appear in the \emph{Proceedings of Innovations in Computer Science 2011}}}

\author{Aaron D.\ Jaggard
\thanks{Supported in part by
NSF grants 0751674 and 0753492.}\\
Dept. of Computer Science, Colgate University\\ DIMACS, Rutgers University\\
\texttt{adj@dimacs.rutgers.edu}
\and Michael Schapira\thanks{Supported by NSF grant 0331548.}\\
Dept. of Computer Science\\
Yale University and UC Berkeley\\
\texttt{michael.schapira@yale.edu}
\and Rebecca N.\ Wright
\thanks{Supported in part by NSF grant 0753061.}\\
Dept.\ of Computer Science and DIMACS\\
Rutgers University\\
\texttt{rebecca.wright@rutgers.edu}}

\maketitle

\setcounter{page}{0}
\thispagestyle{empty}

\begin{abstract}
We use ideas from distributed computing to study dynamic environments
in which computational nodes, or decision makers, follow \emph{adaptive heuristics}~\cite{H05}, \ie, simple and
unsophisticated rules of behavior, \emph{e.g.}, repeatedly ``best replying'' to others' actions, and minimizing ``regret'',
that have been extensively studied in game theory and economics. We explore when convergence of such simple dynamics
to an equilibrium is guaranteed in asynchronous computational
environments, where nodes can act at any time. Our research agenda,
\emph{distributed computing with adaptive heuristics}, lies on the
borderline of computer science (including distributed computing and
learning) and game theory (including game dynamics and adaptive
heuristics). We exhibit a general non-termination result for a broad
class of heuristics with bounded recall---that is, simple rules
of behavior that depend only on recent history of interaction between
nodes. We consider implications of our result across a wide variety of
interesting and timely applications: game theory, circuit design,
social networks, routing and congestion control. We also study the
computational and communication complexity of asynchronous dynamics
and present some basic observations regarding the effects of
asynchrony on no-regret dynamics. We believe that our work opens a new
avenue for research in both distributed computing and game theory.
\end{abstract}

\newpage

\section{Introduction}

Dynamic environments where computational nodes, or decision makers,
repeatedly interact arise in a variety of settings, such as Internet
protocols, large-scale markets, social networks, multi-processor
computer architectures, and more. In many such settings, the
prescribed behavior of the nodes is often simple, natural and myopic
(that is, a heuristic or ``rule of thumb''), and is also adaptive, in
the sense that nodes constantly and autonomously react to others. These ``adaptive
heuristics''---a term coined in~\cite{H05}---include simple
behaviors, \emph{e.g.}, repeatedly ``best replying'' to others' actions, and minimizing ``regret'',
that have been extensively studied in game theory and economics.

Adaptive heuristics are simple and unsophisticated, often
reflecting either the desire or necessity for computational nodes
(whether humans or computers) to provide quick responses and have a
limited computational burden. In many interesting
contexts, these adaptive heuristics can, in the long run, move the
global system in good directions and yield highly rational and
sophisticated behavior, such as in game theory results demonstrating
the convergence of best-response or no-regret dynamics to equilibrium
points (see~\cite{H05} and references therein).

However, these positive results for adaptive heuristics in game theory are, with but
a few exceptions (see Section~\ref{sec:related}), based on the sometimes implicit
and often unrealistic premise that nodes' actions are somehow
synchronously coordinated. In many settings, where nodes can act at
any time, this kind of synchrony is not available.  It has long been
known that asynchrony introduces substantial difficulties in
distributed systems, as compared to synchrony~\cite{FLP85}, due to the
``limitation imposed by local knowledge''~\cite{Lyn89}. There has been
much work in distributed computing on identifying conditions that
guarantee protocol termination in asynchronous computational
environments. Over the past three decades, we have seen many results
regarding the possibility/impossibility borderline for
failure-resilient computation~\cite{Lyn89,FR03}. In the classical
results of that setting, the risk of non-termination stems from the
possibility of failures of nodes or other components.

We seek to bring together these two areas to form a new
research agenda on distributed computing with adaptive heuristics. Our aim is to
draw ideas from distributed computing theory to investigate provable
properties and possible worst-case system behavior of adaptive
heuristics in asynchronous computational environments. We take the first steps of this research agenda.  We show that a large
and natural class of adaptive heuristics fail to provably converge to an
equilibrium in an asynchronous setting, even if the nodes and communication channels are
guaranteed to be failure-free. This has implications across a wide domain of applications: convergence of game dynamics to pure Nash equilibria; stabilization of asynchronous circuits; convergence to a stable routing tree of the Border Gateway Protocol, that handles Internet routing; and more. We also explore the impact of scheduling on convergence
guarantees.  We show that non-convergence is not inherent to adaptive
heuristics, as some forms of regret minimization provably converge in
asynchronous settings.  In more detail, we make the following
contributions:

\vspace{0.02in}\noindent{\bf General non-convergence result (Section~\ref{sec:bounded}).}
It is often desirable or necessary due to practical constraints that
computational nodes' (\eg, routers') behavior rely on limited memory
and processing power. In such contexts, nodes' adaptive heuristics are
often based on \emph{bounded recall}---\ie, depend solely on recent
history of interaction with others---and can even be
\emph{historyless}---\ie, nodes only react to other nodes' current
actions). We exhibit a general impossibility result using a valency
argument---a now-standard technique in distributed computing
theory~\cite{Lyn89, FR03}---to show that a broad class of bounded-recall adaptive heuristics cannot always converge to a stable
state. More specifically, we show that, for a large family of such
heuristics, simply the existence of two ``equilibrium points'' implies that
there is some execution that does not converge to any outcome even if nodes and
communication channels are guaranteed not to fail. We also give
evidence that our non-convergence result is essentially tight.

\vspace{0.02in}\noindent{\bf Implications across a wide variety of
interesting and timely applications (Section~\ref{sec:examples}).} We
apply our non-convergence result to a wide variety of interesting
environments, namely convergence of game dynamics to pure Nash
equilibria, stabilization of asynchronous circuits, diffusion of
technologies in social networks, routing on the Internet, and
congestion control protocols.

\vspace{0.02in}\noindent{\bf Implications for convergence of $r$-fairness and randomness
(Section~\ref{sec:schedules}).} We study the effects on convergence to
a stable state of natural restrictions on the order of nodes'
activations (i.e., the order in which nodes' have the opportunity to
take steps), that have been extensively studied in distributed
computing theory: (1) \emph{$r$-fairness}, which is the guarantee that each node
selects a new action at least once within every $r$ consecutive time steps, for some pre-specified $r>0$; and (2) randomized selection of the initial state of the system and the order of nodes' activations.

\vspace{0.02in}\noindent{\bf Communication and computational complexity of asynchronous
dynamics (Section~\ref{sec:complexity}).} We study the tractability of
determining whether convergence to a stable state is guaranteed. We
present two complementary hardness results that establish that, even
for extremely restricted kinds of interactions, this feat is hard: (1)
an exponential communication complexity lower bound; and (2) a
computational complexity PSPACE-completeness result that, alongside
its computational implications, implies that we cannot hope to have
short witnesses of guaranteed asynchronous convergence (unless PSPACE
$\subseteq$ NP).

\vspace{0.02in}\noindent{\bf Asynchronous no-regret dynamics (Section~\ref{sec:regret}).}
We present some basic observations about the convergence properties of
no-regret dynamics in our framework, that establish that, in contrast
to other adaptive heuristics, regret minimization is quite robust to
asynchrony.

\vspace{0.02in}\noindent{\bf Further discussion of a research agenda in distributed
computing with adaptive heuristics (Section~\ref{sec:future})} We believe that this work has but scratched the
surface in the exploration of the behavior of adaptive heuristics in asynchronous computational environments. Many
important questions remain wide open. We present context-specific problems in the relevant sections, and also outline general interesting directions for future research in Section~\ref{sec:future}.\vspace{0.02in}

Before presenting our main results, we overview related work
(Section~\ref{sec:related}) and provide a detailed description of our
model (Section~\ref{sec:model}).

\section{Related Work} \label{sec:related}

Our work relates to many ideas in game theory and in distributed computing.  We discuss game theoretic work on adaptive heuristics and on asynchrony, and also distributed computing work on fault tolerance and self stabilization. We also highlight the application areas we consider.

\vspace{0.02in}\noindent{\bf Adaptive heuristics.} Much work in game
theory and economics deals with adaptive heuristics (see Hart~\cite{H05}
and references therein). Generally speaking, this long line of research
explores the ``convergence'' of simple and myopic rules of behavior (\emph{e.g.}, best-response/fictitious-play/no-regret dynamics) to
an ``equilibrium''. However, with few exceptions (see below), such analysis has so far primarily concentrated on
synchronous environments in which steps take place simultaneously or in some other predetermined
prescribed order. In contrast, we explore adaptive heuristics in asynchronous environments, which are more realistic for many applications.

\vspace{0.02in}\noindent{\bf Game-theoretic work on asynchronous environments.} Some game-theoretic work on repeated games considers ``asynchronous moves''.\footnote{Often, the term asynchrony merely indicates that players are not all activated at each time step, and thus is used to describe environments where only one player is activated at a time (``alternating moves''), or where there is a probability distribution that determines who is activated when.} (see~\cite{LM97,Y04}, among others, and references therein). Such work does not explore the behavior of dynamics, but has other research goals (\emph{e.g.}, characterizing equilibria, establishing Folk theorems). We are, to the best of our knowledge, the first to study the effects of asynchrony (in the broad distributed computing sense) on the convergence of \emph{game dynamics} to equilibria.

\vspace{0.02in}\noindent{\bf Fault-tolerant computation.} We use ideas and techniques from work in distributed
computing on protocol termination in asynchronous computational environments where nodes and communication channels are possibly faulty. Protocol termination in such environments, initially motivated by multi-processor computer architectures, has been extensively studied
in the past three decades~\cite{FLP85, Ben-Or86, DDS87, BG93, HS99,SZ00}, as nicely surveyed in~\cite{Lyn89,FR03}. Fischer, Lynch and Paterson~\cite{FLP85} showed, in a landmark paper, that a broad class of failure-resilient consensus protocols cannot provably terminate. Intuitively, the risk of protocol nontermination in~\cite{FLP85} stems from the possibility of failures; a computational node cannot tell whether another node is silent due to a failure or is simply taking a long time to react. Our focus here is, in contrast, on failure-free environments.

\vspace{0.02in}\noindent{\bf Self stabilization.} The concept of self stabilization is fundamental to distributed computing and dates back to Dijkstra, 1973 (see~\cite{D00} and references therein). Convergence of adaptive heuristics to an ``equilibrium'' in our model can be viewed as the self stabilization of such dynamics (where the ``equilibrium points'' are the legitimate configurations). Our formulation draws ideas from work in distributed computing (\emph{e.g.}, Burns' distributed daemon model) and in networking research~\cite{GSW02} on self stabilization.

\vspace{0.02in}\noindent{\bf Applications.} We discuss the implications of our non-convergence result across a wide variety of
applications, that have previously been studied: convergence of game dynamics (see, \eg,~\cite{HM03,HM06}); asynchronous circuits (see,
\eg,~\cite{DN97}); diffusion of innovations, behaviors, \etc., in social networks (see Morris~\cite{Morris00} and also~\cite{IKMW07}); interdomain routing~\cite{GSW02,SSZ09}; and congestion
control~\cite{GSZS09}.

\section{The Model}\label{sec:model}
We now present our model for analyzing adaptive heuristics in asynchronous environments.

\vspace{0.02in} \noindent {\bf Computational nodes interacting.} There is an \emph{interaction system} with $n$ \emph{computational nodes}, $1,\ldots,n$. Each computational node $i$ has an \emph{action space} $A_i$. Let $A=\times_{j\in [n]}A_j$, where $[n]=\{1,\ldots,n\}$. Let $A_{-i}=\times_{j\in [n]\setminus\{i\}}A_j$. Let $\Delta(A_i)$ be the set of all probability distributions over the actions in $A_i$.

\vspace{0.02in}\noindent {\bf Schedules.} There is an infinite sequence of discrete time steps $t=1,\ldots$.  A \emph{schedule} is a function $\sigma$ that maps each $t\in \N_{+}=\{1,2,\ldots\}$ to a nonempty set of computational nodes: $\sigma(t)\subseteq [n]$.  Informally, $\sigma$ determines (when we consider the dynamics of the system) which nodes are \emph{activated} in each time-step.  We say that a schedule $\sigma$ is \emph{fair} if each node $i$ is activated infinitely many times in $\sigma$, \ie, $\forall i\in [n]$, there are infinitely many $t\in\N_{+}$ such that $i\in \sigma(t)$. For $r\in\N_+$, we say that a schedule $\sigma$ is \emph{$r$-fair} if each node is activated at least once in every sequence of $r$ consecutive time steps, \ie, if, for every $i\in[n]$ and $t_0\in\N_+$, there is at least one value $t\in\{t_0,t_0+1,\ldots,t_0+r-1\}$ for which $i\in \sigma(t)$.

\vspace{0.02in}\noindent {\bf History and reaction functions.} Let $H_{0}=\emptyset$, and let $H_{t}=A^t$ for every $t\geq 1$. Intuitively, an element in $H_{t}$ represents a possible \emph{history} of interaction at time step $t$.  For each node $i$, there is an infinite sequence of functions $f_i = (f_{(i,1)},f_{(i,2)},\ldots,f_{(i,t)},\ldots)$ such that, for each $t\in\N_{+}$, $f_{(i,t)}:H_t\rightarrow \Delta(A_i)$; we call $f_i$ the \emph{reaction function} of node $i$.  As discussed below, $f_i$ captures $i$'s way of responding to the history of interaction in each time step.

\vspace{0.02in}\noindent {\bf Restrictions on reaction functions.} We now present five possible restrictions on reaction functions: determinism, self-independence, bounded recall, stationarity and historylessness.

\begin{enumerate}

\item {\bf Determinism:} a reaction function $f_i$ is \emph{deterministic} if, for each input, $f_i$ outputs a single action (that is, a probability distribution where a single action in $A_i$ has probability $1$).

\item {\bf Self-independence:} a reaction function $f_i$ is \emph{self-independent} if node $i$'s own (past and present) actions do not affect the outcome of $f_i$. That is, a reaction function $f_i$ is self-independent if for every $t\geq 1$ there exists a function $g_t:A_{-i}^t\rightarrow \Delta(A_i)$ such that $f_{(i,t)}\equiv g_t$.

\item {\bf $k$-recall and stationarity:} a node $i$ has \emph{$k$-recall} if its reaction function $f_i$ only depends on the $k$ most recent time steps, \ie, for every $t\geq k$, there exists a function $g:H_{k}\rightarrow \Delta(A_i)$ such that $f_{(i,t)}(x)=g(x_{|k})$ for each input $x\in H_t$ ($x_{|k}$ here denotes the last $k$ coordinates, \ie, $n$-tuples of actions, of $x$). We say that a $k$-recall reaction function is \emph{stationary} if the time counter $t$ is of no importance. That is, a $k$-recall reaction function is stationary if there exists a function $g:H_{k}\rightarrow \Delta(A_i)$ such that \emph{for all} $t\geq k$, $f_{(i,t)}(x)=g(x_{|k})$ for each input $x\in H_t$.

\item {\bf Historylessness:} a reaction function $f_i$ is \emph{historyless} if $f_i$ is $1$-recall \emph{and} stationary, that is, if $f_i$ only depends on $i$'s and on $i$'s neighbors' most recent actions.

\end{enumerate}

\vspace{0.02in}\noindent {\bf Dynamics.} We now define \emph{dynamics} in our model. Intuitively, there is some initial state (history of interaction) from which the interaction system evolves, and, in each time step, some subset of the nodes reacts to the past history of interaction. This is captured as follows. Let $s^{(0)}$, that shall be called the ``\emph{initial state}'', be an element in $H_w$, for some positive $w\in\N$. Let $\sigma$ be a schedule. We now describe the ``\emph{$(s^{(0)},\sigma)$-dynamics}''. The system's evolution starts at time $t=w+1$, when each node $i\in \sigma(w+1)$ simultaneously chooses an action according to $f_{(i,w+1)}$, \ie, node $i$ randomizes over the actions in $A_i$ according to $f_{(i,w+1)}(s^{(0)})$. We now let $s^{(1)}$ be the element in $H^{w+1}$ for which the first $w$ coordinates ($n$-tuples of nodes' actions) are as in $s^{(0)}$ and the last coordinate is the $n$-tuple of \emph{realized} nodes' actions at the end of time step $t=w+1$. Similarly, in each time step $t>w+1$, each node in $\sigma(t)$ updates its action according to $f_{(i,t)}$, based on the past history $s^{(t-w-1)}$, and nodes' realized actions at time $t$, combined with $s^{(t-w-1)}$, define the history of interaction at the end of time step $t$, $s^{(t-w)}$.

\vspace{0.02in}\noindent {\bf Convergence and convergent systems.} We say that nodes' actions \emph{converge} under the $(s^{(0)},\sigma)$-dynamics if there exist some positive $t_0\in\N$, and some action profile $a=(a_1,\ldots,a_n)$, such that, for all $t>t_0$, $s^{(t)}=a$. The dynamics is then said to converge to $a$, and $a$ is called a ``\emph{stable state}'' (for the $(s^{(0)},\sigma)$-dynamics), \ie, intuitively, a stable state is a global action state that, once reached, remains unchanged. We say that the interaction system is \emph{convergent} if, for all initial states $s^{(0)}$ and \emph{fair} schedules $\sigma$, the $(s^{(0)},\sigma)$-dynamics converges. We say that the system is \emph{r-convergent} if, for all initial states $s^{(0)}$ and \emph{r-fair} schedules $\sigma$, the $(s^{(0)},\sigma)$-dynamics converges.

\vspace{0.02in}\noindent {\bf Update messages.} Observe that, in our model, nodes' actions are \emph{immediately observable} to other nodes at the end of each time step (``\emph{perfect monitoring}''). While this is clearly unrealistic in some important real-life contexts (\emph{e.g.}, some of the environments considered below), this restriction only strengthens our main results, that are impossibility results.

\vspace{0.02in}\noindent {\bf Deterministic historyless dynamics.} Of special interest to us is the case that all reaction functions are deterministic and historyless. We observe that, in this case, stable states have a simple characterization. Each reaction function $f_i$ is deterministic and historyless and so can be specified by a function $g_i:A\rightarrow A_i$. Let $g=(g_1,\ldots,g_n)$. Observe that the set of all stable states (for all possible dynamics) is precisely the set of all fixed points of $g$. Below, when describing nodes' reaction functions that are deterministic and historyless we sometimes abuse notation and identify each $f_i$ with $g_i$ (treating $f_i$ as a function from $A$ to $A_i$). In addition, when all the reaction functions are also self-independent we occasionally treat each $f_i$ as a function from $A_{-i}$ to $A_i$.

\section{Non-Convergence Result}\label{sec:bounded}

We now present a general impossibility result for convergence of nodes' actions under bounded-recall dynamics in asynchronous, distributed computational environments.

\begin{theorem}\label{thm:historyless}
If each reaction function has bounded recall and is self-independent then the existence of multiple stable states implies that the system is not convergent.
\end{theorem}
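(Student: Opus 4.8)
The plan is to run a valency argument in the style of Fischer--Lynch--Paterson~\cite{FLP85}, using self-independence to manufacture pairs of ``indistinguishable'' configurations. Fix two distinct stable states $a \neq b$, and let $k$ be larger than the recall of every node, so that we may treat each $f_i$ as having $k$-recall; call an element of $H_k = A^k$ a \emph{configuration}, and note that a configuration together with the current time step determines the entire future evolution once a schedule is fixed. Since $a$ and $b$ are stable states, the constant configurations $\bar a = (a,\dots,a)$ and $\bar b = (b,\dots,b)$ are \emph{absorbing}: from either of them no activation ever changes the configuration, so each is ``univalent.'' Assume toward a contradiction that the system is convergent; then from every configuration every fair schedule converges, so it makes sense to call a configuration \emph{$x$-univalent} if all of its fair continuations converge to the stable state $x$, and \emph{multivalent} if two of its fair continuations converge to distinct stable states.

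First I would exhibit a multivalent configuration. Consider the chain $C_0, C_1, \dots, C_n$ in which, in $C_j$, nodes $1,\dots,j$ play their $b$-coordinate and nodes $j+1,\dots,n$ play their $a$-coordinate (in all $k$ slots), so that $C_0 = \bar a$, $C_n = \bar b$, and $C_\ell$ differs from $C_{\ell+1}$ only in the action of node $p := \ell + 1$ (and when $a_{\ell+1} = b_{\ell+1}$ the two configurations coincide and $\ell$ can be skipped). The crucial point is that activating \emph{only} node $p$ for $k$ consecutive steps drives \emph{both} $C_\ell$ and $C_{\ell+1}$ to one and the same configuration $D$: the non-$p$ coordinates agree in $C_\ell$ and $C_{\ell+1}$ and are never touched (the other nodes stay idle), while node $p$'s successive actions are values of its reaction functions evaluated on configurations whose non-$p$ coordinates are the same in the two runs and at the same time steps --- so, by self-independence, $p$ produces the identical sequence of actions in both runs, and after $k$ steps the relevant window has been overwritten identically. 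Hence if every $C_j$ were univalent, then for each $\ell$ the schedule ``activate $p$ for $k$ steps, then follow any one fixed fair tail from $D$'' is a fair schedule from $C_\ell$ and also from $C_{\ell+1}$ whose limit is the limit of that tail from $D$; univalence would then force $C_\ell$ and $C_{\ell+1}$ to have the same valence for every $\ell$, hence $a = \mathrm{val}(C_0) = \mathrm{val}(C_n) = b$, a contradiction. So some $C_j$ --- a legitimate initial state, being an element of $H_k$ --- is multivalent.

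Next I would run the standard valency ``chase.'' The heart is a preservation lemma: from any multivalent configuration $C$ and any node $i$ there is a finite block of activations that activates $i$ at least once and ends at another multivalent configuration. Granting this, one splices such blocks together for $i = 1, 2, \dots, n, 1, 2, \dots$, obtaining an infinite \emph{fair} schedule along which the configuration is multivalent at every stage; such a schedule never stabilizes, contradicting convergence of the system. Hence the system is not convergent.

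I expect the preservation lemma to be the main obstacle. It is the analogue of the ``critical configuration'' step of~\cite{FLP85}: if activating $i$ from $C$ already lands in a multivalent configuration we are done; otherwise the $i$-successor of $C$ is univalent, and, comparing it with a fair continuation of $C$ that witnesses a \emph{different} valence, one again uses the self-independence/indistinguishability trick to locate two configurations that differ only in one node's action yet whose $i$-successors are univalent with \emph{different} valences --- which the indistinguishability argument forbids. A minor point to keep track of is that bounded-recall reaction functions need not be stationary; this is harmless, because in every comparison above the two runs being compared unfold over the \emph{same} time steps, so any dependence of the $f_{(i,t)}$ on $t$ cancels.
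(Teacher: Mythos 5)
Your overall architecture is the right one and, at the level of strategy, matches the paper's: the paper also runs an FLP-style valency argument (packaged axiomatically in Appendix~\ref{ap:axiomatic} via the ``Independence of Decisions'' condition and Theorem~\ref{thm:gen}), and your chain $C_0,\dots,C_n$ together with the ``activate $p$ alone for $k$ steps'' indistinguishability trick is essentially the paper's argument (Lemma~\ref{lem:hlessmulti}) for producing a multivalent initial state. That part of your proof, and the final splicing of $i$-activating blocks into a fair non-convergent schedule, are fine.

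The genuine gap is in the preservation lemma, which you rightly call the main obstacle but whose proof sketch does not go through as written. Two problems. First, a single step of the fair continuation from $C$ activates a \emph{set} of nodes simultaneously, so consecutive configurations along that continuation can differ in many nodes' coordinates at once; you do not automatically obtain ``two configurations that differ only in one node's action.'' Second, and more seriously, the pattern you claim is forbidden --- $E$ and $E'$ differing only in node $j$'s action, with $i$-successors univalent of different valences --- is only refuted by the flushing trick when $j=i$. When $j\neq i$, node $i$'s new action is computed from a snapshot that \emph{includes} $j$'s action, so the $i$-successors of $E$ and $E'$ differ in the coordinates of \emph{two} distinct nodes, and repeatedly activating any single node never identifies them; self-independence buys you nothing here, and I see no reason the pattern is impossible in general. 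You have conflated the relationship used in the initial-bivalence chain (adjacent configurations) with the one needed at the critical configuration, which is that one configuration is a \emph{successor} of the other under some event $e'$, followed by a case analysis on how $e'$ interacts with ``activate $i$.'' That case analysis is where the real work lies, and in the simultaneous-revelation semantics ``activate $i$'' and ``activate $j$'' do not commute ($j$ sees $i$'s new action in one order but not the other). The paper's proof exists precisely to repair this: it decomposes each round into sequential per-node computations followed by a separate reveal event $W$ (the act-and-tell model), proves the color equalities of Lemma~\ref{lem:ceq} (that $e_iWe_iW$ and $We_iW$, and $e_ie_j$ and $e_je_i$, lead to identically colored runs) using self-independence and bounded recall, and derives the preservation step from the resulting IoD property via the Fair-Extension Lemma. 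Without an equivalent of that decomposition and case analysis, your preservation lemma --- and hence the proof --- is incomplete.
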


We note that this result holds even if nodes' reaction functions are not stationary and are randomized (randomized initial states and activations are discussed in Section~\ref{sec:schedules}). We present the proof of Theorem~\ref{thm:historyless} in Appendix~\ref{ap:axiomatic}. We now discuss some aspects of our impossibility result.

\vspace{0.02in}\noindent {\bf Neither bounded recall nor self-independence alone implies non-convergence}
We show that the statement of Theorem~\ref{thm:historyless} does not hold if either the bounded-recall restriction, or the self-independence restriction, is removed.

\begin{example}\label{ex:not-consensus} {\bf(the bounded-recall restriction cannot be removed)}
There are two nodes, $1$ and $2$, each with the action space $\{x,y\}$. The deterministic and self-independent reaction functions of the nodes are as follows: node $2$ always chooses node $1$'s action; node $1$ will choose $y$ if node $2$'s action changed from $x$ to $y$ in the past, and $x$ otherwise. Observe that node $1$'s reaction function is not bounded-recall but can depend on the \emph{entire} history of interaction. We make the observations that the system is safe and has two stable states. Observe that if node $1$ chooses $y$ at some point in time due to the fact that node $2$'s action changed from $x$ to $y$, then it shall continue to do so thereafter; if, on the other hand, $1$ never does so, then, from some point in time onwards, node $1$'s action is constantly $x$. In both cases, node $2$ shall have the same action as node $1$ eventually, and thus convergence to one of the two stable states, $(x,x)$ and $(y,y)$, is guaranteed. Hence, two stable states exist and the system is convergent nonetheless
\end{example}

\begin{example}\label{ex:iindependent} {\bf(the self-independence restriction cannot be removed)}
There are two nodes, $1$ and $2$, each with action set $\{x,y\}$. Each node $i$'s a deterministic and historyless reaction function $f_i$ is as follows: $f_i(x,x)=y$; in all other cases the node always (re)selects its current action (\eg, $f_1(x,y)=x$, $f_2(x,y)=y$). Observe that the system has three stable states, namely all action profiles but $(x,x)$, yet can easily be seen to be convergent.
\end{example}

\noindent{\bf Connections to consensus protocols.} We now briefly discuss the interesting connections between Theorem~\ref{thm:historyless} and the non-termination result for failure-resilient consensus protocols in~\cite{FLP85}. We elaborate on this topic in Appendix~\ref{apx:connections}. Fischer \emph{et al.}~\cite{FLP85} explore when a group of processors can reach a consensus even in the presence of failures, and exhibit a breakthrough non-termination result. Our proof of Theorem~\ref{thm:historyless} uses a valency argument---an idea introduced in the proof of the non-termination result in~\cite{FLP85}.

Intuitively, the risk of protocol non-termination in~\cite{FLP85} stems from the possibility of failures; a computational node cannot tell whether another node is silent due to a failure or is simply taking a long time to react. We consider environments in which nodes/communication channels cannot fail, and so each node is guaranteed that all other nodes react after ``sufficiently long'' time. This guarantee makes reaching a consensus in the environment of~\cite{FLP85} easily achievable (see Appendix~\ref{apx:connections}). Unlike the results in~\cite{FLP85}, the possibility of nonconvergence in our framework stems from limitations on nodes' behaviors. Hence, there is no immediate translation from the result in~\cite{FLP85} to ours (and vice versa). To illustrate this point, we observe that in both Example~\ref{ex:not-consensus} and Example~\ref{ex:iindependent}, there exist two stable states and an initial state from which both stable states are reachable (a ``bivalent state''~\cite{FLP85}), yet the system is convergent (see Appendix~\ref{apx:connections}). This should be contrasted with the result in~\cite{FLP85} that establishes that the existence of an initial state from which two distinct outcomes are reachable implies the existence of a non-terminating execution.

We investigate the link between consensus protocols and our framework further in Appendix~\ref{ap:axiomatic}, where we take an axiomatic approach. We introduce a condition---``\emph{Independence of Decisions}'' (IoD)---that holds for both fault-resilient consensus protocols and for bounded-recall self-independent dynamics. We then factor the arguments in~\cite{FLP85} through IoD to establish a non-termination result that holds for both contexts, thus unifying the treatment of these dynamic computational environments.

\section{Games, Circuits, Networks, and Beyond}\label{sec:examples}

We present implications of our impossibility result in Section~\ref{sec:bounded} for several well-studied environments: game theory, circuit design, social networks and Internet protocols. We now briefly summarize these implications, that, we believe, are themselves of independent interest. See Appendix~\ref{apx:examples} for a detailed exposition
of the results in this section.

\vspace{0.02in}\noindent{\bf Game theory.} Our result, when cast into game-theoretic terminology, shows that if players' choices of strategies are not synchronized, then the existence of two (or more) pure Nash equilibria implies that a broad class of game dynamics (\emph{e.g.}, best-response dynamics with consistent tie-breaking) are not guaranteed to reach a pure Nash equilibrium. This result should be contrasted with positive results for such dynamics in the traditional synchronous game-theoretic environments.

\begin{theorem}
If there are two (or more) pure Nash equilibria in a game, then all bounded-recall self-independent dynamics can oscillate indefinitely for asynchronous player activations.
\end{theorem}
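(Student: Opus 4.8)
The plan is to obtain this statement as a corollary of Theorem~\ref{thm:historyless} by recasting the game together with its best-response dynamics inside the interaction-system model of Section~\ref{sec:model}. First I would identify the $n$ players with the $n$ computational nodes, each player's pure-strategy set with its action space $A_i$, and a strategy profile with an action profile $a\in A$. For the dynamics, fix once and for all a tie-breaking order $\prec_i$ on each $A_i$ and define the reaction function of node $i$ by $f_i(a_{-i})$ = the $\prec_i$-least element of $\arg\max_{s\in A_i} u_i(s,a_{-i})$, where $u_i$ is player $i$'s payoff; that is, an activated player best-responds to the others' current actions and resolves ties by a fixed rule. Since this depends only on the others' current actions, $f_i$ is deterministic, historyless (hence bounded-recall), and self-independent, so the hypotheses of Theorem~\ref{thm:historyless} are met.

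Second, I would match pure Nash equilibria with stable states. By the characterization of deterministic historyless dynamics in Section~\ref{sec:model}, the stable states are exactly the fixed points of $g=(f_1,\dots,f_n)$. If $a^\ast$ is a pure Nash equilibrium then $a^\ast_i$ is a best response to $a^\ast_{-i}$ for every $i$; when $a^\ast$ is a strict equilibrium this response is unique, so $f_i(a^\ast_{-i})=a^\ast_i$ for any choice of $\prec_i$ and $a^\ast$ is a fixed point. In general one chooses the orders $\prec_i$ compatibly with the two given equilibria, which is possible whenever the two equilibria differ in the strategies of at least two players --- a condition that strict equilibria always satisfy, since two distinct strict equilibria differing in only one coordinate would force that one player to have two distinct strict best responses to the same profile of the others. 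Hence two (or more) pure Nash equilibria yield two (or more) stable states.

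Third, I would simply invoke Theorem~\ref{thm:historyless}: with every reaction function bounded-recall and self-independent and with multiple stable states present, the interaction system is not convergent. Unwinding the definition of convergence from Section~\ref{sec:model}, this produces an initial state $s^{(0)}$ and a fair schedule $\sigma$ --- i.e., an asynchronous activation order in which every player is activated infinitely often --- for which the $(s^{(0)},\sigma)$-dynamics never settles on any action profile, which is exactly an indefinite oscillation of the best-response dynamics. The same argument applies verbatim to any bounded-recall self-independent dynamics that has the pure Nash equilibria among its stable states, yielding the ``all such dynamics'' form of the statement.

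I expect the real work to be in the second step: the reduction is honest only if the game dynamics we feed to Theorem~\ref{thm:historyless} genuinely satisfies self-independence \emph{and} retains the pure Nash equilibria as stable states. The tension is that the textbook convention ``do not move if you are already playing a best response'' inspects a player's own current action and therefore violates self-independence; the consistent-tie-breaking device above removes this dependence, and the coordinate-count observation shows it does not cost us either of the two equilibria. That some care here is unavoidable is witnessed by Example~\ref{ex:iindependent}: a two-player, two-action system that formally is a game dynamics but is \emph{not} self-independent can have three stable states and yet be convergent, so the self-independence of best-response-with-consistent-tie-breaking is genuinely load-bearing in the argument.
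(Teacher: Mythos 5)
Your proposal is correct and follows essentially the same route as the paper: cast the players as computational nodes, observe that best response with a consistent tie-breaking rule is deterministic, historyless and self-independent, identify the pure Nash equilibria with stable states of the resulting interaction system, and invoke Theorem~\ref{thm:historyless}. Your second step is actually more careful than the paper's (which silently assumes the equilibria are fixed points of the tie-broken dynamics); the one small imprecision is that a single fixed order $\prec_i$ need not suffice when a player's two equilibrium strategies are tied at both equilibria, but since the paper's notion of consistent tie-breaking only requires the selected best response to be a function of $a_{-i}$, an input-dependent tie-break repairs this immediately.
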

\begin{cor}
If there are two (or more) pure Nash equilibria in a game, then best-response dynamics, and bounded-recall best-response dynamics (studied in~\cite{Z08}), with consistent tie-breaking, can fail to converge to an equilibrium in asynchronous environments.
\end{cor}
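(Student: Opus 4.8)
The plan is to obtain both the theorem and its corollary as essentially immediate consequences of Theorem~\ref{thm:historyless}, the work being a careful translation of a game and a game dynamics into the interaction-system model of Section~\ref{sec:model}. First I would identify each player $i$ of the game with a computational node whose action space $A_i$ is $i$'s strategy set, and identify a ``dynamics'' of the game with a choice of reaction functions $f_i$: when node $i$ is activated at time $t$ it updates its strategy according to $f_{(i,t)}$ applied to the recent history of play. Under this dictionary, the hypothesis of the theorem states exactly that every $f_i$ has bounded recall and is self-independent, so the only thing that needs verification before Theorem~\ref{thm:historyless} can be invoked is that the system has at least two \emph{stable states} in the sense of Section~\ref{sec:model}.

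The key step is therefore to argue that every pure Nash equilibrium of the game is a stable state of the dynamics. If $a=(a_1,\ldots,a_n)$ is a pure Nash equilibrium then by definition each $a_i$ is a best response to $a_{-i}$, so when the recent history of interaction is ``everyone has been playing $a$'', any best-responding node re-selects its component of $a$ and $a$ persists; for the broad class of bounded-recall self-independent game dynamics the relevant feature, which I would isolate at the outset, is precisely that pure Nash equilibria are left unchanged, i.e.\ are fixed under the dynamics. Hence two (or more) pure Nash equilibria give two (or more) stable states, Theorem~\ref{thm:historyless} applies verbatim, and we conclude the system is not convergent: there exist an initial state and a fair schedule under which nodes' actions never stabilize. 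Unwinding the dictionary, this says exactly that for some asynchronous activation order the dynamics oscillates indefinitely, which is the assertion of the theorem.

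For the corollary I would check that (plain) best-response dynamics with consistent tie-breaking, and the bounded-recall best-response dynamics of~\cite{Z08}, are instances of the class covered by the theorem. Best response to $a_{-i}$ does not depend on $i$'s own current or past action and does not depend on the time counter, so once a fixed tie-breaking rule is chosen the induced reaction function is deterministic, historyless (hence $1$-recall, hence bounded-recall), and self-independent; the variant of~\cite{Z08} instead best-responds to a fixed-length window of the opponents' recent play, which is $k$-recall and still self-independent. In each case the pure Nash equilibria are among the stable states, so the theorem yields the stated failure to converge to an equilibrium.

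The main obstacle I anticipate is not the valency machinery---that is wholly inherited from Theorem~\ref{thm:historyless}---but the bookkeeping of the reduction. One must pin down the class of ``game dynamics'' so that pure Nash equilibria are genuinely stable states; in particular, for best-response dynamics one must resolve ties in a way that is both \emph{consistent} (time-independent) and self-independent, and one must ensure the chosen tie-breaking rule does not destroy the fixed-point status of the given equilibria---this is automatic for strict equilibria and, in general, is arranged by choosing a tie-breaking rule compatible with the two equilibria under consideration. One must also confirm that the non-convergence witness produced by Theorem~\ref{thm:historyless} translates to a legitimate asynchronous execution, i.e.\ that the schedule exhibiting oscillation is fair.
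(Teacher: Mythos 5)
Your proposal matches the paper's own argument: cast players as computational nodes, observe that best response with consistent tie-breaking (and its bounded-recall variant from~\cite{Z08}) yields deterministic, bounded-recall, self-independent reaction functions whose stable states include the pure Nash equilibria, and invoke Theorem~\ref{thm:historyless}. You are in fact slightly more careful than the paper on the one genuine subtlety---that a non-strict equilibrium survives as a fixed point only if the tie-breaking rule is compatible with it---which the paper's appendix glosses over.
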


\vspace{0.02in}\noindent{\bf Circuits.} Work on asynchronous circuits in computer architectures research explores the implications of asynchrony for circuit design~\cite{DN97}. We observe that a logic gate can be regarded as executing an inherently historyless reaction function that is independent of the gate's past and present ``state''. Thus, we show that our result has implications for the stabilization of asynchronous circuits.

\begin{theorem}
If two (or more) stable Boolean assignments exist for an asynchronous Boolean circuit, then that asynchronous circuit is not inherently stable.
\end{theorem}

\vspace{0.02in}\noindent{\bf Social networks.} Understanding the ways in which innovations, ideas, technologies, and practices, disseminate through social networks is fundamental to the social sciences. We consider the classic economic setting~\cite{Morris00} (that has lately also been approached by computer scientists~\cite{IKMW07}) where each decision maker has two technologies $\{A,B\}$ to choose from, and each node in the social network wishes to have the same technology as the majority of his ``friends'' (neighboring nodes in the social network). We exhibit a general impossibility result for this environment.

\begin{theorem}
In every social network, the diffusion of technologies can potentially never converge to a stable global state.
\end{theorem}

\vspace{0.02in}\noindent{\bf Networking.} We consider two basic networking
environments: (1) routing with the Border Gateway Protocol (BGP), that is the ``glue'' that holds together the smaller networks
that make up the Internet; and (2) the fundamental task of congestion control in communication networks, that is achieved
through a combination of mechanisms on \emph{end-hosts} (\eg, TCP), and on \emph{switches/routers} (\eg, RED and WFQ). We exhibit non-termination results for both these environments.

We abstract a recent result in~\cite{SSZ09} and prove that this result extends to several BGP-based \emph{multipath routing} protocols that have been proposed in the past few years.

\begin{theorem}~\cite{SSZ09}
If there are multiple stable routing trees in a network, then BGP is not safe on that network.
\end{theorem}

We consider the model for analyzing dynamics of congestion presented in~\cite{GSZS09}. We present the following result.

\begin{theorem}
If there are multiple capacity-allocation equilibria in the network then dynamics of congestion can oscillate indefinitely.
\end{theorem}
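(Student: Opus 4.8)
The plan is to obtain this statement as a corollary of Theorem~\ref{thm:historyless}, by casting the congestion-control dynamics of~\cite{GSZS09} into the model of Section~\ref{sec:model}. First I would set up the correspondence: the computational nodes are the end-hosts together with the switches/routers; a node's \emph{action} is an end-host's current sending rate or a switch's current capacity-allocation among the flows traversing it; and a node's reaction function is precisely the mechanism it runs (the host's rate-adaptation rule, the switch's queueing/allocation discipline). Under this identification a \emph{capacity-allocation equilibrium} in the sense of~\cite{GSZS09} is exactly a global action profile that is left unchanged by the joint update map, i.e.\ a stable state in our sense. Hence ``multiple capacity-allocation equilibria'' becomes ``multiple stable states,'' and it remains to verify that the reaction functions meet the two hypotheses of Theorem~\ref{thm:historyless}.

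Next I would check bounded recall and self-independence. Bounded recall (indeed historylessness) is immediate, since each mechanism in the model of~\cite{GSZS09} computes its new action from the most recently observed rates and allocations of the other nodes. Self-independence is the point requiring care: for the switches it holds because a fair-queueing-type allocation is a function of the incoming demands alone, not of the switch's own previous allocation; for the end-hosts one uses the formulation in~\cite{GSZS09} in which a host's rate is the allocation it is granted by the bottleneck on its path, so it too is a function of the other nodes' state. If the host-side primitive in the model instead references the host's own previous rate, the fallback is to fold the needed window of history into the state and invoke the bounded-recall version of Theorem~\ref{thm:historyless} (which is stated for bounded recall, not only for historyless reaction functions), but the clean route is to use the self-independent formulation already present in~\cite{GSZS09}.

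Finally, with two distinct stable states in hand, Theorem~\ref{thm:historyless} supplies an initial state $s^{(0)}$ and a fair activation schedule $\sigma$ under which the $(s^{(0)},\sigma)$-dynamics never converges; by definition this means the global action profile never becomes permanently fixed, i.e.\ the congestion dynamics oscillate indefinitely, which is the claim. I expect the main obstacle to be exactly the middle step: confirming that the precise update primitives of~\cite{GSZS09} --- especially on the end-host side --- are genuinely independent of a node's own action, or else pinning down the minimal recall one must add and checking that Theorem~\ref{thm:historyless} still applies. Everything else is a direct translation of definitions.
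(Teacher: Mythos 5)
Your overall strategy---cast the congestion-control model of~\cite{GSZS09} into the framework of Section~\ref{sec:model}, identify capacity-allocation equilibria with stable states, verify bounded recall and self-independence, and invoke Theorem~\ref{thm:historyless}---is exactly the paper's strategy. Where you diverge is in the choice of computational nodes, and this difference matters for the obstacle you flag. The paper does \emph{not} treat the end-hosts as adaptive decision makers at all: in the model of~\cite{GSZS09} as the paper presents it, each source $s_i$ transmits at a \emph{constant} rate $\gamma_i$, which is folded into the graph as a virtual edge of capacity $\gamma_i$ feeding $s_i$. The computational nodes are then just the edges; an edge's action is its division of capacity among the connections traversing it, and its reaction function is the queueing policy, which maps the tuple of incoming flows (determined entirely by upstream edges' actions) to an outgoing allocation. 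Such a reaction function is deterministic, historyless, and self-independent essentially by definition, so the hypothesis-checking you correctly identify as the delicate middle step becomes trivial. Your version, which includes hosts with rate-adaptation rules as nodes, raises a genuine self-independence question on the host side and forces you into the fallback you describe; the paper's modeling choice eliminates that question rather than answering it. So your proposal is sound and would go through, but you would save yourself the ``main obstacle'' entirely by adopting the constant-rate-source formulation that~\cite{GSZS09} and the paper actually use.
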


\section{$r$-Convergence and Randomness}~\label{sec:schedules}

We now consider the implications for convergence of two natural restrictions on schedules: $r$-fairness and randomization. See Appendix~\ref{apx:schedules} for a detailed exposition
of the results in this section.

\vspace{0.02in}\noindent{\bf Snakes in boxes and $r$-convergence.} Theorem~\ref{thm:historyless} deals with convergence and not $r$-convergence, and thus does not impose restrictions on the number of consecutive time steps in which a node can be nonactive. What happens if there is an upper bound on this number, $r$? We now show that if $r<n-1$ then sometimes convergence of historyless and self-independent dynamics is achievable even in the presence of multiple stable states (and so our impossibility result does not extend to this setting).

\begin{example} {\bf(a system that is convergent for $r<n-1$ but nonconvergent for $r=n-1$)}
There are $n\geq 2$ nodes, $1,\ldots,n$, each with the action space $\{x,y\}$. Nodes' deterministic, historyless and self-independent reaction functions are as follows. $\forall i\in [n]$, $f_i(x^{n-1})=x$ and $f_i$ always outputs $y$ otherwise. Observe that there exist two stable states: $x^n$ and $y^n$. Observe that if $r=n-1$ then the following oscillation is possible. Initially, only node $1$'s action is $y$ and all other nodes' actions are $x$. Then, nodes $1$ and $2$ are activated and, consequently, node $1$'s action becomes $x$ and node $2$'s action becomes $y$. Next, nodes $2$ and $3$ are activated, and thus $2$'s action becomes $x$ and $3$'s action becomes $y$. Then $3,4$ are activated, then $4,5$, and so on (traversing all nodes over and over again in cyclic order). This goes on indefinitely, never reaching one of the two stable states. Observe that, indeed, each node is activated at least once within every sequence of $n-1$ consecutive time steps. We observe however, that if $r < n-1$ then convergence is guaranteed. To see this, observe that if at some point in time there are at least two nodes whose action is $y$, then convergence to $y^n$ is guaranteed. Clearly, if all nodes' action is $x$ then convergence to $x^n$ is guaranteed. Thus, an oscillation is possible only if, in each time step, \emph{exactly} one node's action is $y$. Observe that, given our definition of nodes' reaction functions, this can only be if the activation sequence is (essentially) as described above, \ie, exactly two nodes are activated at a time. Observe also that this kind of activation sequence is impossible for $r<n-1$.

\end{example}

What about $r>n$? We use classical results in combinatorics regarding the size of a ``\emph{snake-in-the-box}'' in a hypercube~\cite{AK91} to construct systems are $r$-convergent for exponentially-large $r$'s, but are not convergent in general.

\begin{theorem}\label{thm:exp-r}
Let $n\in\N$ be sufficiently large. There exists a system $G$ with $n$ nodes, in which each node $i$ has two possible actions and each $f_i$ is deterministic, historyless and self-independent, such that $G$ is $r$-convergent for $r\in\Omega(2^{n})$, but $G$ is not $(r+1)$-convergent.
\end{theorem}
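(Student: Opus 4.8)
The plan is to encode a long induced path (``snake'') in the Boolean hypercube $\{x,y\}^n$ as a sequence of global action profiles through which a carefully designed system can be forced to oscillate, while showing that no faster oscillation is possible. Concretely, I would let $n$ be large enough that a snake-in-the-box of length $L\in\Omega(2^n)$ exists in the $n$-cube (this is the classical bound of~\cite{AK91}), and let $v_0, v_1, \ldots, v_{L}$ with $v_0 = v_L$ (a chordless cycle, the ``coil-in-the-box'' version) be the vertices of such a snake, where consecutive vertices differ in exactly one coordinate. I would then design the deterministic, historyless, self-independent reaction functions $f_i : A_{-i} \to A_i$ so that: (i) the two ``poles'' $x^n$ and $y^n$ are fixed points (stable states), giving the required multiplicity of stable states; and (ii) at every vertex $v_j$ on the snake, exactly one coordinate $i_j$ is ``unhappy'' — namely the coordinate in which $v_j$ and $v_{j+1}$ differ — and activating that node moves the profile from $v_j$ to $v_{j+1}$. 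Away from the snake and the poles, I would route every profile toward a pole (or toward the snake) so that the only way to avoid convergence is to stay on the snake forever.

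The next step is the lower-bound direction: $G$ is not $(L)$-convergent, i.e.\ there is an $L$-fair schedule and initial state yielding a non-convergent execution. Starting at $s^{(0)}=v_0$ and repeatedly activating the unique unhappy node, the execution traverses $v_0 \to v_1 \to \cdots \to v_L = v_0$ and repeats forever, never reaching a stable state. I must check this schedule is $L$-fair, i.e.\ every node is activated within every window of $L$ consecutive steps. Since the snake has length $L$ and visits a neighborhood structure, I would want the distinct ``flip coordinates'' $i_0, \ldots, i_{L-1}$ to include every node within any length-$L$ window; because the cycle returns to $v_0$, the sequence of flip coordinates is periodic with period $L$, and each coordinate must be flipped an even, positive number of times over one period (to return to start), so every node appears, and consecutive appearances of a given node are at most $L$ apart. (If the snake construction does not immediately give gaps $\le r$, I would pad the argument by choosing $r$ to be exactly the maximum such gap, which is still $\Omega(2^n)$, and set the theorem's $r$ to that value.)

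For the upper bound — $G$ is $r$-convergent, i.e.\ every $r$-fair schedule from every initial state converges — I would argue that any non-convergent execution must, from some point on, stay within the snake vertices (since every non-snake, non-pole profile has been wired to make progress toward a pole, and the poles are absorbing). On the snake, at each profile exactly one node is unhappy; if a step activates a set $\sigma(t)$ containing the unhappy node it advances one step along the snake, and if it activates only happy nodes the profile is unchanged. To stay forever on the snake without converging, the schedule must keep advancing, hence must activate the unhappy node infinitely often, but between two advances it may not activate any node that is unhappy at an intermediate profile — and here is the crux — if two consecutive snake coordinates were ``too close'' in the activation schedule we would be forced off the snake or to a pole. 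Quantitatively, I would show that avoiding convergence requires that in the window between successive advances, certain nodes go un-activated for more than $r$ steps, contradicting $r$-fairness; this is exactly the place where the $r < L$ (or $r \le$ the snake's gap parameter) threshold enters, mirroring the small $n$-node example in the excerpt where oscillation forced ``exactly two nodes activated at a time.''

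The main obstacle I anticipate is the reaction-function design in step one: making the single-unhappy-node property hold at every snake vertex \emph{simultaneously with} self-independence and historylessness, and making all off-snake behavior genuinely convergent, requires that the snake be an \emph{induced} subgraph (so that no vertex adjacent to two snake vertices creates ambiguous ``unhappiness''), which is precisely why the snake-in-the-box (chordless path/cycle) property — rather than just any long path — is essential. Verifying that the resulting global dynamics has \emph{no} other non-convergent execution (e.g.\ no spurious oscillation among off-snake profiles introduced by the wiring) is the delicate bookkeeping I would need to carry out carefully.
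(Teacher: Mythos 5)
There is a genuine gap, and it is in the direction you flag as needing ``delicate bookkeeping'': the claim that $G$ \emph{is} $r$-convergent. In your design, at each snake vertex exactly one node is unhappy and every other node is happy; activating a happy node is a no-op and activating the unhappy node advances one step along the snake. Consequently the fully synchronous schedule $\sigma(t)=[n]$ for all $t$ --- which is $1$-fair --- drives the profile around the snake forever: at each step the unique unhappy node flips and everyone else re-selects their current action. So your system is not $r$-convergent for \emph{any} $r$, and the threshold phenomenon the theorem asserts never appears. The root cause is that your construction has no mechanism by which activating nodes \emph{more often} can destroy the oscillation; your proposed ``crux'' (that between two advances the schedule ``may not activate any node that is unhappy at an intermediate profile'') is vacuous because there are no intermediate profiles and no node other than the flip coordinate is ever unhappy on the snake. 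The paper's construction supplies exactly the missing mechanism: it places the snake in the hypercube of nodes $3,\ldots,n$ only, and adds two ``watchdog'' nodes $1,2$ with $f_i(x^{n-1})=x$ and $f_i=y$ otherwise, while nodes $3,\ldots,n$ jump to $y$ as soon as both watchdogs play $y$. The watchdogs become unhappy the instant the traversal leaves the all-$x$ profile, and if both are ever activated during the $|S|-1$ steps the traversal spends away from it, the system collapses to the unique stable state $y^n$. Hence the oscillation survives only the schedule that activates $1$ and $2$ exactly once per period $|S|\in\Omega(2^n)$, giving $r$-convergence for $r<|S|$ and a non-convergent $|S|$-fair run.

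Two smaller points. First, you insist on making both poles $x^n$ and $y^n$ stable ``giving the required multiplicity of stable states,'' but the theorem does not require multiple stable states; the paper's construction has a \emph{unique} stable state, and the authors explicitly leave the multiple-stable-state variant as an open problem, so you are loading an extra (apparently hard) requirement onto the construction. Second, your fairness accounting for the oscillating schedule relies on every coordinate being flipped a \emph{positive} even number of times per period of the snake cycle; a long snake need not use every coordinate, and even when it does, bounding the maximum gap between consecutive flips of a fixed coordinate by something that is still $\Omega(2^n)$ from below \emph{and} matches the convergence threshold from above is not automatic. The paper sidesteps this entirely because its $r$ is pinned by how often the two watchdogs must be activated, not by the flip structure of the snake.
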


We note that the construction in the proof of Theorem~\ref{thm:exp-r} is such that there is a unique stable state. We believe that the same
ideas can be used to prove the same result for systems with multiple stable states but the exact way of doing this eludes us at the moment, and is left as an open question.

\begin{problem}
Prove that for every sufficiently large $n\in\N$, there exists a system $G$ with $n$ nodes, in which each node $i$ has two possible actions, each $f_i$ is deterministic, historyless and self-independent, and $G$ has multiple stable states, such that $G$ is $r$-convergent for $r\in\Omega(2^{n})$ but $G$ is not $(r+1)$-convergent.
\end{problem}

\noindent{\bf Does random choice (of initial state and schedule) help?} Theorem~\ref{thm:historyless} tells us that, for a broad class of dynamics, a system with multiple stable states is nonconvergent if the initial state and the node-activation schedule are chosen adversarially. Can we guarantee convergence if the initial state and schedule are chosen \emph{at random}?

\begin{example}\label{ex:random-schedule} {\bf(random choice of initial state and schedule might not help)}
There are $n$ nodes, $1,\ldots,n$, and each node has action space $\{x,y,z\}$. The (deterministic, historyless and self-independent) reaction function of each node $i\in\{3,\ldots,n\}$ is such that $f_i(x^{n-1})=x$; $f_i(z^{n-1})=z$; and $f_i=y$ for all other inputs. The (deterministic, historyless and self-independent) reaction function of each node $i\in\{1,2\}$ is such that $f_i(x^{n-1})=x$; $f_i(z^{n-1})=z$; $f_i(xy^{n-2})=y$; $f_i(y^{n-1})=x$; and $f_i=y$ for all other inputs. Observe that there are exactly two stable states: $x^n$ and $z^n$. Observe also that if nodes' actions in the initial state do not contain at least $n-1$ $x$'s, or at least $n-1$ $z$'s, then, from that moment forth, each activated node in the set $\{3,\ldots,n\}$ will choose the action $y$. Thus, eventually the actions of all nodes in $\{3,\ldots,n\}$ shall be $y$, and so none of the two stable states will be reached. Hence, there are $3^n$ possible initial states, such that only from $4n+2$ can a stable state be reached.  When choosing the initial state uniformly at random the probability of landing on a ``good'' initial state (in terms of convergence) is thus exponentially small.
\end{example}

\section{Complexity of Asynchronous Dynamics}\label{sec:complexity}

We now explore the communication complexity and computational complexity of determining
whether a system is convergent. We present hardness results in both models of computation even
for the case of deterministic and historyless adaptive heuristics. See Appendix~\ref{apx:complexity} for a detailed exposition
of the results in this section.

We first present the following communication complexity result whose proof
relies on combinatorial ``snake-in-the-box'' constructions~\cite{AK91}.

\begin{theorem}
Determining if a system with $n$ nodes, each with $2$ actions, is convergent requires
$\Omega(2^{n})$ bits. This holds even if all nodes have deterministic, historyless and self-independent
reaction functions.
\end{theorem}

The above communication complexity hardness result required the representation of the reaction functions to (potentially) be exponentially long.
What if the reaction functions can be succinctly described? We now present a strong computational complexity hardness result for the case that each reaction function $f_i$ is deterministic and historyless, and is given explicitly in the form of a boolean circuit (for each $a\in A$ the circuit outputs $f_i(a)$). We prove the following result.

\begin{theorem}\label{thm:pspace}
Determining if a system with $n$ nodes, each with a deterministic and historyless reaction function, is convergent is PSPACE-complete.
\end{theorem}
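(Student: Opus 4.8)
The plan is to prove PSPACE-completeness in two directions. For membership in PSPACE, observe that a deterministic historyless system is described by a single function $g:A\to A$ (the product of the $g_i$); a schedule non-convergence witness is an infinite activation sequence, but since the state space $A$ is finite (though exponential in $n$), non-convergence is equivalent to the existence of a state $a\in A$ and a schedule that, starting from $a$, revisits a non-fixed-point state infinitely often — equivalently, a cycle in the ``reachability-under-some-fair-schedule'' relation that contains a non-fixed point, reachable from some start state. One can phrase this as: the system is \emph{not} convergent iff there is a state $a$, reachable from some initial state, from which one can return to $a$ via a finite sequence of single-node (or arbitrary-subset) activations during which every node is activated at least once and $a$ is not a fixed point. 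Checking for such a configuration is a reachability question in a graph of exponential size, which is decidable in $\mathrm{PSPACE}$ (indeed in $\mathrm{NPSPACE}=\mathrm{PSPACE}$ by Savitch): nondeterministically guess $a$, guess the return walk step by step storing only the current vertex, a counter, and a bit-vector recording which nodes have fired. Each such object uses polynomial space since a state is $n$ symbols and the circuits for the $g_i$ evaluate in polynomial space.

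For PSPACE-hardness, the natural approach is to reduce from a canonical PSPACE-complete problem about succinctly-described dynamical systems — for instance the halting/looping problem for a polynomial-space Turing machine, or equivalently reachability/cycle-detection in a succinctly (circuit-)represented exponential-size graph. Given a polynomial-space machine $M$ on input $w$, encode a configuration of $M$ across the action profiles of the $n$ nodes (each node holds one tape cell / head-position bit), and design each $f_i$, given explicitly as a boolean circuit over $A$, to compute the next-configuration function of $M$ locally: node $i$'s new symbol depends only on the old symbols in its ``neighborhood'' (its cell and the adjacent cells plus head information), exactly the kind of local update a historyless reaction function performs. The subtlety is that our model allows an adversarial schedule activating arbitrary subsets, whereas a Turing machine step updates all cells ``in lockstep''; the standard fix is to make the update function robust to partial/asynchronous application — e.g.\ have each cell also carry a round-counter or phase tag so that a cell only ``commits'' an update when its neighbors are in a consistent phase, forcing any fair schedule to simulate synchronous rounds in the limit (this is a well-known gadget in asynchronous-simulation arguments). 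Arrange matters so that the system has a stable state iff $M$ halts (the halting configuration is a fixed point), and conversely so that if $M$ runs forever then there is a fair schedule producing a non-converging execution; care must be taken that \emph{no} spurious fixed points or stabilizing executions exist for other initial states and schedules, which may require adding a ``reset'' behavior that drives any inconsistent global state back toward the designated start configuration.

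The main obstacle, and the step I would spend the most care on, is exactly this simulation of synchronous machine computation by a scheduler-adversarial asynchronous historyless system while keeping the \emph{convergence} semantics faithful in \emph{both} directions: one must ensure (a) every fair schedule from the start state either faithfully tracks $M$'s computation or gets harmlessly reset, so that convergence of the \emph{system} (quantified over all initial states and all fair schedules, per the definition in Section~\ref{sec:model}) is equivalent to $M$ halting, and (b) the locality/historylessness constraint is genuinely respected by the circuits for $f_i$. The phase-tagging and reset gadgets are the technical heart; everything else (the $\mathrm{PSPACE}$ upper bound, and wrapping the reduction around a standard complete problem) is routine.
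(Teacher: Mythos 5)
Your PSPACE upper bound is fine (and is the part the paper leaves implicit): since any state may serve as the initial state, the system is non-convergent iff some non-fixed-point state $a$ admits a finite activation sequence returning to $a$ in which every node fires, and guessing such a cycle while storing only the current profile and a bit-vector of fired nodes puts the complement in NPSPACE $=$ PSPACE. The problem is the hardness direction, where you have correctly identified the technical heart --- forcing an adversarially scheduled, historyless system to simulate synchronous computation steps --- but have not actually built it; ``phase tags'' and ``a well-known gadget in asynchronous-simulation arguments'' is a placeholder for precisely the construction that constitutes the proof. The paper (following Fabrikant--Papadimitriou's PSPACE-completeness proof for BGP safety) supplies this gadget concretely: it reduces from STRING NONTERMINATION and builds $t$ ``index nodes'' plus one ``counter node'' whose action is a pair (current index, next symbol); an index node changes its action only when the counter points at it and disagrees with it, and the counter advances only after the indexed node has caught up. This handshake makes the trajectory schedule-oblivious. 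Note that it relies essentially on the reaction functions \emph{not} being self-independent (each node re-outputs its own current action by default); the paper explicitly leaves the self-independent case open, and your sketch never confronts the fact that your phase/commit mechanism needs the same self-dependence.

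A second, more substantive worry is your plan to reduce from halting of a single machine $M$ on a single input $w$, with a ``reset'' gadget driving all other initial states back to a designated start configuration. Convergence is quantified over \emph{all} initial states and fair schedules, so you must kill every spurious non-converging execution from every one of the exponentially many initial states; designing a historyless reset that does this under adversarial scheduling, without itself creating oscillations, is not routine and is where your construction is most likely to fail. The paper avoids the issue entirely by choosing a complete problem whose quantifier structure already matches the convergence semantics: STRING NONTERMINATION asks whether there \emph{exists} an initial string from which the rewriting procedure fails to halt, which lines up exactly with ``there exists an initial state and fair schedule that does not converge,'' so arbitrary initial states are legitimate inputs to the simulated procedure and no reset is needed. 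I would redo the hardness direction along those lines rather than trying to patch the reset gadget.
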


Our computational complexity result shows that even if nodes' reaction functions can be succinctly
represented, determining whether the system is convergent is PSPACE-complete. This
result, alongside its computational implications, implies that we cannot hope to have short
``witnesses'' of guaranteed asynchronous convergence (unless PSPACE $\subseteq$ NP). Proving the above PSPACE-completeness result for the case self-independent reaction functions seems challenging.

\begin{problem}
Prove that determining if a system with $n$ nodes, each with a deterministic self-independent and historyless reaction function, is convergent is PSPACE-complete.
\end{problem}

\section{Some Basic Observations Regarding No-Regret Dynamics}\label{sec:regret}

Regret minimization is fundamental to learning theory, and has strong connections to game-theoretic solution concepts; if each player in a repeated game executes a no-regret algorithm when selecting strategies, then convergence to an equilibrium is guaranteed in a variety of interesting contexts. The meaning of convergence, and the type of equilibrium reached, vary, and are dependent on the restrictions imposed on the game and on the notion of regret. Work on no-regret dynamics traditionally considers environments where all nodes are ``activated'' at each time step. We make the simple observation that, switching our attention to $r$-fair schedules (for every $r\in N_{+}$), if an algorithm has no regret in the classic setting, then it has no regret in this new setting as well (for all notions of regret). Hence, positive results from the regret-minimization literature extend to this asynchronous environment. See~\cite{BM07} for a thorough explanation about no-regret dynamics and see Appendix~\ref{apx:regret} for a detailed explanation about our observations. We now mention two implications of our observation and highlight two open problems regarding regret minimization.

\begin{obs}
When all players in a zero-sum game use no-external-regret algorithms then approaching or exceeding the minimax value of the game is guaranteed.
\end{obs}

\begin{obs}
When all players in a (general) game use no-swap-regret algorithms the empirical distribution of joint players' actions converges to a correlated equilibrium of the game.
\end{obs}

\begin{problem}
Give examples of repeated games for which there exists a schedule of player activations that is not $r$-fair for any $r\in N_{+}$ for which regret-minimizing dynamics do not converge to an equilibrium (for different notions of regret/convergence/equilibria).
\end{problem}

\begin{problem}
When is convergence of no-regret dynamics to an equilibrium guaranteed (for different notions of regret/convergence/equilibria) for all $r$-fair schedules for non-fixed $r$'s, that is, if when $r$ is a function of $t$?
\end{problem}

\section{Future Research}\label{sec:future}

In this paper, we have taken the first steps towards a complete
understanding of distributed computing with adaptive heuristics. We
proved a general non-convergence result and several hardness results
within this model, and also discussed some important aspects such as
the implications of fairness and randomness, as well as applications
to a variety of settings. We believe that we have but scratched the
surface in the exploration of the convergence properties of simple
dynamics in asynchronous computational environments, and many
important questions remain wide open. We now outline
several interesting directions for future research.

\vspace{0.02in}\noindent{\bf Other heuristics, convergence notions, equilibria.} We have considered specific adaptive heuristics, notions of convergence, and kinds of equilibria. Understanding the effects of asynchrony on other adaptive heuristics (\emph{e.g.},
better-response dynamics, fictitious play), for other notions of convergence (\emph{e.g.}, of the empirical distributions of play),
and for other kinds of equilibria (\emph{e.g.}, mixed Nash equilibria, correlated equilibria) is a broad and challenging
direction for future research.

\vspace{0.02in}\noindent{\bf Outdated and private information.}  We have not explicitly considered the effects of making decisions based on outdated information. We have also not dealt with the case that nodes' behaviors are dependent on private information, that is, the case that the dynamics are ``uncoupled''~\cite{HM03,HM06}.

\vspace{0.02in}\noindent{\bf Other notions of asynchrony.} We believe that better understanding the role
of degrees of fairness, randomness, and other restrictions on schedules from distributed computing literature, in achieving convergence to equilibrium points
is an interesting and important research direction.

\vspace{0.02in}\noindent{\bf Characterizing asynchronous convergence.} We still lack characterizations of asynchronous
convergence even for simple dynamics (\emph{e.g.}, deterministic and historyless).\footnote{Our
PSPACE-completeness result in Section~\ref{sec:complexity} eliminates
the possibility of short witnesses of guaranteed asynchronous
convergence unless PSPACE $\subseteq$ NP, but elegant
characterizations are still possible.}

\vspace{0.02in}\noindent{\bf Topological and knowledge-based approaches.} Topological~\cite{BG93,HS99,SZ00} and knowledge-based~\cite{HM90}
approaches have been very successful in addressing fundamental questions in distributed computing. Can these approaches shed new light on the implications of asynchrony for adaptive heuristics?

\vspace{0.02in}\noindent{\bf Further exploring the environments in Section~\ref{sec:examples}.} We have applied our non-convergence result to the environments described in Section~\ref{sec:examples}. These environments are of independent interest and are indeed the subject of extensive research. Hence, the further exploration of dynamics in these settings is important.

\section*{Acknowledgements}
We thank Danny Dolev, Alex Fabrikant, Idit Keidar, Jonathan Laserson, Nati Linial, Yishay
Mansour and Yoram Moses for helpful discussions. This work was initiated partly as a
result of the DIMACS Special Focus on Communication Security and
Information Privacy.

\appendix

\section{Connections to Consensus Protocols}\label{apx:connections}

There are interesting connections between our result and that of Fischer \emph{et al.}~\cite{FLP85} for \emph{fault-resilient consensus protocols}.~\cite{FLP85} studies the following environment: There is a group of \emph{processes}, each with an initial value in $\{0,1\}$, that communicate with each other via \emph{messages}. The objective is for all \emph{non-faulty} processes to eventually agree on some value $x\in\{0,1\}$, where the ``consensus'' $x$ must match the initial value of some process.~\cite{FLP85} establishes that no consensus protocol is resilient to even a single failure. One crucial ingredient for the proof of the result in~\cite{FLP85} is showing that there exists some initial configuration of  processes' initial values such that, from that configuration, the resulting consensus can be both $0$ and $1$ (the outcome depends on the specific ``schedule'' realized). Our proof of Theorem~\ref{thm:historyless} uses a valency argument---an idea introduced in the proof of the breakthrough non-termination result in~\cite{FLP85} for consensus protocols.

Intuitively, the risk of protocol nontermination in~\cite{FLP85} stems from the possibility of failures; a computational node cannot tell whether another node is silent due to a failure or is simply taking a long time to react. We consider environments in which nodes/communication channels do not fail. Thus, each node is guaranteed that after ``sufficiently many'' time steps all other nodes will react. Observe that in such an environment reaching a consensus is easy; one pre-specified node $i$ (the ``dictator'') waits until it learns all other nodes' inputs (this is guaranteed to happen as failures are impossible) and then selects a value $v_i$ and informs all other nodes; then, all other nodes select $v_i$. Unlike the results in~\cite{FLP85}, the possibility of nonconvergence in our framework stems from limitations on nodes' behaviors. We investigate the link between consensus protocols and our framework further in Appendix.~\ref{ap:axiomatic}, where we take an axiomatic approach. We introduce a condition---``\emph{Independence of Decisions}'' (IoD)---that holds for both fault-resilient consensus protocols and for bounded-recall self-independent dynamics. We then factor the arguments in~\cite{FLP85} through IoD to establish a non-termination result that holds for both contexts, thus unifying the treatment of these dynamic computational environments.

Hence, there is \emph{no} immediate translation from the result in~\cite{FLP85} to ours (and vice versa). To illustrate this point, let us revisit Example~\ref{ex:not-consensus}, in which the system is convergent, yet two stable states exist. We observe that in the example there is indeed an initial state from which both stable states are reachable (a ``bivalent state''~\cite{FLP85}). Consider the initial state $(y,x)$. Observe that if node $1$ is activated first (and alone), then it shall choose action $x$. Once node $2$ is activated it shall then also choose $x$, and the resulting stable state shall be $(x,x)$. However, if node $2$ is activated first (alone), then it shall choose action $y$. Once $1$ is activated it shall also choose action $y$, and the resulting stable state shall be $(y,y)$. Observe that in Example~\ref{ex:iindependent} too there exists an action profile $(x,x)$ from which multiple stable states are reachable yet the system is convergent.

\section{Games, Circuits, Networks, and Beyond}\label{apx:examples}

We present implications of our impossibility result in Section~\ref{sec:bounded} for several well-studied environments: game theory, circuit design, social networks and Internet protocols.

\subsection{Game Dynamics}\label{sec:examples-games}

\noindent{\bf The setting.} There are $n$ \emph{players}, $1,\ldots,n$. Each player $i$ has a \emph{strategy set} $S_i$. Let $S=\times_{j\in N}S_j$, and let $S_{-i}=\times_{j\in [n]\setminus\{i\}}S_j$. Each player $i$ has a \emph{utility function} $u_i:S\rightarrow S_i$. For each $s_i\in S_i$ and $s_{-i}\in S_{-i}$ let $(s_i,s_{-i})$ denote the strategy profile in which player $i$'s strategy is $s_i$ and all other players' strategies are as in $s_{-i}$. Informally, a \emph{pure Nash equilibrium} is a strategy profile from which no player wishes to unilaterally deviate.

\begin{definition} {\bf(pure Nash equilibria)}
We say that a strategy profile $\overline{s}=(\overline{s_1},\ldots,\overline{s_n})\in S$ is a \emph{pure Nash equilibrium} if, for each player $i$, $\overline{s_i}\in argmax_{s_i\in S_i}u_i(s_i,\overline{s_{-i}})$.
\end{definition}

One natural procedure for reaching a pure Nash equilibrium of a game is \emph{best-response dynamics}: the process starts at some arbitrary strategy profile, and players take turns ``best replying'' to other players' strategies until no player wishes to change his strategy. Convergence of best-response dynamics to pure Nash equilibria is the subject of extensive research in game theory and economics, and both positive~\cite{R73,MS96} and negative~\cite{HM03,HM06} results are known.

Traditionally, work in game theory on game dynamics (\emph{e.g.}, best-response dynamics) relies on the explicit or implicit premise that players' actions are somehow synchronized (in some contexts play is sequential, while in others it is simultaneous). We consider the realistic scenario that there is no computational center than can synchronize players' selection of strategies. We cast the above setting into the terminology of Section~\ref{sec:model} and exhibit an impossibility result for best-response, and more general, dynamics.

\vspace{0.05in}\noindent{\bf Computational nodes, action spaces.} The \emph{computational nodes} are the $n$ players. The \emph{action space} of each player $i$ is his strategy set $S_i$.

\vspace{0.05in}\noindent{\bf Reaction functions, dynamics.} Under best-response dynamics, each player constantly chooses a ``best response''
to the other players' most recent actions. Consider the case that players have consistent tie-breaking rules, \ie, the best response is always unique, and depends only on the others' strategies. Observe that, in this case, players' behaviors can be formulated as deterministic, historyless, and self-independent reaction functions. The dynamic interaction between players is as in Section~\ref{sec:model}.

\vspace{0.05in}\noindent{\bf Existence of multiple pure Nash equilibria implies non-convergence of best-response dynamics in asynchronous environments.} Theorem~\ref{thm:historyless} implies the following result:

\begin{theorem}
If there are two (or more) pure Nash equilibria in a game, then asynchronous best-response dynamics can potentially oscillate indefinitely.
\end{theorem}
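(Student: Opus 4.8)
The plan is to exhibit asynchronous best-response dynamics as an instance of the bounded-recall, self-independent dynamics of Section~\ref{sec:model} and then appeal directly to Theorem~\ref{thm:historyless}. I would take the computational nodes to be the $n$ players and the action space $A_i$ of node $i$ to be its strategy set $S_i$, so that an action profile $a\in A$ is just a strategy profile $s\in S$. The first step is to pin down the reaction functions. Fixing, for each player $i$, a consistent tie-breaking rule resolves $argmax_{s_i\in S_i}u_i(s_i,s_{-i})$ to a single strategy for every $s_{-i}\in S_{-i}$, and hence defines a function $g_i\colon S_{-i}\to S_i$; an activated player under best-response dynamics simply plays $g_i$ evaluated at the others' most recent actions. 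Thus player $i$'s behaviour is captured by the reaction function $f_i$ that, in the notation of Section~\ref{sec:model}, equals $g_i$: it is deterministic, it is $1$-recall and stationary (hence historyless), and it is self-independent since $g_i$ ignores player $i$'s own action. In particular each $f_i$ has bounded recall and is self-independent.

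The second step is to identify the stable states with the pure Nash equilibria. By the characterization of stable states for deterministic historyless dynamics given in Section~\ref{sec:model}, the stable states of this system are exactly the fixed points of $g=(g_1,\ldots,g_n)$, i.e. the profiles $\overline{s}$ with $\overline{s_i}=g_i(\overline{s_{-i}})$ for all $i$; by the definition of $g_i$ and of a pure Nash equilibrium, these are precisely the pure Nash equilibria of the game. So ``two or more pure Nash equilibria'' translates verbatim into ``multiple stable states,'' and Theorem~\ref{thm:historyless} supplies an initial state $s^{(0)}$ and a fair schedule $\sigma$ under which the $(s^{(0)},\sigma)$-dynamics fail to converge --- that is, asynchronous best-response play never settles on any strategy profile, which is the desired indefinite oscillation.

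I do not expect a serious obstacle here: the mathematical work is entirely inherited from Theorem~\ref{thm:historyless}, and what remains is a faithful translation. The one point requiring care is the tie-breaking assumption --- without a fixed rule that depends only on the opponents' strategies, a player's ``reaction function'' may fail to be single-valued, may fail to be deterministic, or may depend on the player's own current strategy, and then the hypotheses of Theorem~\ref{thm:historyless} are not met, which is exactly why the statement (and its corollary) is phrased for best-response with consistent tie-breaking. It is also worth noting, since it costs nothing, that Theorem~\ref{thm:historyless} already allows non-stationary and randomized reaction functions, so the same reduction simultaneously handles bounded-recall best-response variants such as those of~\cite{Z08}, and that the schedule produced is merely fair, so only asynchrony --- not failures --- drives the non-convergence.
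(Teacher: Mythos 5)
Your proposal is correct and follows essentially the same route as the paper: cast the players as computational nodes with strategy sets as action spaces, observe that best-response with consistent tie-breaking yields deterministic, historyless, self-independent reaction functions whose fixed points (stable states) are exactly the pure Nash equilibria, and invoke Theorem~\ref{thm:historyless}. Your added remarks on the necessity of the tie-breaking assumption and the extension to bounded-recall variants match the paper's own discussion.
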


In fact, Theorem~\ref{thm:historyless} implies that the above non-convergence result holds even for the broader class of randomized, bounded-recall and self-independent game dynamics, and thus also to game dynamics such as best-response with bounded recall and consistent tie-breaking rules (studied in~\cite{Z08}).

\subsection{Asynchronous Circuits}

\noindent{\bf The setting.} There is a Boolean circuit, represented as a directed graph $G$, in which vertices represent the circuit's inputs and the logic gates, and edges represent connections between the circuit's inputs and the logic gates and between logic gates. The activation of the logic gates is asynchronous. That is,
the gates' outputs are initialized in some arbitrary way, and then the update
of each gate's output, given its inputs, is uncoordinated and unsynchronized. We prove an impossibility
result for this setting, which has been extensively studied (see~\cite{DN97}).

\vspace{0.05in}\noindent{\bf Computational nodes, action spaces.} The computational nodes are
the inputs and the logic gates.  The \emph{action space} of
each node is $\{0,1\}$.

\vspace{0.05in}\noindent{\bf Reaction functions, dynamics.} Observe that each logic gate can be regarded as a function that only depends on its inputs' values. Hence, each logic gate can be modeled via a \emph{reaction function}. Interaction between the different circuit components
is as in Section~\ref{sec:model}.

\vspace{0.05in}\noindent{\bf Too much stability in circuits can lead to instability.}
\emph{Stable states} in this framework are assignments of Boolean values to the circuit inputs and the logic gates that are consistent with each gate's truth table (reaction function). We say that a Boolean circuit is \emph{inherently stable} if it is guaranteed to converge to a stable state regardless of the initial boolean assignment. The following theorem is derived from Theorem~\ref{thm:historyless}:

\begin{theorem}
If two (or more) stable Boolean assignments exist for an asynchronous Boolean circuit, then that asynchronous circuit is not inherently stable.
\end{theorem}

\subsection{Diffusion of Technologies in Social Networks}

\noindent{\bf The setting.} There is a social
network of users, represented by a directed graph in which users are the vertices and edges correspond to
friendship relationships. There are two competing technologies, $X$ and $Y$.
A user's utility from each technology depends on the number of that user's friends that
use that technology; the more friends use that technology the more desirable that technology is
to the user. That is, a user would always select the technology used by the majority of his
friends. We are interested in the dynamics of the diffusion of technologies. Observe that if,
initially, all users are using $X$, or all users are using $Y$, no user has an incentive to switch to a
different technology. Hence, there are always (at least) two distinct ``stable states'' (regardless
of the topology of the social network). Therefore, the terminology
of Section~\ref{sec:model} can be applied to this setting.

\vspace{0.05in}\noindent{\bf Computational nodes, actions spaces.} The users are the \emph{computational
nodes}. Each user $i$'s \emph{action space} consists of the two technologies $\{X,Y\}$.

\vspace{0.05in}\noindent{\bf Reaction functions, dynamics.} The reaction function of each user $i$ is defined as
follows: If at least half of $i$'s friends use technology $X$, $i$ selects technology $X$; otherwise, $i$ selects technology $Y$.
In our model of diffusion of technologies, users' choices of technology can be made simultaneously, as described in Section~\ref{sec:model}.

\vspace{0.05in}\noindent{\bf Instability of social networks.} Theorem~\ref{thm:historyless} implies the following:

\begin{theorem}
In every social network, the diffusion of technologies can potentially never converge to a stable global state.
\end{theorem}

\subsection{Interdomain Routing}

\noindent{\bf The setting.} The Internet is made up of smaller networks called \emph{Autonomous Systems} (ASes).
\emph{Interdomain routing} is the task of establishing routes between ASes, and is
handled by the \emph{Border Gateway Protocol} (BGP). In the standard model for analyzing BGP dynamics~\cite{GSW02}, there is a network
of \emph{source} ASes that wish to send traffic to a unique \emph{destination} AS $d$. Each AS $i$ has a \emph{ranking function} $<_i$ that specifies $i$'s strict preferences over all simple (loop-free) routes leading from $i$ to $d$.\footnote{ASes rankings of routes also reflect each AS's \emph{export policy} that specifies which routes that AS is willing to make available to each neighboring AS.} Under BGP, each AS constantly selects the ``best'' route that is available to it. See~\cite{GSW02} for more details. Guaranteeing \emph{BGP safety}, \ie, BGP convergence to a ``stable'' routing outcome is a fundamental desideratum that has been the subject of extensive work in both the networking and the standards communities. We now cast interdomain routing into the terminology of Section~\ref{sec:model}. We then obtain non-termination results for BGP and for proposals for new interdomain routing protocols (as corollaries of Theorem~\ref{thm:historyless}).

\vspace{0.05in}\noindent{\bf Computational nodes, action spaces.} The ASes are the \emph{computational nodes}. The \emph{action space} of each node $i$, $A_i$, is the set of all simple (loop-free) routes between $i$ and the destination $d$ that are exportable to $i$, and the empty route $\emptyset$.

\vspace{0.05in}\noindent{\bf Reaction functions, dynamics.} The \emph{reaction function} $f_i$ of node $i$ outputs, for every vector $\alpha$ containing routes to $d$ of all of $i$'s neighbors, a route $(i,j)R_j$ such that (1) $j$ is $i$'s neighbor; (2) $R_j$ is $j$'s route in $\alpha$; and (3) $R_j>_i R$ for all other routes $R$ in $\alpha$. If there is no such route $R_j$ in $\alpha$ then $f_i$ outputs $\emptyset$. Observe that the reaction function $f_i$ is deterministic, self-independent and historyless. The interaction between
nodes is as described in Section~\ref{sec:model}.

\vspace{0.05in}\noindent{\bf The multitude of stable routing trees implies global network instability.} Theorem~\ref{thm:historyless}
implies a recent result of Sami \etal.~\cite{SSZ09}, that shows that the existence of two (or more) stable routing trees to which BGP
can (potentially) converge implies that BGP is not safe. Importantly, the asynchronous model of Section~\ref{sec:model} is significantly \emph{more restrictive} than that of~\cite{SSZ09}. Hence, Theorem~\ref{thm:historyless} implies the non-termination result of Sami \etal.

\begin{theorem}~\cite{SSZ09}
If there are multiple stable routing trees in a network, then BGP is not safe on that network.
\end{theorem}

Over the past few years, there have been several proposals for BGP-based \emph{multipath routing} protocols, \ie, protocols that enable each node (AS) to send traffic along multiple routes, \emph{e.g.}, R-BGP~\cite{KKKM07} and Neighbor-Specific BGP~\cite{YSR09} (NS-BGP). Under both R-BGP and NS-BGP each computational node's actions are independent of its own past actions and are based on bounded recall of past interaction. Thus, Theorem~\ref{thm:historyless} implies the following:

\begin{theorem}
If there are multiple stable routing configurations in a network, then R-BGP is not safe on that network.
\end{theorem}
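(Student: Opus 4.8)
The plan is to derive this theorem as a corollary of Theorem~\ref{thm:historyless}, exactly as was done for plain BGP and for NS-BGP. First I would recall the R-BGP protocol of~\cite{KKKM07}: in addition to its most-preferred available route, each AS also advertises a carefully chosen ``failover'' route to each neighbor, so that connectivity is preserved under a single link or node failure. The first step is to cast R-BGP into the model of Section~\ref{sec:model}: the computational nodes are the ASes, and the action of node $i$ is the tuple consisting of the route $i$ currently uses together with the auxiliary routes that R-BGP has $i$ advertise to its neighbors. A \emph{stable routing configuration} is then precisely an action profile that is a fixed point of all nodes' reaction functions, \ie a stable state in the sense of Section~\ref{sec:model}.

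The second step is to verify that each node's R-BGP reaction function is bounded-recall and self-independent. Self-independence holds because, just as in ordinary BGP, what $i$ does---which primary route it picks and which failover route it computes and advertises---is determined entirely by the route announcements $i$ has recently received from its neighbors, and not by $i$'s own earlier choices. Bounded recall holds because R-BGP's failover computation looks back only a constant number of update rounds of neighbor announcements; the protocol tracks the primary and failover routes currently advertised by neighbors rather than their entire announcement history. One has to be slightly careful here: R-BGP does maintain extra per-neighbor state, but this state is itself a function of the last $O(1)$ rounds of messages from neighbors, so after unfolding it the reaction function depends only on a bounded window of the interaction history and not at all on $i$'s own actions---\ie it is bounded-recall and self-independent in the formal sense of Section~\ref{sec:model}.

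Once these two structural facts are in place, the theorem is immediate: if the network admits two (or more) stable routing configurations, then the induced interaction system has bounded-recall self-independent reaction functions and multiple stable states, so by Theorem~\ref{thm:historyless} it is not convergent. An execution witnessing non-convergence is exactly a fair activation schedule under which R-BGP's route/advertisement state oscillates forever, which is the definition of R-BGP being unsafe on that network. The same translation applies verbatim to NS-BGP~\cite{YSR09}, giving the companion statement mentioned earlier in the excerpt.

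The main obstacle I expect is the second step---pinning down R-BGP precisely enough to be confident that its reaction functions really are self-independent and of bounded recall, since R-BGP is considerably more elaborate than vanilla BGP and carries extra state. The crux is the protocol-level fact that this extra state is a bounded-memory function of neighbors' recent announcements, with no genuine dependence on a node's own history; granting that, everything else is a routine translation into the framework of Section~\ref{sec:model} followed by a direct appeal to Theorem~\ref{thm:historyless}.
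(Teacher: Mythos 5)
Your proposal matches the paper's argument exactly: the paper likewise casts R-BGP into the model of Section~\ref{sec:model}, asserts (in a single sentence) that under R-BGP each node's actions are independent of its own past actions and depend only on bounded recall of the interaction, and then invokes Theorem~\ref{thm:historyless} with stable routing configurations playing the role of multiple stable states. Your additional care in justifying why R-BGP's failover state is still a bounded-recall, self-independent function of neighbors' announcements is a welcome elaboration of a step the paper leaves implicit, but it is not a different route.
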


\begin{theorem}
If there are multiple stable routing configurations in a network, then NS-BGP is not safe on that network.
\end{theorem}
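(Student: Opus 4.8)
The plan is to cast NS-BGP into the model of Section~\ref{sec:model} --- exactly as standard BGP was cast in the interdomain-routing subsection above --- and then invoke Theorem~\ref{thm:historyless}. The computational nodes are the ASes. The one modeling decision that needs care is the action space: since NS-BGP lets an AS use (and export) a \emph{different} route for each of its neighbors, I would take the action of node $i$ to be the tuple $(R_{i,j})_j$ indexed by $i$'s neighbors $j$, where $R_{i,j}$ is the route $i$ currently uses for, and exports to, neighbor $j$, with the convention $R_{i,j}=\emptyset$ when no suitable route is available. Each $R_{i,j}$ ranges over the (finite) set of simple, loop-free, exportable-to-$j$ routes from $i$ to $d$ together with $\emptyset$, so $A_i$ is finite. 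A neighbor $k$ reading $i$'s action simply extracts the component $R_{i,k}$, which is the route $i$ advertises to $k$.

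Next I would verify the hypotheses of Theorem~\ref{thm:historyless}. Under NS-BGP, given the current actions of $i$'s neighbors --- which, under the above encoding, tell $i$ precisely which route each neighbor $k$ is currently advertising to $i$ --- node $i$ recomputes, for each neighbor $j$, the route $(i,k)R$ maximizing $i$'s neighbor-specific ranking $<_{i,j}$ over all advertised, loop-free, exportable-to-$j$ routes $R$, outputting $\emptyset$ for that $j$ if none exists. This output depends only on the neighbors' \emph{most recent} advertisements, so $f_i$ is deterministic and historyless (hence bounded-recall); and it makes no reference to $i$'s own past or present route choices, so $f_i$ is self-independent. (This is exactly the informal assertion made just above that under NS-BGP each node's actions are independent of its own past and based on bounded recall.) It then remains to match ``stable routing configurations'' with ``stable states'': a stable routing configuration of NS-BGP is by definition an assignment of routes $R_{i,j}$ that is consistent with every node's neighbor-specific selection rule given the routes advertised, i.e., a fixed point of $g=(g_i)_i$; by the characterization of stable states for deterministic historyless dynamics in Section~\ref{sec:model}, these fixed points are exactly the system's stable states, so two or more stable routing configurations yield two or more stable states.

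Having established that all reaction functions are bounded-recall and self-independent and that multiple stable routing configurations give multiple stable states, Theorem~\ref{thm:historyless} immediately yields that the induced system is not convergent: there is an initial state and a fair schedule under which NS-BGP oscillates forever, i.e., NS-BGP is not safe on that network. I expect the main obstacle to be precisely the modeling step --- arguing that NS-BGP's neighbor-specific, export-filtered behavior really does become a \emph{self-independent} historyless function of the neighbors' actions. Encoding a node's action as the full vector of routes it exports per neighbor, rather than a single chosen route, is what makes self-independence hold in the presence of export policies; once that encoding is fixed, the remaining steps are routine and parallel the standard-BGP argument of Sami \etal~\cite{SSZ09} recovered above.
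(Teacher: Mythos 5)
Your proposal is correct and follows essentially the same route as the paper: the paper simply asserts that under NS-BGP each node's actions are independent of its own past actions and based on bounded recall, identifies stable routing configurations with stable states, and invokes Theorem~\ref{thm:historyless}. Your write-up merely fills in the modeling details (the per-neighbor route-vector action space and the verification of self-independence) that the paper leaves implicit.
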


\subsection{Congestion Control}

\noindent{\bf The setting.} We now present the model of congestion control, studied in~\cite{GSZS09}.
There is a network of routers, represented by a directed graph $G=(V,E)$, where $|E|\geq 2$, in which vertices represent routers, and edges represent communication links. Each edge has capacity $c_e$. There are $n$ \emph{source-target pairs} of vertices $(s_i,t_i)$, termed ``\emph{connections}'', that represent communicating pairs of end-hosts. Each source-target pair $(s_i,t_i)$
is connected via some \emph{fixed} route, $R_i$. Each source $s_i$ transmits at a \emph{constant} rate
$\gamma_i>0$.\footnote{This is modeled via the addition of an edge $e=(u,s_i)$ to $G$, such that $c_e=\gamma_i$,
and $u$ has no incoming edges.} Routers have \emph{queue management}, or \emph{queueing}, policies, that dictate
how traffic traversing a router's outgoing edge should be divided between the connections whose routes traverse that edge.
The network is asynchronous and so routers' queueing decisions can be made simultaneously. See~\cite{GSZS09} for more details.

\vspace{0.05in}\noindent{\bf Computational nodes, action spaces} The \emph{computational nodes} are the edges. The \emph{action space} of each edge $e$ intuitively consists of all possible way to divide traffic going through $e$ between the connections whose routes
traverse $e$. More formally, for every edge $e$, let $N(e)$ be the number connections whose paths go through $e$.
$e$'s action space is then $A_i=\{x=(x_1,\ldots,x_{N(e)})|x_i\in \R_{\geq 0}^{N(e)}\ \mathrm{and}\ \Sigma_i x_i\leq c_e\}$.

\vspace{0.05in}\noindent{\bf Reaction functions, dynamics.} Each edge $e$'s \emph{reaction function}, $f_e$, models
the queueing policy according to which $e$'s capacity is shared: for every $N(e)$-tuple of nonnegative incoming flows
$(w_1,w_2,\ldots, w_{N(e)})$, $f_e$ outputs an action
$(x_1,\ldots,x_{N(e)})\in A_i$ such that $\forall i \in [N(e)]\  w_i \geq x_i$
(a connection's flow leaving the edge cannot be bigger than that
connection's flow entering the edge).
The interaction between the edges is as described in Section~\ref{sec:model}.

\vspace{0.05in}\noindent{\bf Multiple equilibria imply potential fluctuations of connections' throughputs.} \cite{GSZS09} shows that,
while one might expect that if sources transmit flow at a constant rate, flow will also be \emph{received} at a constant rate, this is
not necessarily the case. Indeed,~\cite{GSZS09} presents examples in which connections' throughputs can potentially fluctuate ad infinitum.
Equilibria (which correspond to stable states in Section~\ref{sec:model}), are global configurations of connections' flows on edges such that connections' incoming and outgoing flows on each edge are consistent with the queue management policy of the router controlling that edge. Using Theorem~\ref{thm:historyless}, we can obtain the following impossibility result:

\begin{theorem}
If there are multiple capacity-allocation equilibria in the network then dynamics of congestion can potentially oscillate indefinitely.
\end{theorem}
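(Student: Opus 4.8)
The plan is to obtain this theorem as a direct corollary of Theorem~\ref{thm:historyless}, exactly as the preceding setup anticipates. All of the work lies in checking that the congestion-control model of~\cite{GSZS09}, once cast in the language of Section~\ref{sec:model}, satisfies the two hypotheses of Theorem~\ref{thm:historyless}---bounded recall and self-independence---and that ``capacity-allocation equilibria'' are precisely the ``stable states'' of the induced interaction system.

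First I would pin down the translation between the two vocabularies. The computational nodes are the edges $e\in E$; the action of $e$ is the split $(x_1,\dots,x_{N(e)})$ of $e$'s outgoing traffic among the connections traversing $e$; and a global action profile assigns such a split to every edge. Given such a profile, the flow \emph{entering} edge $e$ on connection $i$ is a determined quantity: it is the constant source rate $\gamma_i$ if $e$ is the first edge of $R_i$ after the source, and otherwise it is the relevant coordinate of the action of the edge immediately preceding $e$ on the fixed route $R_i$. Hence $f_e$, although described in~\cite{GSZS09} as a function of the incoming-flow vector, is a well-defined function of the global action profile; it is \emph{historyless} (it reads only the current profile, hence is $1$-recall and stationary); and---crucially---it never reads $e$'s own coordinate, so it is \emph{self-independent}. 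This last point is exactly where the structure of the routes is used: because each $R_i$ is a simple (loop-free) path, $e$'s incoming flows are governed by strictly upstream edges, never by $e$ itself. The assumption that each queueing policy is a total function with output satisfying $x_i\le w_i$ guarantees that $f_e$ is defined on every profile, so the dynamics of Section~\ref{sec:model} are well defined for every initial state and every (fair) schedule.

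Next I would identify stable states with equilibria. For deterministic historyless dynamics, Section~\ref{sec:model} characterizes stable states as the fixed points of $g=(f_e)_{e\in E}$; unwinding the definitions, a profile is a fixed point iff on every edge the outgoing split equals what the policy dictates from the induced incoming flows, i.e., iff incoming and outgoing flows on every edge are consistent with every router's queue-management policy. That is verbatim the definition of a capacity-allocation equilibrium in~\cite{GSZS09}. Therefore ``multiple capacity-allocation equilibria'' is the same statement as ``multiple stable states,'' and Theorem~\ref{thm:historyless} applies: the induced system is not convergent, so there exist an initial flow assignment and a fair activation schedule under which the profile never stabilizes. Reinterpreting non-convergence back in the original model, connections' flows on the edges fluctuate forever, which is the claimed indefinite oscillation.

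The only genuine obstacle is the self-independence check---specifically, making sure the model does not secretly let an edge's action feed back into its own input. I would handle this either by invoking the standard assumption that the routes $R_i$ are simple paths, or, should~\cite{GSZS09} permit non-simple routes, by splitting such a route into its edge-occurrences and treating each occurrence as a distinct computational node. Once that is settled, everything else (historylessness, totality of each $f_e$, and the fixed-point/equilibrium identification) is immediate from the definitions, and the reduction to Theorem~\ref{thm:historyless} is routine.
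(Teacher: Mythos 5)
Your proposal is correct and follows essentially the same route as the paper, which likewise derives this theorem as a direct corollary of Theorem~\ref{thm:historyless} by casting the edges as computational nodes with deterministic, historyless, self-independent reaction functions and identifying capacity-allocation equilibria with stable states. In fact you are more explicit than the paper, which leaves the self-independence check (via the fixed, loop-free routes making each edge's incoming flows depend only on upstream edges' actions) implicit.
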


\section{$r$-Convergence and Randomness}~\label{apx:schedules}

We now consider the implications for convergence of two natural restrictions on schedules: $r$-fairness and randomization.

\subsection{\bf Snakes in Boxes and r-Convergence.} Theorem~\ref{thm:historyless} deals with convergence and not $r$-convergence, and thus does not impose restrictions on the number of consecutive time steps in which a node can be nonactive. What happens if there is an upper bound on this number, $r$? We now show that if $r<n-1$ then sometimes convergence of historyless and self-independent dynamics is achievable even in the presence of multiple stable states (and so our impossibility result breaks).

\begin{example} {\bf(a system that is convergent for $r<n-1$ but nonconvergent for $r=n-1$)}
There are $n\geq 2$ nodes, $1,\ldots,n$, each with the action space $\{x,y\}$. Nodes' deterministic, historyless and self-independent reaction functions are as follows. $\forall i\in [n]$, $f_i(x^{n-1})=x$ and $f_i$ always outputs $y$ otherwise. Observe that there exist two stable states: $x^n$ and $y^n$. Observe that if $r=n-1$ then the following oscillation is possible. Initially, only node $1$'s action is $y$ and all other nodes' actions are $x$. Then, nodes $1$ and $2$ are activated and, consequently, node $1$'s action becomes $x$ and node $2$'s action becomes $y$. Next, nodes $2$ and $3$ are activated, and thus $2$'s action becomes $x$ and $3$'s action becomes $y$. Then $3,4$ are activated, then $4,5$, and so on (traversing all nodes over and over again in cyclic order). This goes on indefinitely, never reaching one of the two stable states. Observe that, indeed, each node is activated at least once within every sequence of $n-1$ consecutive time steps. We observe however, that if $r < n-1$ then convergence is guaranteed. To see this, observe that if at some point in time there are at least two nodes whose action is $y$, then convergence to $y^n$ is guaranteed. Clearly, if all nodes' action is $x$ then convergence to $x^n$ is guaranteed. Thus, an oscillation is possible only if, in each time step, \emph{exactly} one node's action is $y$. Observe that, given our definition of nodes' reaction functions, this can only be if the activation sequence is (essentially) as described above, \ie, exactly two nodes are activated at a time. Observe also that this kind of activation sequence is impossible for $r<n-1$.

\end{example}

What about $r>n$? We use classical results in combinatorics regarding the size of a ``\emph{snake-in-the-box}'' in a hypercube~\cite{AK91} to show that some systems are $r$-convergent for exponentially-large $r$'s, but are not convergent in general.

\newtheorem{thm-exp-r}{Theorem~\ref{thm:exp-r}}
\begin{thm-exp-r}
Let $n\in\N$ be sufficiently large. There exists a system $G$ with $n$ nodes, in which each node $i$ has two possible actions and each $f_i$ is deterministic, historyless and self-independent, such that

\begin{enumerate}

\item $G$ is $r$-convergent for $r\in\Omega(2^{n})$;

\item $G$ is not $(r+1)$-convergent.

\end{enumerate}
\end{thm-exp-r}
\begin{proof}
Let the action space of each of the $n$ nodes be $\{x,y\}$. Consider the possible action profiles of nodes $3,\ldots,n$, \ie, the set $\{x,y\}^{n-2}$. Observe that this set of actions can be regarded as the $(n-2)$-hypercube $Q_{n-2}$, and thus can be visualized as the graph whose vertices are indexed by the binary $(n-2)$-tuples and such that two vertices are adjacent iff the corresponding $(n-2)$-tuples differ in exactly one coordinate.

\begin{definition} {\bf(chordless paths, snakes)}
A \emph{chordless path} in a hypercube $Q_n$ is a path $P=(v_0,\ldots,v_w)$ such that for each $v_i,v_j$ on $P$, if $v_i$ and $v_j$ are neighbors in $Q_n$ then $v_j\in \{v_{i-1},v_{i+1}\}$. A \emph{snake} in a hypercube is a simple chordless cycle.
\end{definition}

The following result is due to Abbot and Katchalski~\cite{AK91}.

\begin{theorem}~\cite{AK91}
Let $t\in\N$ be sufficiently large. Then, the size $|S|$ of a maximal snake in the $z$-hypercube $Q_z$ is at least $\lambda\times 2^z$ for some $\lambda\geq 0.3$.
\end{theorem}

Hence, the size of a maximal snake in the $Q_{n-2}$ hypercube is $\Omega(2^{n})$. Let $S$ be a maximal snake in $\{x,y\}^{n-2}$. W.l.o.g we can assume that $x^{n-2}$ is on $S$ (otherwise we can rename nodes' actions so as to achieve this). Nodes deterministic, historyless and self-independent are as follows:

\begin{itemize}

\item Node $i\in\{1,2\}$: $f_i(x^{n-1})=x$; $f_i=y$ otherwise.

\item Node $i\in\{3,\ldots,n\}$: if the actions of nodes $1$ and $2$ are both $y$ then the action $y$ is chosen, \ie, $f_i(yy*\ldots*)=y$; otherwise, $f_i$ only depends on the actions of nodes in $\{3,\ldots,n\}$ and therefore to describe $f_i$ it suffices to orient the edges of the hypercube $Q_{n-2}$ (an edge from one vertex to another vertex that differs from it in the $i$th coordinate determines the outcome of $f_i$ for both). This is done as follows: orient the edges in $S$ so as to create a cycle (in one of two possible ways); orient edges between vertices not in $S$ to vertices in $S$ towards the vertices in $S$; orient all other edges arbitrarily.

\end{itemize}

\begin{obs}
$x^n$ is the unique stable state of the system.
\end{obs}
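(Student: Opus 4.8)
The plan is to reduce the statement to a combinatorial fact about the chosen orientation of $Q_{n-2}$. By the characterization recorded in Section~\ref{sec:model} for deterministic historyless dynamics, the stable states of the system are exactly the fixed points of the profile $g=(g_1,\ldots,g_n)$ with $g_i:A\to A_i$; so it suffices to show that $x^n$ is the unique fixed point of $g$.

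First I would verify that $x^n$ is a fixed point. At $x^n$, nodes $1$ and $2$ see the others playing $x^{n-1}$ and hence output $x$. For $i\geq 3$, nodes $1$ and $2$ are not both playing $y$, so $g_i(x^n)$ is read off the orientation of $Q_{n-2}$ at the vertex $x^{n-2}$, and by the way the orientation is built so that this vertex absorbs every coordinate, $g_i(x^n)=x$. Thus $g(x^n)=x^n$. (I would make this step precise against the exact orientation rules for the edges incident to $x^{n-2}$.)

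For uniqueness, let $a=(a_1,\ldots,a_n)$ be an arbitrary fixed point and split on the pair $(a_1,a_2)$. If $a_1=a_2=y$, the ``both-$y$'' clause of nodes $3,\ldots,n$ forces $a_i=y$ for all $i\geq 3$, so $a=y^n$; but then nodes $1$ and $2$ each see $y^{n-1}$ on the others, and by the design of their reaction functions (they play the role of nodes $1,2$ in Example~\ref{ex:random-schedule}, for which $f_i(y^{n-1})=x$) the profile $y^n$ is not a rest point for them, a contradiction. Otherwise at least one of $a_1,a_2$ equals $x$, the ``both-$y$'' clause is not triggered, and for every $i\geq 3$ the equation $g_i(a)=a_i$ says exactly that the $i$-direction edge at $u:=(a_3,\ldots,a_n)$ in $Q_{n-2}$ is oriented \emph{into} $u$ (under the convention that each oriented hypercube edge records, for its coordinate, the value toward which that coordinate is driven); that is, $u$ is a sink of the oriented hypercube. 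The crux is then the claim that this orientation has a unique sink, namely $x^{n-2}$: every vertex of the snake $S$ carries its outgoing cycle-edge, every vertex adjacent to $S$ carries an outgoing edge toward $S$, and the remaining edges are oriented so as not to create any new sink, leaving $x^{n-2}$ as the only sink. Given $u=x^{n-2}$, nodes $1$ and $2$ then see $x^{n-1}$ and stability forces $a_1=a_2=x$, so $a=x^n$.

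The step I expect to be the main obstacle is the sink-uniqueness claim. Two things need care: first, reconciling ``$x^{n-2}$ lies on $S$ and $S$ is oriented as a directed cycle'' with ``$x^{n-2}$ is a sink'' --- this means the orientation rules in the immediate neighbourhood of $x^{n-2}$ must be spelled out more carefully than the one-line summary, since a cycle vertex cannot literally be a sink (so either $x^{n-2}$ is treated as an exception to the cycle-orientation, or the nodes in $\{3,\ldots,n\}$ have an extra ``all-$x$'' clause that handles this vertex); and second, checking that the freedom in orienting the ``other'' edges of $Q_{n-2}$ is used so that no spurious sink appears far from $S$ (a greedy orientation directing every remaining vertex toward $S$ suffices). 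The rest --- the fixed-point reformulation, the case split on $(a_1,a_2)$, and the reductions to the all-$x$ profile on $\{3,\ldots,n\}$ --- is routine bookkeeping once the orientation details are fixed.
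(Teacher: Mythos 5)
Your fixed-point reformulation is the right framework (for deterministic historyless dynamics the stable states are exactly the fixed points of $g$), but the argument goes wrong because you have imported the wrong reaction functions for nodes $1$ and $2$. In the construction of Theorem~\ref{thm:exp-r} these are defined by $f_i(x^{n-1})=x$ and $f_i=y$ \emph{otherwise}; the rule $f_i(y^{n-1})=x$ that you invoke to kill the candidate fixed point $y^n$ belongs to Example~\ref{ex:random-schedule}, a different system. With the actual definitions, at $y^n$ nodes $1$ and $2$ each see $y^{n-1}$ and output $y$, and every node $i\geq 3$ sees nodes $1$ and $2$ both playing $y$ and outputs $y$; so $y^n$ \emph{is} a fixed point of $g$, hence a stable state, and your contradiction in the case $a_1=a_2=y$ evaporates.

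The difficulty you flag as ``the main obstacle'' is in fact pointing at the real issue: $x^{n-2}$ lies on the snake $S$, which is oriented as a directed cycle, so $x^{n-2}$ has an outgoing edge and the corresponding node $i\geq 3$ satisfies $f_i(x^n)\neq x$. Thus $x^n$ is \emph{not} a fixed point at all, and no local patching of the orientation near $x^{n-2}$ can make it one without destroying the oscillation along $S$ that the rest of the proof of Theorem~\ref{thm:exp-r} requires (the oscillation starts at $x^n$ and walks the directed cycle back to it). The observation as printed is evidently a typo for ``$y^n$ is the unique stable state'': every subsequent claim in the paper's argument speaks of convergence to ``the $y^n$ stable state.'' The corrected statement follows from your own case split run in the opposite direction: if $a_1=a_2=y$ then all of $3,\ldots,n$ play $y$ and $a=y^n$, which is fixed; if instead some $a_i=x$ for $i\in\{1,2\}$, the fixed-point condition at that node forces $a_{-i}=x^{n-1}$, hence $a=x^n$, which is excluded by the outgoing snake edge at $x^{n-2}$. (The paper itself states the observation without proof, so there is no argument of theirs to compare against; but the proof you propose establishes a false statement and cannot be repaired as written.)
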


\begin{obs}
If, at some point in time, both nodes $1$ and $2$'s actions are $y$ then convergence to the $y^n$ stable state is guaranteed.
\end{obs}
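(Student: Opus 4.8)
The plan is to prove this by a short invariant argument that never mentions the snake $S$ or the hypercube orientation: as soon as nodes $1$ and $2$ both play $y$, the special clause $f_i(yy\ast\cdots\ast)=y$ takes over for every $i\in\{3,\dots,n\}$, so the orientation of $Q_{n-2}$ becomes irrelevant.

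First I would show that $\{1,2\}$ is an \emph{absorbing pair}: if at some step $t_0$ we have $s^{(t_0)}_1=s^{(t_0)}_2=y$, then $s^{(t)}_1=s^{(t)}_2=y$ for all $t\ge t_0$, for \emph{any} schedule. This is an induction on $t$. The only case requiring thought is when some $i\in\{1,2\}$ is activated at step $t+1$: since $f_i$ is self-independent it reacts to $s^{(t)}_{-i}$ only, and that vector contains the current action of the other member of $\{1,2\}$, which equals $y$ by the induction hypothesis; crucially, a simultaneous activation of both $1$ and $2$ at step $t+1$ does not spoil this, because both react to the end-of-step-$t$ profile. Hence the input to $f_i$ differs from $x^{n-1}$, so by definition $f_i$ outputs $y$, completing the induction.

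Next I would invoke fairness. Let $\sigma$ be any fair schedule (in particular, any $r$-fair one). By the absorbing-pair claim, nodes $1$ and $2$ play $y$ at every step $\ge t_0$, so for each $i\in\{3,\dots,n\}$ the clause $f_i(yy\ast\cdots\ast)=y$ applies at every step after $t_0$; therefore, whenever such a node is activated past $t_0$ it outputs $y$, and --- by the same clause --- it never leaves $y$ thereafter. Since $\sigma$ activates every node infinitely often, there is a step $t_1$ by which every node in $\{3,\dots,n\}$ has been activated at least once in $\{t_0+1,\dots,t_1\}$, and then $s^{(t_1)}=y^n$. A one-line check shows $y^n$ persists: nodes $1,2$ see $y^{n-1}\ne x^{n-1}$ and replay $y$, and nodes $3,\dots,n$ see $1,2$ playing $y$ and replay $y$. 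Hence $s^{(t)}=y^n$ for all $t\ge t_1$, which is exactly convergence to $y^n$.

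I do not anticipate a genuine obstacle; the only mildly delicate point is the bookkeeping in the absorbing-pair induction --- in particular, confirming that two nodes activated at the \emph{same} step cannot simultaneously knock each other out of $y$ --- and this is immediate from the model's convention that all nodes activated at a given step compute their new action from the previous step's profile. Everything else is a direct unwinding of the reaction-function definitions.
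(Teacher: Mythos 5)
Your proof is correct, and since the paper states this observation without any proof, your argument is exactly the routine unwinding it leaves implicit: $\{1,2\}$ is an absorbing pair because each of nodes $1,2$ is self-independent and sees the other's $y$ (so its input is never $x^{n-1}$), and then fairness forces every node in $\{3,\dots,n\}$ to switch to and remain at $y$ via the clause $f_i(yy\ast\cdots\ast)=y$. Nothing is missing, and your handling of simultaneous activations is consistent with the model's convention that all nodes activated at a step react to the previous step's profile.
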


\begin{claim}\label{obs:must}
If there is an oscillation then there must be infinitely many time steps in which the actions of nodes $2,\ldots,n$ are $x^{n-1}$.
\end{claim}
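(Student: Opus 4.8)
The plan is to prove the contrapositive: if, from some time on, nodes $2,\ldots,n$ are \emph{never} simultaneously in the configuration $x^{n-1}$, then the execution converges, and so it is not an oscillation. Fix such a time $T$; that is, assume that for every $t\geq T$ the action profile of nodes $2,\ldots,n$ differs from $x^{n-1}$.

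First I would pin down node $1$. Since $f_1$ is self-independent it reads the action profile of nodes $2,\ldots,n$, and by construction it outputs $x$ exactly on input $x^{n-1}$ and outputs $y$ on every other input. Hence, after time $T$, every activation of node $1$ makes it (re)select $y$; as the schedule is fair, node $1$ is activated again after $T$, and from that point on its action is permanently $y$. Now I would cascade to node $2$: once node $1$ is frozen at $y$, the input that $f_2$ reads (the action profile of nodes $1,3,4,\ldots,n$) always has $y$ in node $1$'s coordinate, hence is never $x^{n-1}$, so by the same argument node $2$ is eventually activated and then frozen at $y$ as well.

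At this point both node $1$ and node $2$ play $y$ forever, so in particular there is a time step at which both play $y$; by the Observation above (that once nodes $1$ and $2$ both play $y$ the dynamics converge to $y^n$), the execution converges. This contradicts the assumption that the execution is an oscillation. Hence every oscillation contains infinitely many time steps at which nodes $2,\ldots,n$ play $x^{n-1}$, as claimed.

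The argument is short and I expect no serious obstacle; the only points to be careful about are the order of the cascade (node $1$ must be shown to stabilize at $y$ \emph{before} reasoning about node $2$, since $f_2$ depends on node $1$'s action) and the use of fairness to guarantee that nodes $1$ and $2$ are each reactivated after the relevant times rather than remaining frozen at $x$. Note that the snake-in-the-box orientation governing the reaction functions of nodes $3,\ldots,n$ is not needed for this claim; it enters only in the later steps of the proof of Theorem~\ref{thm:exp-r}.
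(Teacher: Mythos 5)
Your proof is correct and follows essentially the same route as the paper's: argue by contradiction/contraposition that if nodes $2,\ldots,n$ stop visiting $x^{n-1}$, then node $1$ settles on $y$, then node $2$ settles on $y$, and the earlier observation forces convergence to $y^n$. Your version merely spells out the fairness step (that nodes $1$ and $2$ are actually reactivated) which the paper leaves implicit.
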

\begin{proof}
Consider the case that the statement does not hold. In that case, from some moment forth, node $1$ never sees the actions $x^{n-1}$
and so will constantly select the action $y$. Once that happens, node $2$ shall also not see the actions $x^{n-1}$ and will thereafter also select $y$. Convergence to $y^n$ is then guaranteed.
\end{proof}

We now show that the system is convergent for $r<|S|$, but is nonconvergent if $r=|S|$. The theorem follows.

\begin{claim}
If $r<|S|$ then convergence to the stable state $y^n$ is guaranteed.
\end{claim}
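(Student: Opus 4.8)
The plan is to argue by contradiction. Suppose some $(s^{(0)},\sigma)$-dynamics fails to converge, where $\sigma$ is $r$-fair and $r<|S|$; a contradiction will establish $r$-convergence, and then, recalling the earlier observation that $y^n$ is the construction's unique stable state, every such run must converge to $y^n$. The first step is to use the observation that if nodes $1$ and $2$ ever simultaneously play $y$ then the run converges to $y^n$: in a non-convergent run, then, nodes $1$ and $2$ are \emph{never} simultaneously playing $y$, so whenever a node $i\in\{3,\ldots,n\}$ is activated it updates strictly according to the oriented hypercube on the profiles of nodes $3,\ldots,n$. I would then read off the structure of this orientation at the snake: at every vertex of $S$ (in particular at $x^{n-2}\in S$) exactly one incident coordinate-edge points out --- the one lying along the oriented snake-cycle --- while the other snake edge and all edges leaving $S$ point in; equivalently, at a snake vertex exactly one of nodes $3,\ldots,n$ is unhappy, and activating it advances the profile of nodes $3,\ldots,n$ to the next vertex of the cycle. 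Two consequences I would record: once this profile reaches a vertex of $S$ it stays on $S$ forever after, and from $x^{n-2}$ it can move only to the unique succeeding snake vertex.

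Next I would invoke Claim~\ref{obs:must}: a non-convergent run has nodes $2,\ldots,n$ all playing $x$ at infinitely many steps, so the profile of nodes $3,\ldots,n$ equals $x^{n-2}$ infinitely often. Combined with the previous paragraph, the profile is eventually always on $S$, and it is not eventually frozen at $x^{n-2}$ --- the unique unhappy node $i_0$ there is activated infinitely often (fairness), and each such activation moves the profile off $x^{n-2}$ --- so the profile leaves and re-enters $x^{n-2}$ infinitely often. I would then fix a single excursion: say the profile is $x^{n-2}$ at step $\tau-1$, the succeeding snake vertex at step $\tau$, and first back at $x^{n-2}$ at step $\tau'$. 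Since returning to $x^{n-2}$ requires going once around the snake-cycle, i.e.\ $|S|-1$ further coordinate flips, each occurring at a distinct step of $\{\tau+1,\ldots,\tau'\}$, I get $\tau'-\tau\ge |S|-1\ge r$.

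The punchline: throughout steps $\tau,\ldots,\tau'-1$ the profile of nodes $3,\ldots,n$ is $\neq x^{n-2}$, so at each such configuration neither node $1$ nor node $2$ sees $x^{n-1}$, whence the reaction function of each outputs $y$; hence any activation of node $1$ or node $2$ at a step in $\{\tau+1,\ldots,\tau'\}$ sets that node's action to $y$, and it then stays $y$ for the remainder of this window. But $\{\tau+1,\ldots,\tau+r\}$ is a block of $r$ consecutive steps contained in $\{\tau+1,\ldots,\tau'\}$, so $r$-fairness forces both node $1$ and node $2$ to be activated inside it; thus at some step both are simultaneously playing $y$, and by the observation cited at the start the run converges to $y^n$ --- contradicting non-convergence. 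This proves the system $r$-convergent, hence convergent to $y^n$.

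I expect the main obstacle to be the bookkeeping around the one-step delay of the reaction functions (a node activated at step $t$ responds to $s^{(t-1)}$): this is why I argue with the window $\{\tau+1,\ldots,\tau'\}$ rather than $\{\tau,\ldots,\tau'-1\}$, and why one must verify that $\tau'-\tau\ge|S|-1$ is still at least $r$, and long enough for the relevant reactions to ``see'' a profile different from $x^{n-2}$, even in the tight case $r=|S|-1$. The only other point needing care is the structural claim that the profile of nodes $3,\ldots,n$ cannot escape $x^{n-2}$ except along $S$, which rests entirely on the orientation rule (snake edges consistently oriented into a cycle, every edge off $S$ oriented toward $S$); I would make sure the preceding observations about the construction are stated with enough generality to supply it.
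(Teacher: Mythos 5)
Your proof is correct and follows essentially the same route as the paper's: invoke Claim~\ref{obs:must} to find infinitely many returns of nodes $3,\ldots,n$ to $x^{n-2}$, note that each excursion must traverse the oriented snake-cycle and hence lasts at least $|S|-1\geq r$ steps, and use $r$-fairness to activate nodes $1$ and $2$ during the excursion so that both switch to $y$. The paper's version is just a terser statement of the same argument; your additional bookkeeping (the one-step reaction delay, the in-degree/out-degree analysis at snake vertices) fills in details the paper leaves implicit.
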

\begin{proof}
Observation~\ref{obs:must} establishes that in an oscillation there must be infinitely many time steps in which the actions of nodes $2,\ldots,n$ are $x^{n-1}$. Consider one such moment in time. Observe that in the subsequent time steps nodes' action profiles will inevitably change as in $S$ (given our definition of nodes' $3,\ldots,n$ reaction functions). Thus, once the action profile is no longer $x^{n-1}$ there are at least $|S|-1$ time steps until it goes back to being $x^{n-1}$. Observe that if $1$ and $2$ are activated at some point in the intermediate time steps (which is guaranteed as $r<|S|$) then the actions of both shall be $y$ and so convergence to $y^n$ is guaranteed.
\end{proof}

\begin{claim}
If $r=|S|$ then an oscillation is possible.
\end{claim}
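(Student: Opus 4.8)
The plan is to prove the claim by exhibiting one initial state together with one $|S|$-fair schedule whose induced execution cycles forever, witnessing that $G$ is not $|S|$-convergent; together with the previous claim this also finishes the proof of Theorem~\ref{thm:exp-r}, with ``$r$'' there equal to $|S|-1\in\Omega(2^n)$. Fix the orientation of $S$ to be a directed cycle $s_0\to s_1\to\cdots\to s_{|S|-1}\to s_0$ (indices mod $|S|$), with $s_0=x^{n-2}$ the distinguished vertex of $S$, and let $i_j\in\{3,\ldots,n\}$ denote the node whose coordinate flips between $s_j$ and $s_{j+1}$. The structural fact I would record first is this: whenever nodes $1$ and $2$ are not both playing $y$, the configuration $s_j$ of nodes $\{3,\ldots,n\}$ has $i_j$ as its \emph{unique} non-stable node, and simultaneously activating any subset of $\{3,\ldots,n\}$ that contains $i_j$ moves that configuration from $s_j$ to $s_{j+1}$. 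Indeed, at $s_j$ the outgoing cycle edge is in direction $i_j$ (so $f_{i_j}$ flips toward $s_{j+1}$), the incoming cycle edge is in direction $i_{j-1}$ (so $i_{j-1}$ is stable), and by chordlessness of $S$ every remaining edge at $s_j$ leaves $S$ and was oriented toward $S$, so that node too is stable.

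I would then take $s^{(0)}=(x,x,s_0)$ (nodes $1$ and $2$ at $x$, nodes $3,\ldots,n$ at $x^{n-2}$) and the schedule, periodic with period $|S|$, that activates all $n$ nodes at each step whose current configuration of $\{3,\ldots,n\}$ is $s_0$ and activates exactly $\{3,\ldots,n\}$ at each step whose configuration is $s_j$ with $1\le j\le|S|-1$. A one-line induction then shows the execution runs through $(x,x,s_0),(x,x,s_1),\ldots,(x,x,s_{|S|-1}),(x,x,s_0),\ldots$ forever: nodes $1$ and $2$ are activated only while the state is $(x,x,s_0)$, and there node $1$ sees $x^{n-1}$ on nodes $2,\ldots,n$ and node $2$ sees $x^{n-1}$ on nodes $1,3,\ldots,n$, so $f_1=f_2=x$ and both re-select $x$; meanwhile, by the structural fact (nodes $1$ and $2$ are never both $y$ along this execution) the configuration of $\{3,\ldots,n\}$ advances by exactly one step of the snake at every time step. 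Hence the state changes at every step and the execution never converges.

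Finally I would verify $|S|$-fairness. Since the schedule is periodic with period $|S|$, every window of $|S|$ consecutive steps contains a full cyclic pass around $S$; nodes $3,\ldots,n$ are activated at every step, and nodes $1$ and $2$ are activated once per pass (when the state is $s_0$), so every node is activated within every window. Thus the schedule is $|S|$-fair (and not $(|S|-1)$-fair, since nodes $1$ and $2$ are touched only every $|S|$ steps), which is the claim with $r=|S|$. The main point requiring care is the induction in the second paragraph: one must check that activating the whole of $\{3,\ldots,n\}$ at an interior configuration $s_j$ still moves the state by exactly one snake step (this is the content of the structural fact, and relies on the directed-cycle orientation together with chordlessness of $S$), and that inserting the activations of nodes $1$ and $2$ is harmless precisely because those two nodes are activated only at the configuration $x^{n-1}$, on which $f_1$ and $f_2$ output $x$.
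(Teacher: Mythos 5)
Your proposal is correct and follows essentially the same route as the paper's proof: start at $x^n=(x,x,x^{n-2})$, activate everyone once (nodes $1$ and $2$ re-select $x$ since they see $x^{n-1}$), then activate only $\{3,\ldots,n\}$ for the remaining steps of each period so the configuration traverses the snake $S$ back to $x^{n-2}$, repeating forever under an $|S|$-fair schedule. The paper states this in two sentences and leaves implicit the structural fact you verify explicitly (that chordlessness and the chosen orientation make $i_j$ the unique non-stable node at $s_j$, so simultaneous activation of $\{3,\ldots,n\}$ advances exactly one snake step); your write-up is simply a more detailed version of the same argument.
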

\begin{proof}
The oscillation is as follows. Start at $x^n$ and activate both $1$ and $2$ (this will not change the action profile). In the $|S|-1$ subsequent time steps activate all nodes but $1$ and $2$ until $x^n$ is reached again. Repeat ad infinitum.
\end{proof}

\end{proof}

We note that the construction in the proof of Theorem~\ref{thm:exp-r} is such that there is a unique stable state. We believe that the same
ideas can be used to prove the same result for systems with multiple stable states but the exact way of doing this eludes us at the moment, and is left as an open question.

\begin{problem}
Prove that for every sufficiently large $n\in\N$, there exists a system $G$ with $n$ nodes, in which each node $i$ has two possible actions and each $f_i$ is deterministic, historyless and self-independent, such that

\begin{enumerate}

\item $G$ is $r$-convergent for $r\in\Omega(2^{n})$;

\item $G$ is not $(r+1)$-convergent;

\item There are multiple stable states in $G$.

\end{enumerate}

\end{problem}

\subsection{Does Random Choice (of Initial State and Schedule) Help?}

Theorem~\ref{thm:historyless} tells us that a system with multiple stable states is nonconvergent if the initial state and the node-activation schedule are chosen adversarially. Can we guarantee convergence if the initial state and schedule are chosen \emph{at random}?

\begin{example}\label{ex:apx-random-schedule} {\bf(random choice of initial state and schedule might not help)}
There are $n$ nodes, $1,\ldots,n$, and each node has action space $\{x,y,z\}$. The (deterministic, historyless and self-independent) reaction function of each node $i\in\{3,\ldots,n\}$ is such that $f_i(x^{n-1})=x$; $f_i(z^{n-1})=z$; and $f_i=y$ for all other inputs. The (deterministic, historyless and self-independent) reaction function of each node $i\in\{1,2\}$ is such that $f_i(x^{n-1})=x$; $f_i(z^{n-1})=z$; $f_i(xy^{n-2})=y$; $f_i(y^{n-1})=x$; and $f_i=y$ for all other inputs. Observe that there are exactly two stable states: $x^n$ and $z^n$. Observe also that if nodes' actions in the initial state do not contain at least $n-1$ $x$'s, or at least $n-1$ $z$'s, then, from that moment forth, each activated node in the set $\{3,\ldots,n\}$ will choose the action $y$. Thus, eventually the actions of all nodes in $\{3,\ldots,n\}$ shall be $y$, and so none of the two stable states will be reached. Hence, there are $3^n$ possible initial states, such that only from $4n+2$ can a stable state be reached.
\end{example}

Example~\ref{ex:apx-random-schedule} presents a system with multiple stable states such that from most initial states \emph{all} possible choices of schedules do not result in a stable state. Hence, when choosing the initial state uniformly at random the probability of landing on a ``good'' initial state (in terms of convergence) is exponentially small.

\section{Complexity of Asynchronous Dynamics}\label{apx:complexity}

We now explore the communication complexity and computational complexity of determining
whether a system is convergent. We present hardness results in both models of computation even
for the case of deterministic and historyless adaptive heuristics. Our
computational complexity result shows that even if nodes' reaction functions can be succinctly
represented, determining whether the system is convergent is PSPACE-complete. This intractability
result, alongside its computational implications, implies that we cannot hope to have short
``witnesses'' of guaranteed asynchronous convergence (unless PSPACE $\subseteq$ NP).

\subsection{Communication Complexity}

We prove the following communication complexity result, that shows that, in general, determining
whether a system is convergent cannot be done efficiently.

\begin{theorem}
Determining if a system with $n$ nodes, each with $2$ actions, is convergent requires
$\Omega(2^{n})$ bits. This holds even if all nodes have deterministic, historyless and self-independent
reaction functions.
\end{theorem}
\begin{proof}
To prove our result we present a reduction from the following well-known problem in communication complexity theory.

\vspace{0.05in}\noindent 2-party SET DISJOINTNESS: There are two parties, Alice and Bob. Each party holds a subset of $\{1,\ldots,q\}$; Alice
holds the subset $E^A$ and Bob holds the subset $E^B$. The objective is to determine whether $E^A\cap E^B=\emptyset$. The following is well known.

\begin{theorem}
Determining whether $E^A\cap E^B=\emptyset$ requires (in the worst case) the communication of $\Omega(q)$ bits. This lower bound applies to randomized protocols with bounded $2$-sided error and also to nondeterministic protocols.
\end{theorem}

We now present a reduction
from 2-party SET DISJOINTNESS to the question of determining whether a system with deterministic, historyless and self-independent reaction functions is convergent. Given an instance of SET-DISJOINTNESS we construct a system with $n$ nodes, each with two actions, as follows (the relation between the parameter $q$ in SET DISJOINTNESS and the number of nodes $n$ is to be specified later). Let the action space of each node be $\{x,y\}$. We now define the reaction functions of the nodes. Consider the possible action profiles of nodes $3,\ldots,n$, \ie, the set $\{x,y\}^{n-2}$. Observe that this set of actions can be regarded as the $(n-2)$-hypercube $Q_{n-2}$, and thus can be visualized as the graph whose vertices are indexed by the binary $(n-2)$-tuples and such that two vertices are adjacent if and only if the corresponding $(n-2)$-tuples differ in exactly one coordinate.

\begin{definition} {\bf(chordless paths, snakes)}
A \emph{chordless path} in a hypercube $Q_n$ is a path $P=(v_0,\ldots,v_w)$ such that for each $v_i,v_j$ on $P$, if $v_i$ and $v_j$ are neighbors in $Q_n$ then $v_j\in \{v_{i-1},v_{i+1}\}$. A \emph{snake} in a hypercube is a simple chordless cycle.
\end{definition}

The following result is due to Abbot and Katchalski~\cite{AK91}.

\begin{theorem}~\cite{AK91}
Let $t\in\N$ be sufficiently large. Then, the size $|S|$ of a maximal snake in the $z$-hypercube $Q_z$ is at least $\lambda\times 2^z$ for some $\lambda\geq 0.3$.
\end{theorem}

Hence, the size of a maximal snake in the $Q_{n-2}$ hypercube is $\Omega(2^{n})$. Let $S$ be a maximal snake in $\{x,y\}^{n-2}$. We now show our reduction from SET DISJOINTNESS with $q=|S|$. We identify each element $j\in \{1\ldots,q\}$ with a unique vertex $v_j\in S$. W.l.o.g we can assume that $x^{n-2}$ is on $S$ (otherwise we can rename nodes' actions to achieve this). For ease of exposition we also assume that $y^{n-2}$ is not on $S$ (getting rid of this assumption is easy). Nodes' reaction functions are as follows.

\begin{itemize}

\item Node $1$: If $v_j=(v_{j,1},\ldots,v_{j,n-2})\in S$ is a vertex that corresponds to an element $j\in E^A$, then $f_1(y,v_{j,1},\ldots,v_{j,n-2})=x$; otherwise, $f_1$ outputs $y$.

\item Node $2$: If $v_j=(v_{j,1},\ldots,v_{j,n-2})\in S$ is a vertex that corresponds to an element $j\in E^B$, then $f_2(y,v_{j,1},\ldots,v_{j,n-2})=x$; otherwise, $f_2$ outputs $y$.

\item Node $i\in\{3,\ldots,n\}$: if the actions of nodes $1$ and $2$ are \emph{not} both $x$ then the action $y$ is chosen; otherwise, $f_i$ only depends on actions of nodes in $\{3,\ldots,n\}$ and therefore to describe $f_i$ it suffices to orient the edges of the hypercube $Q_{n-2}$ (an edge from one vertex to another vertex that differs from it in the $i$th coordinate determines the outcome of $f_i$ for both). This is done as follows: orient the edges in $S$ so as to create a cycle (in one of two possible ways); orient edges between vertices not in $S$ to vertices in $S$ towards the vertices in $S$; orient all other edges arbitrarily.
\end{itemize}

\begin{obs}
$y^n$ is the unique stable state of the system.
\end{obs}

In our reduction Alice simulates node $1$ (whose reaction function is based on $E^A$), Bob simulates node $1$ (whose reaction function is based on $E^B$), and one of the two parties simulates all other nodes (whose reaction functions are not based on neither $E^A$ nor $E^B$). The theorem now follows from the combination of the following claims:

\begin{claim}\label{claim:infinite}
In an oscillation it must be that there are infinitely many time steps in which both node $1$ and $2$'s actions are $x$.
\end{claim}
\begin{proof}
By contradiction. Consider the case that from some moment forth it is never the case that both node $1$ and $2$'s actions are $x$. Observe that
from that time onwards the nodes $3,\ldots,n$ will always choose the action $y$. Hence, after some time has passed the actions
of all nodes in $\{3,\ldots,n\}$ will be $y$. Observe that whenever nodes $1$ and $2$ are activated thereafter they shall choose the action $y$ and so we have convergence to the stable state $y^n$.
\end{proof}

\begin{claim}
The system is not convergent iff $E^A\cap E^B\neq\emptyset$.
\end{claim}
\begin{proof}
We know (Claim~\ref{claim:infinite}) that if there is an oscillation then there are infinitely many time steps in which both node $1$ and $2$'s actions are $x$. We argue that this implies that there must be infinitely many time steps in which both nodes select action $x$ \emph{simultaneously}. Indeed, recall that node $1$ only chooses action $x$ if node $2$'s action is $y$, and vice versa, and so if both nodes never choose $x$ simultaneously, then it is never the case that both nodes' actions are $x$ at the same time step (a contradiction). Now, when is it possible for both
$1$ and $2$ to choose $x$ at the same time? Observe that this can only be if the actions of the nodes in $\{3,\ldots,n\}$ constitute an element that is in both $E^A$ and $E^B$. Hence, $E^A\cap E^B\neq\emptyset$.
\end{proof}

\end{proof}

\subsection{Computational Complexity}

The above communication complexity hardness result required the representation of the reaction functions to (potentially) be exponentially long.
What if the reaction functions can be succinctly described? We now present a strong computational complexity hardness result for the case that each reaction function $f_i$ is deterministic and historyless, and is given explicitly in the form of a boolean circuit (for each $a\in A$ the circuit outputs $f_i(a)$).

\newtheorem{thm-pspace}{Theorem~\ref{thm:pspace}}
\begin{thm-pspace}
Determining if a system with $n$ nodes, each with a deterministic and historyless reaction function, is convergent is PSPACE-complete.
\end{thm-pspace}
\begin{proof}
Our proof is based on the proof of Fabrikant and Papadimitriou~\cite{FP08} that BGP safety is PSPACE-complete. Importantly, the result in~\cite{FP08} does not imply Theorem~\ref{thm:pspace} since~\cite{FP08} only considers dynamics in which nodes are activated one at a time. We present a reduction from the following problem.

\vspace{0.05in}\noindent STRING NONTERMINATION: The input is a function $g:\Gamma^t\rightarrow \Gamma\cup\{halt\}$, for some alphabet $\Gamma$, given in the form of a boolean circuit. The objective is to determine whether there exists an initial string $T=(T_0,\ldots, T_{t-1})\in\Gamma^t$ such that the following procedure \emph{does not} halt.

\begin{enumerate}

\item $i$:=0

\item While $g(T)\neq halt$ do

\begin{itemize}

\item $T_i:=g(T)$

\item $i:=(i+1)\ modulu\ t$

\end{itemize}

\end{enumerate}

\noindent STRING NONTERMINATION is closely related to STRING HALTING from~\cite{FP08} and is also PSPACE-complete. We now present a reduction
from STRING NONTERMINATION to the question of determining whether a system with deterministic and historyless reaction functions is convergent.

We construct a system with $n=t+1$ nodes. The node set is divided into $t$ ``\emph{index nodes}'' $0,\ldots,t-1$ and a single ``\emph{counter node}'' $x$. The action space of each index node is $\Gamma\cup\{halt\}$ and the action space of the counter node is $\{0,\ldots,t-1\}\times(\Gamma\cup\{halt\})$. Let $a=(a_0,\ldots,a_{t-1},a_x)$ be an action profile of the nodes, where $a_x=(j,\gamma)$ is the action of the counter node. We now define the deterministic and historyless reaction functions of the nodes:

\begin{itemize}

\item The reaction function of index node $i\in\{0,\ldots,t-1\}$, $f_i$: if $\gamma=halt$, then $f_i(a)=halt$; otherwise, if $j=i$, and $a_j\neq\gamma$, then $f_i(a)=\gamma$; otherwise, $f_i(a)=a_i$.

\item The reaction function of the counter node, $f_x$: if $\gamma=halt$, then $f_x(a)=a_x$; if $a_j=\gamma$, then $f_i(a)=((j+1)\ modulu\ t,g(a_0,\ldots,a_{t-1})$; otherwise $f_i(a)=a_x$.

\end{itemize}

The theorem now follows from the following claims that, in turn, follow from our construction:

\begin{claim}
$(halt,\ldots,halt)$ is the unique stable state of the system.
\end{claim}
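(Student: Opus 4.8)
The plan is to characterize the fixed points of the combined reaction function $g=(f_0,\dots,f_{t-1},f_x)$ directly, since for deterministic historyless dynamics the stable states are exactly the fixed points of $g$ (as recorded in Section~\ref{sec:model}). Throughout I write an action profile as $a=(a_0,\dots,a_{t-1},a_x)$ with the counter node's action $a_x=(j,\gamma)$, where $j\in\{0,\dots,t-1\}$ and $\gamma\in\Gamma\cup\{halt\}$.

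First I would check the easy direction: every profile in which each index node plays $halt$ and the counter node plays $(j,halt)$ (for an arbitrary $j$) is a fixed point. Indeed, since $\gamma=halt$, the first clause of $f_i$ gives $f_i(a)=halt=a_i$ for every index node $i$, and the first clause of $f_x$ gives $f_x(a)=a_x$. These are the ``all-$halt$'' states referred to in the claim; the counter index is immaterial to the reduction, so we identify them (strictly there are $t$ such profiles, one per counter index, and ``unique'' should be read in this sense).

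For the converse I would argue by contradiction: suppose $a$ is a fixed point that is not an all-$halt$ state, and split on $\gamma$. \textbf{Case $\gamma=halt$:} then the first clause of $f_i$ forces $f_i(a)=halt$ for every index node $i$, so $a_i=halt$ for all $i$ and $a$ is an all-$halt$ state, a contradiction. \textbf{Case $\gamma\neq halt$:} inspect index node $j$ (the index component of $a_x$). If $a_j\neq\gamma$, then the second clause of $f_j$ gives $f_j(a)=\gamma\neq a_j$, contradicting stability; hence $a_j=\gamma$. But then $a_j=\gamma$ triggers the middle clause of $f_x$, so $f_x(a)=\big((j+1)\bmod t,\ g(a_0,\dots,a_{t-1})\big)$, whose first coordinate is $(j+1)\bmod t\neq j$ (for $t\ge 2$), whence $f_x(a)\neq a_x$, again contradicting stability. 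Both cases are impossible, so every fixed point is an all-$halt$ state, and the claim follows.

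I do not expect a serious obstacle here: this is a finite case analysis over which clause of each reaction function is active. The only mild subtlety is bookkeeping for the counter node's composite action; in particular, the index-increment ``$(j+1)\bmod t$'' is exactly what is needed to rule out the spurious would-be fixed points in which the index currently pointed at already holds the value that $g$ would write there (so all index nodes are momentarily still, but the counter is not).
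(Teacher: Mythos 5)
Your proof is correct and is exactly the ``simple case-by-case analysis'' that the paper's own proof merely asserts without carrying out: you check that the all-$halt$ profiles are fixed points of $g=(f_0,\ldots,f_{t-1},f_x)$ and then split on whether $\gamma=halt$ to eliminate every other profile. Your two side remarks---that there are really $t$ all-$halt$ states, one per counter index $j$, so ``unique'' must be read up to that index, and that ruling out the $a_j=\gamma$ case needs $t\geq 2$ so that $(j+1)\bmod t\neq j$---are genuine refinements of details the paper glosses over, and neither affects the validity of the reduction.
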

\begin{proof}
Observe that $(halt,\ldots,halt)$ is indeed a stable state of the system. The uniqueness of this stable state is proven via a simple case-by-case analysis.
\end{proof}

\begin{claim}
If there exists an initial string $T=(T_0,\ldots,T_{t-1}$) for which the procedure does not terminate then there exists an initial state from which
the system does not converge to the stable state $(halt,\ldots,halt)$ regardless of the schedule chosen.
\end{claim}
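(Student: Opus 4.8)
The plan is to show that the asynchronous dynamics of the constructed system faithfully simulate the STRING NONTERMINATION procedure, so that a non-halting initial string gives rise to a non-converging execution. First I would fix a non-terminating initial string $T=(T_0,\ldots,T_{t-1})$ for $g$ and use it to define an initial state $s^{(0)}$ of the system: set each index node $i$ to action $T_i$, and set the counter node's action to $(0,T_0)$, i.e.\ $a_x = (0, T_0)$, reflecting that the procedure starts with counter value $i=0$ and that $T_0$ is already the ``current'' symbol at position $0$. The claim is that from this $s^{(0)}$ the system cannot converge to $(halt,\ldots,halt)$ under \emph{any} schedule.

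Next I would establish an invariant: along any execution starting from $s^{(0)}$, the symbol $halt$ never appears in any node's action. This is the crux and I would argue it by induction on time steps. The key observation is that, by construction, an index node $f_i$ outputs $halt$ only when the counter node's $\gamma$-component already equals $halt$, and the counter node $f_x$ outputs a $halt$-component only when $g(a_0,\ldots,a_{t-1})=halt$ \emph{and} the ``trigger'' condition $a_j=\gamma$ holds. So I need the sub-invariant that the action profile of the index nodes, read off together with the counter node's $(j,\gamma)$ pair, always corresponds to a legitimate intermediate configuration of the procedure: namely $\gamma$ equals the value $g$ would next write at position $j$ given the current index-node string, and the procedure has not yet halted because $T$ is non-terminating. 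Because $g$ never returns $halt$ on any string reachable during the (infinite) run on $T$, the trigger that would set a $halt$-component is never enabled, and hence no node ever takes action $halt$. Since $(halt,\ldots,halt)$ is the unique stable state (by the previous claim), the execution cannot converge to it; and because $halt$ never appears, it cannot be ``stuck'' at any other profile either --- I would note that whenever the index-node string would change under $g$, some activation of the relevant index node $i$ (with $j=i$) changes its action, and then an activation of the counter node advances $(j,\gamma)$, so the configuration genuinely keeps evolving, mirroring the infinite procedure.

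I expect the main obstacle to be the careful bookkeeping in the invariant: one must track not just the index-node actions but the precise relationship between $a_j$, $\gamma$, $j$, and the value $g$ would compute, and show this relationship is preserved by every possible single-step activation (index node alone, counter node alone, or any simultaneous subset), including ``idle'' activations where a node re-selects its current action. The delicate point is that asynchrony allows the counter node to fire before an index node has caught up, or an index node to fire when $j\neq i$ (a no-op), so I must verify that these out-of-order or redundant activations never create a spurious $a_j=\gamma$ match that would fire the $halt$-component prematurely, nor destroy the correspondence with the procedure. Once the invariant is nailed down, the conclusion --- no convergence from $s^{(0)}$ regardless of schedule --- is immediate, and in particular it holds for fair schedules, giving non-convergence of the system in the sense of Section~\ref{sec:model}. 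I would then remark that the converse direction (a terminating procedure from every initial string implies the system is convergent) together with the PSPACE-membership of simulating the dynamics completes the PSPACE-completeness, but that is handled by the remaining claims.
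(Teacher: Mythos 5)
Your overall strategy -- exhibit an explicit initial state and prove an invariant that the symbol $halt$ never appears because the reachable configurations track the (non-terminating) run of the procedure on $T$ -- is exactly the argument the paper intends; the paper's own proof is a single sentence that just names the initial state and leaves the bookkeeping to the reader. However, your choice of initial state contains a concrete error that breaks the simulation. You set the counter node's action to $(0,T_0)$, whereas the paper sets it to $(0,g(T))$. The counter's $\gamma$-component is supposed to hold the value that the procedure is \emph{about to write} at position $j$, and the trigger ``$a_j=\gamma$'' is the signal that this write has been completed so the counter may advance. With your initialization we have $a_0=T_0=\gamma$ at time zero, so the very first activation of the counter node fires the trigger without any write having occurred: the counter jumps to $(1,g(T))$ while the string is still $T$, and the value $g(T)$ then gets written at position $1$ instead of position $0$. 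From that point on the system simulates a computation whose sequence of strings differs from the procedure's run on $T$ (the real run writes $g(T)$ at position $0$ and then writes $g(g(T),T_1,\ldots,T_{t-1})$ at position $1$), so the hypothesis that the procedure does not terminate on $T$ says nothing about whether $g$ ever returns $halt$ on the strings your system actually visits. Note also that your own stated invariant (``$\gamma$ equals the value $g$ would next write at position $j$ given the current index-node string'') is already false at your initial state unless $T_0=g(T)$ by coincidence.

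The fix is small: initialize the counter node to $(0,g(T))$. Then the invariant you describe holds at time zero and is preserved by every activation (the delicate cases you correctly identify -- idle activations, index nodes with $j\neq i$, the counter firing only after $a_j$ has caught up to $\gamma$ -- all go through), $\gamma$ never becomes $halt$ because $g$ never returns $halt$ along the procedure's run on $T$, and hence no node ever adopts the action $halt$, so the system cannot converge to the unique stable state $(halt,\ldots,halt)$ under any schedule.
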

\begin{proof}
Consider the evolution of the system from the initial state in which the action of index node $i$ is $T_i$ and the action of the counter node is $(0,g(T))$.
\end{proof}

\begin{claim}
If there does not exist an initial string T for which the procedure does not terminate then the system is convergent.
\end{claim}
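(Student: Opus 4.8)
The plan is to prove the statement directly: assuming the procedure halts from every initial string, I will show that from every initial state and every fair schedule the dynamics converges (necessarily to the unique stable state $(halt,\ldots,halt)$ from the previous claim). The heart of the argument is that, after a short transient, the asynchronous dynamics faithfully simulates the procedure, with one step of the procedure corresponding to one activation of the index node named by the counter followed by one activation of the counter node.

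I would first record two structural observations, both immediate from the construction. \textbf{(Absorption.)} If at some time the counter node's action is $(j,halt)$, then $f_x$ keeps that action at every later activation and every index node outputs $halt$ whenever activated; hence by fairness every index node eventually has action $halt$, the action profile becomes constant, and the dynamics converges. \textbf{(Freezing.)} If the counter's action is $(j,\gamma)$ with $\gamma\neq halt$, then every index node $i\neq j$ outputs its current action when activated, index node $j$ outputs $\gamma$ when activated (and is unchanged afterwards), and the counter changes its action exactly when $a_j=\gamma$, moving then to $\bigl((j+1)\bmod t,\,g(a_0,\ldots,a_{t-1})\bigr)$. I also use the elementary fact that, since the procedure's index advances by one modulo $t$ at each step, halting from every configuration with index $0$ implies halting from every configuration $(i,T)$.

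Now fix an initial state and a fair schedule and suppose, toward a contradiction, that the second coordinate of the counter's action is never $halt$. By Freezing and fairness, index node $j$ is eventually activated and sets $a_j=\gamma$, after which the next activation of the counter must change its action; let $\tau$ be the first time this happens. At time $\tau$ the counter moves to $\bigl((j+1)\bmod t,\,g(T)\bigr)$ where $T$ is the current string of index-node actions, which is well-defined because a simultaneous activation of index node $j$ at time $\tau$ leaves $a_j=\gamma$ unchanged; hence the time-$\tau$ state is \emph{consistent}, i.e., the counter's stored value equals $g$ of the current string. I then argue by induction on ``rounds'' --- a round being the stretch up to and including the next change of the counter's action --- that from a consistent state $(T;(i,\gamma))$ with $\gamma=g(T)\neq halt$ the system moves, after finitely many activations (finite by fairness, with every intervening activation inert by Freezing), to the consistent state $\bigl(T';\,((i+1)\bmod t,\,g(T'))\bigr)$ with $T'=T[i\mapsto g(T)]$ --- exactly one step of the procedure applied to $(i,T)$. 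Iterating, the consistent states visited trace out the procedure's orbit from $(i_0,T)$; since that orbit reaches a configuration on which $g$ evaluates to $halt$, after finitely many rounds the counter's second coordinate becomes $halt$, a contradiction. Therefore the counter's second coordinate is eventually $halt$, and by Absorption the dynamics converges; as the initial state and schedule were arbitrary, the system is convergent.

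The step I expect to be the main obstacle is controlling the asynchrony at the ``seams'': an arbitrary initial state need not be consistent, a single time step may activate many nodes at once (including the counter together with the index node it names), and the schedule may insert arbitrarily many irrelevant activations between the two meaningful ones of a round. The Freezing observation is precisely what tames this, confining all activity within a round to index node $j$ and the counter, bounding index node $j$'s changes by one, and ruling out corruption of the string in the step where the counter moves; once this is pinned down, the simulation together with the halting hypothesis closes the argument routinely.
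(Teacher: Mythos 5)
Your proof is correct and rests on the same idea as the paper's, which disposes of the claim in one sentence by observing the contrapositive (a non-convergent run yields a non-halting initial string); you prove the direct implication by making the simulation correspondence explicit. Your additional care with the ``seams''---the inconsistent initial state, simultaneous activations, and the fact that the simulated orbit starts at an arbitrary index $i_0$ rather than $0$ (handled by your rotation observation)---fills in details the paper leaves implicit, and all of these steps check out against the construction.
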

\begin{proof}
Observe that if there is an initial state $a=(a_0,\ldots,a_{t-1},a_x)$ and a fair schedule for which the system does not converge to the unique
stable state then the procedure does not halt for the initial string $T=(a_0,\ldots,a_{t-1})$.
\end{proof}
\end{proof}

Proving the above PSPACE-completeness result for the case self-independent reaction functions seems challenging.

\begin{problem}
Prove that determining if a system with $n$ nodes, each with a deterministic self-independent and historyless reaction function, is convergent is PSPACE-complete.
\end{problem}

\section{Some Basic Observations Regarding No-Regret Dynamics}\label{apx:regret}

Regret minimization is fundamental to learning theory. The basic setting is as follows. There is a space of $m$ actions (\emph{e.g.}, possible routes to work), which we identify with the set $[m]=\{1,\ldots,m\}$. In each time step $t\in\{1,\ldots\}$, an adversary selects a profit function $p_t:[m]\rightarrow [0,1]$ (\emph{e.g.}, how fast traffic is flowing along each route), and the (randomized) algorithm chooses a distribution $D_t$ over the elements in $[m]$. When choosing $D_t$ the algorithm can only base its decision on the profit functions $p_1,\ldots,p_{t-1}$, and not on $p_t$ (that is revealed only after the algorithm makes its decision). The algorithm's gain at time $t$ is $g_t=\Sigma_{j\in [m]}\ D_t(j)p_t(j)$, and its accumulated gain at time $t$ is $\Sigma_{i=1}^t\ g_t$ . Regret analysis is useful for designing adaptive algorithms that fair well in such uncertain environments. The motivation behind regret analysis is ensuring that, over time, the algorithm performs at least as well in retrospect as some alternative ``simple'' algorithm.

We now informally present the three main notions of regret (see~\cite{BM07} for a thorough explanation): (1) \emph{External regret} compares the algorithm's performance to that of simple algorithms that select the exact same action in each time step (\emph{e.g.}, ``you should have always taken Broadway, and never chosen other routes''). (2) \emph{Internal regret} and \emph{swap regret} analysis compares
the gain from the sequence of actions actually chosen to that derived from replacing every occurrence of an action $i$ with another action $j$ (\emph{e.g.}, ``every time you chose Broadway you should have taken $7$th Avenue instead). While internal regret analysis allows only \emph{one} action to be replaced by another, swap regret analysis considers all mappings from $[m]$ to $[m]$. The algorithm has \emph{no (external/internal/swap) regret} if the gap between the algorithm's gain and the gain from the best alternative policy allowed vanishes with time.

Regret minimization has strong connections to game-theoretic solution concepts. If each player in a repeated game executes a no-regret algorithm when selecting strategies, then convergence to an equilibrium is guaranteed in a variety of interesting contexts. The notion of convergence, and the kind of equilibrium reached, vary, and are dependent on the restrictions imposed on the game and on the type of regret being minimized (\emph{e.g.}, in zero-sum games, no-external-regret algorithms are guaranteed to approach or exceed the minimax value of the game; in general games, if all players minimize swap regret, then the empirical distribution of joint players' actions converges to a correlated equilibrium, \emph{etc.}). (See~\cite{BM07} and references therein). Importantly, these results are all proven within a model of interaction in which each player selects a strategy in each and every time step.

We make the following simple observation. Consider a model in which the adversary not only chooses the profit functions
but also has the power not to allow the algorithm to select a new distribution over actions in some time steps. That is, the adversary also selects a schedule $\sigma$ such that $\forall t\in\N_{+}$, $\sigma(t)\in\{0,1\}$, where $0$ and $1$ indicate whether the algorithm is not activated, or activated, respectively. We restrict the schedule to be $r$-fair, in the sense that the schedule chosen must be such that the algorithm is activated at least once in every $r$ consecutive time steps. If the algorithm is activated at time $t$ and not activated again until time $t+\beta$ then it holds that $\forall s\in\{t+1,\ldots,t+\beta-1\}$, $D_s=D_t$ (the algorithm cannot change its probability distribution over actions while not activated). We observe that if an algorithm has no regret in the above setting (for all three notions of regret), then it has no regret in this setting as well. To see this, simply observe that if we regard each batch of time steps in which the algorithms is not activated as one ``meta time step'', then this new setting is equivalent to the traditional setting (with $p_t:[m]\rightarrow [0,r]$ for all $t\in\N_{+}$).

This observation, while simple, is not uninteresting, as it implies that \emph{all} regret-based results for repeated games continue to hold even if players' order of activation is \emph{asynchronous} (see Section~\ref{sec:model} for a formal exposition of asynchronous interaction), so long as the schedule of player activations is $r$-fair for some $r\in N_{+}$. We mention two implications of this observation.

\begin{obs}
When all players in a zero-sum game use no-external-regret algorithms then approaching or exceeding the minimax value of the game is guaranteed.
\end{obs}

\begin{obs}
When all players in a (general) game use no-swap-regret algorithms the empirical distribution of joint players' actions converges to a correlated equilibrium of the game.
\end{obs}

\begin{problem}
Give examples of repeated games for which there exists a schedule of player activations that is not $r$-fair for any $r\in N_{+}$ for which regret-minimizing dynamics do not converge to an equilibrium (for different notions of regret/convergence/equilibria).
\end{problem}

\begin{problem}
When is convergence of no-regret dynamics to an equilibrium guaranteed (for different notions of regret/convergence/equilibria) for all $r$-fair schedules for non-fixed $r$'s, that is, if when $r$ is a function of $t$?
\end{problem}

\section{An Axiomatic Approach}\label{ap:axiomatic}

We now use (a slight variation of) the framework of Taubenfeld, which he used to study resilient consensus protocols~\cite{taubenfeld91}, to prove Thm.~\ref{thm:historyless}.  We first (Sec.~\ref{ssec:erp}) define \emph{runs} as sequences of \emph{events}; unlike Taubenfeld, we allow infinite runs.  A protocol is then a collection of runs (which must satisfy some natural conditions like closure under taking prefixes).  We then define colorings of runs (which correspond to outcomes that can be reached by extending a run in various ways) and define the \iod\ property.

The proof of Thm.~\ref{thm:historyless} proceeds in two steps.  First, we show that \emph{any} protocol that satisfies \iod\ has some (fair, as formalized below), non-terminating activation sequence.  We then show that protocols that satisfy the hypotheses of Thm.~\ref{thm:historyless} also satisfy \iod.

\subsection{Proof Sketch}

\begin{proof}[Proof Sketch]
The proof follows the axiomatic approach of Taubenfeld~\cite{taubenfeld91} in defining asynchronous protocols in which states are colored by sets of colors; the set of colors assigned to a state must be a superset of the set of colors assigned to any state that is reachable (in the protocol) from it.  We then show that any such protocol that satisfies a certain pair of properties (which we call \emph{Independence of Decisions} or \emph{IoD}) and that has a polychromatic state must have a non-terminating fair run in which all states are polychromatic.

For protocols with $1$-recall, self-independence, and stationarity, we consider (in order to reach a contradiction) protocols that are guaranteed to converge.  Each starting state is thus guaranteed to reach only stable states; we then color each state according to the outcomes that are reachable from that state.   We show that, under this coloring, such protocols satisfy IoD and that, as in consensus protocols, the existence of multiple stable states implies the existence of a polychromatic state.  The non-terminating, polychromatic, fair run that is guaranteed to exist is, in the context, exactly the non-convergent protocol run claimed by the theorem statement.  We then show that this may be extended to non-stationary protocols with $k$-recall (for $k > 1$).
\end{proof}

\subsection{Events, Runs, and Protocols}\label{ssec:erp}

Events are the atomic actions that are used to build runs of a protocol.  Each event is associated with one or more principals; these should be thought of as the principals who might be affected by the event (\eg, as sender or receiver of a message), with the other principals unable to see the event.  We start with the following definition.
\begin{definition}[Events and runs]
There is a set $E$ whose elements are called \emph{events}; we assume a finite set of possible events (although there will be no restrictions on how often any event may occur).  There is a set $\P$ of \emph{principals}; each event has an associated set $S\subseteq\P$, and if $S$ is the set associated to $e\in E$, we will write $e_S$.

There is a set $\Runs$ whose elements are called \emph{runs}; each run is a (possibly infinite) sequence of events.  We say that event $e$ is \emph{enabled} at run $\run{x}$ if the concatenation $\ex{\run{x}}{e}$ (\ie, the sequence of events that is $\run{x}$ followed by the single event $e$) is also a run.  (We will require that $\Runs$ be prefix-closed in the protocols we consider below.)
\end{definition}

The definition of a protocol will also make use of a couple types of relationship between runs; our intuition for these relationships continues to view $e_P$ as meaning that event $e$ affects the set $P$ of principals.  From this intuitive perspective, two runs are equivalent with respect to a set $S$ of principals exactly when their respective subsequences that affect the principals in $S$ are identical.  We also say that one run includes another whenever, from the perspective of every principal (\ie, restricting to the events that affect that principal), the included run is a prefix of the including run.  Note that this does not mean that the sequence of events in the included run is a prefix of the sequence of events in the including run---events that affect disjoint sets of principals can be reordered without affecting the inclusion relationship.
\begin{definition}[Run equivalence and inclusion]
For a run $\run{x}$ and $S\subseteq \P$, we let $\run{x}_S$ denote the subsequence (preserving order and multiplicity) of events $e_P$ in $\run{x}$ for which $P\cap S\neq\emptyset$.  We say that $\run{x}$ and $\run{y}$ are \emph{equivalent with respect to $S$}, and we write $\run{x}[S]\run{y}$, if $\run{x}_S=\run{y}_S$.  We say that $\run{y}$ \emph{includes} $\run{x}$ if for every node $i$, the restriction of $\run{x}$ to those events $e_P$ with $i\in P$ is a prefix of the restriction of $\run{y}$ to such events.
\end{definition}
Our definitions of $\run{x}_S$ and $\run{x}[S]\run{y}$ generalize definitions given by Taubenfeld~\cite{taubenfeld91} for $|S|=1$---allowing us to consider events that are seen by multiple principals---but other than this and the allowance of infinite runs, the definitions we use in this section are the ones he used.  Importantly, however, we do not use the resilience property that Taubenfeld used.

Finally, we have the formal definition of an asynchronous protocol.  This is a collection of runs that is closed under taking prefixes and only allows for finitely many (possibly $0$) choices of a next event to extend the run.  It also satisfies the property ($P_2$ below) that, if a run can be extended by an event that affects exactly the set $S$ of principals, then any run that includes this run and that is equivalent to the first run with respect to $S$ (so that only principals not in $S$ see events that they don't see in the first run) can also be extended by the same event.
\begin{definition}[Asynchronous protocol]
An \emph{asynchronous protocol} (or just a \emph{protocol}) is a collection of runs that satisfies the following three conditions.
\begin{description}
  \item[$P_1$] Every prefix of a run is a run.
  \item[$P_2$] Let $\ex{\run{x}}{e_S}$ and $\run{y}$ be runs.  If $\run{y}$ includes $\run{x}$, and if $\run{x}[S]\run{y}$, then $\ex{\run{y}}{e_S}$ is also a run.
  \item[$P_3$] Only finitely many events are enabled at a run.
\end{description}
\end{definition}

\subsection{Fairness, Coloring, and Decisions}

We start by recalling the definition of a fair sequence~\cite{taubenfeld91}; as usual, we are concerned with the behavior of fair runs.  We also introduce the notion of a fair extension, which we will use to construct fair infinite runs.
\begin{definition}[Fair sequence, fair extension]\label{def:fairext}
We define a \emph{fair sequence} to be a sequence of events such that: every finite prefix of the sequence is a run; and, if the sequence is finite, then no event is enabled at the sequence, while if the sequence is infinite, then every event that is enabled at all but finitely many prefixes of the sequence appears infinitely often in the sequence.  We define a \emph{fair extension} of a (not necessarily fair) sequence $\run{x}$ to be a finite sequence $e_1,e_2,\ldots,e_k$ of events such that $e_1$ is enabled at $\run{x}$, $e_2$ is enabled at $\ex{\run{x}}{e_1}$, \etc.
\end{definition}

We also assign a set of ``colors'' to each sequence of events subject to the conditions below.  As usual, the colors assigned to a sequence will correspond to the possible protocol outcomes that might be reached by extending the sequence.
\begin{definition}[Asynchronous, $C$-chromatic protocol]
Given a set $C$ (called the set of \emph{colors}), we will assign sets of colors to sequences; this assignment may be a partial function.  For a set $C$, we will say that a protocol is \emph{$C$-chromatic} if it satisfies the following properties.
\begin{description}
  \item[$C_1$] For each $c\in C$, there is a protocol run of color $\{c\}$.
  \item[$C_2$] For each protocol run $\run{x}$ of color $C'\subseteq C$, and for each $c\in C'$, there is an extension of $\run{x}$ that has color $\{c\}$.
  \item[$C_3$] If $\run{y}$ includes $\run{x}$ and $\run{x}$ has color $C'$, then the color of $\run{y}$ is a subset of $C'$.
\end{description}
We say that a fair sequence is \emph{polychromatic} if the set of colors assigned to it has more than one element.

\newcommand{\decision}{\ensuremath{\mathsf{D}}}

Finally, a $C$-chromatic protocol is called a \emph{decision protocol} if it also satisfies the following property:
\begin{description}
  \item[\decision] Every fair sequence has a finite monochromatic prefix, \ie, a prefix whose color is $\{c\}$ for some $c\in C$.
\end{description}
\end{definition}

\subsection{Independence of Decisions (\iod)}

We turn now to the key (two-part) condition that we use to prove our impossibility results.
\begin{definition}[Independence of Decisions (\iod)]
A protocol satisfies \emph{Independence of Decisions} (\iod) if, whenever
\begin{itemize}
  \item a run $\run{x}$ is polychromatic and
  \item there is some event $e$ is enabled at $\run{x}$ and $\ex{\run{x}}{e}$ is monochromatic of color $\{c\}$,
\end{itemize}
then
\begin{enumerate}
  \item\label{cond:iod:e'} for every $e'\neq e$ that is enabled at $\run{x}$, the color of $\ex{\run{x}}{e'}$ contains $c$, and
  \item\label{cond:iod:e'e} for every $e'\neq e$ that is enabled at $\run{x}$, if $\ex{\ex{\run{x}}{e'}}{e}$ is monochromatic, then its color is also $\{c\}$.
\end{enumerate}
\end{definition}

Figure~\ref{fig:iod} illustrates the two conditions that form \iod.  Both parts of the figure include the polychromatic run $\run{x}$ that can be extended to $\ex{\run{x}}{e}$ with monochromatic color $\{c\}$; the color of $\run{x}$ necessarily includes $c$.  The left part of the figure illustrates condition~\ref{cond:iod:e'}, and the right part of the figure illustrates condition~\ref{cond:iod:e'e}.  The dashed arrow indicates a sequence of possibly many events, while the solid arrows indicate single events. The labels on a node in the figure indicate what is assumed/required about the set that colors the node.
\begin{figure}[htbp]
\begin{center}
\includegraphics[height=4cm]{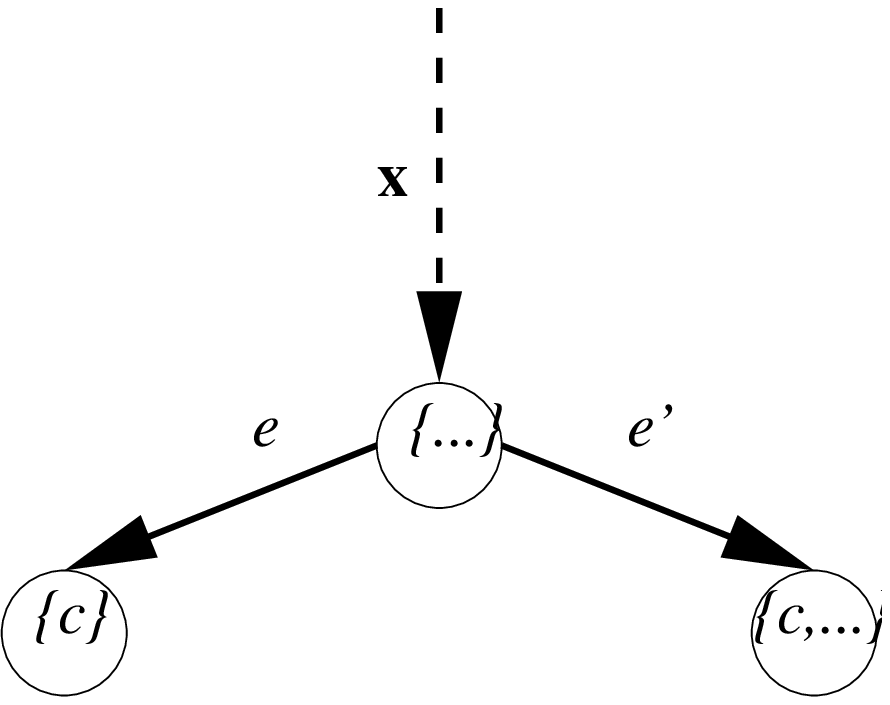}\qquad\qquad\qquad\qquad\qquad\includegraphics[height=4cm]{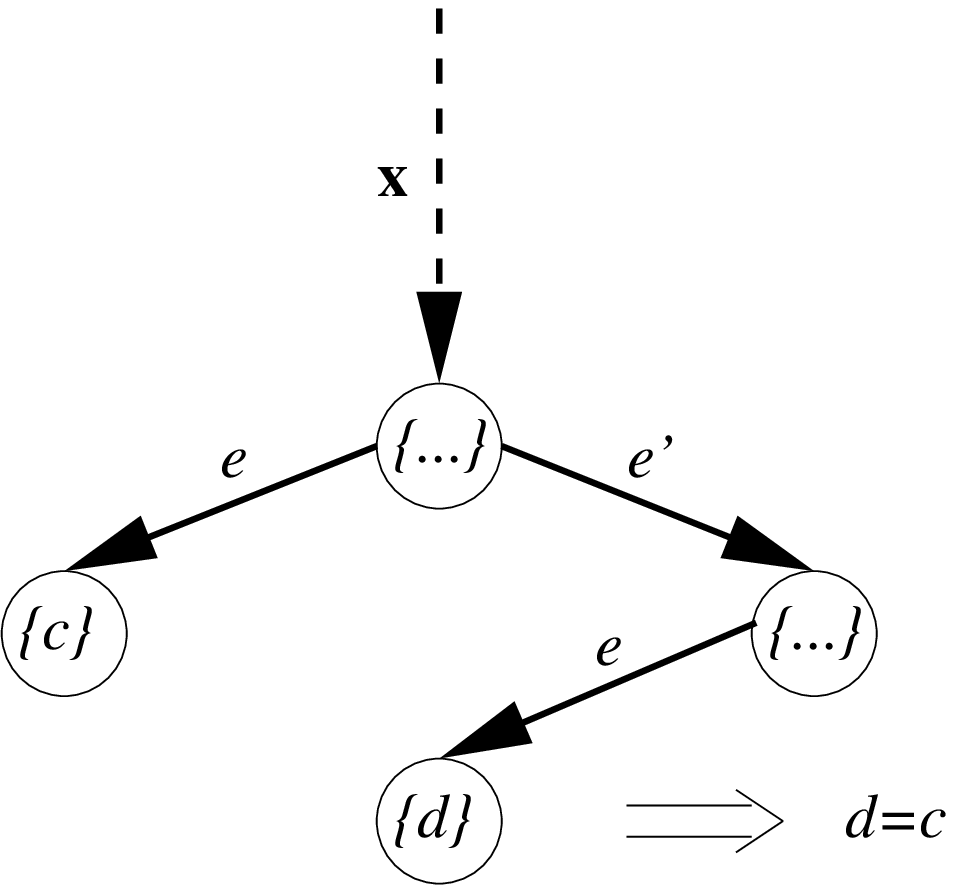}
\end{center}
\caption{Illustration of the two conditions of \iod.}\label{fig:iod}
\end{figure}

Condition~\ref{cond:iod:e'} essentially says that, if an event $e$ decides the outcome of the protocol, then no other event can rule out the outcome that $e$ produced.  The name ``Independence of Decisions'' derives from condition~\ref{cond:iod:e'e}, which essentially says that, if event $e$ decides the outcome of the protocol both before and after event $e'$, then the decision that is made is independent of whether $e'$ happens immediately before or after $e$.

In working with \iod-satisfying protocols, the following lemma will be useful.
\begin{lemma}\label{lem:iodcolor}
If \iod\ holds, then for any two events $e$ and $e'$ that are enabled at a run $\run{x}$, if both $\ex{\run{x}}{e}$ and $\ex{\run{x}}{e'}$ are monochromatic, then those colors are the same.
\end{lemma}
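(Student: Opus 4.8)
The idea is to split on the color of the run \run{x} itself and let two different axioms do the work in the two cases. First I would dispose of the trivial possibility $e=e'$, and otherwise write $\{c\}$ and $\{c'\}$ for the monochromatic colors of $\ex{\run{x}}{e}$ and $\ex{\run{x}}{e'}$. Since $e$ and $e'$ are each enabled at \run{x}, both $\ex{\run{x}}{e}$ and $\ex{\run{x}}{e'}$ include \run{x} (appending a single enabled event only lengthens, per principal, the subsequence of events seen by that principal). So whenever \run{x} is itself colored, $C_3$ applies to both successors.

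\textbf{Case 1: \run{x} is monochromatic,} say of color $\{d\}$. Then $C_3$ forces $\{c\}\subseteq\{d\}$ and $\{c'\}\subseteq\{d\}$, hence $c=d=c'$. No appeal to \iod\ is needed here. \textbf{Case 2: \run{x} is polychromatic.} Now the hypotheses of \iod\ are met using the event $e$: \run{x} is polychromatic, $e$ is enabled at \run{x}, and $\ex{\run{x}}{e}$ is monochromatic of color $\{c\}$ (note that $C_3$ already guarantees $c$ lies in the color of \run{x}, consistent with the figure). Condition~\ref{cond:iod:e'} of \iod, applied with the event $e'$ (which is distinct from $e$ and enabled at \run{x}), says that the color of $\ex{\run{x}}{e'}$ contains $c$; since that color is $\{c'\}$, we get $c=c'$. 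Condition~\ref{cond:iod:e'e} is not used.

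\textbf{Main obstacle.} There is no real combinatorial difficulty; the only thing to be careful about is matching the right axiom to the right case ($C_3$ for a monochromatic \run{x}, part~\ref{cond:iod:e'} of \iod\ for a polychromatic \run{x}) and handling the degenerate situation where the (partial) coloring is simply undefined on \run{x}. In the settings where this lemma is invoked the coloring of runs is total—each run is colored by the set of outcomes reachable from it—so that situation does not arise; if one wants the abstract statement to be airtight, one either assumes the coloring is defined on \run{x} or notes that an empty color for \run{x} is already excluded by $C_3$ together with the existence of the monochromatic extension $\ex{\run{x}}{e}$. So the proof amounts to just the two short cases above.
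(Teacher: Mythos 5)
Your proof is correct and its core step---applying condition~\ref{cond:iod:e'} of \iod\ to conclude that the color of $\ex{\run{x}}{e'}$ contains the singleton color of $\ex{\run{x}}{e}$---is exactly the paper's one-line argument. The only differences are that you also dispose of $e=e'$ and handle a monochromatic $\run{x}$ separately via $C_3$; the paper leaves these implicit (the lemma is only ever invoked at polychromatic runs), so your version is just a slightly more careful rendering of the same approach.
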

\begin{proof}
By \iod, the color of $\ex{\run{x}}{e'}$ must contain the color of $\ex{\run{x}}{e}$, and both of these sets are singletons.
\end{proof}

\subsection{\iod-Satisfying Protocols Don't Always Converge}\label{ssec:iod-conv}

To show that \iod-satisfying protocols don't always converge, we proceed in two steps: first, we show (Lemma~\ref{lem:extend}) that a polychromatic sequence can be fairly extended (in the sense of \ldots) to another polychromatic sequence; second, we use that lemma to show (Thm.~\ref{thm:gen}) \ldots.
\begin{lemma}[The Fair-Extension Lemma]\label{lem:extend}
In a polychromatic decision protocol that satisfies \iod, if a run $\run{x}$ is polychromatic, then $\run{x}$ can be extended by a fair extension to another polychromatic run.
\end{lemma}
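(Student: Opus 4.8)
The plan is to run a valency argument in the style of Fischer--Lynch--Paterson, with the set of colors of a run playing the role of the valency and with the two parts of \iod\ substituting for the commutativity/locality facts that FLP uses about message-passing steps. First I would record a basic fact, call it ($\star$): \emph{every polychromatic run has a single-event extension that is again polychromatic.} A polychromatic run $\run{x}$ is not terminal --- were it terminal it would be a fair sequence, hence by $\mathsf{D}$ it would have a monochromatic prefix, and then $C_3$ would make $\run{x}$ itself monochromatic --- so some event is enabled at it. If every single-event extension of $\run{x}$ were monochromatic, Lemma~\ref{lem:iodcolor} would force all of them to carry one and the same color $\{c\}$; then $C_3$ would give every proper extension of $\run{x}$ color $\subseteq\{c\}$, contradicting (via $C_2$, applied to the other color $c'$ in the color set of $\run{x}$) the existence of an extension of $\run{x}$ of color $\{c'\}$. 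This already proves the lemma as stated, via the length-one extension $\ex{\run{x}}{e}$.

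For the way the lemma is used in Theorem~\ref{thm:gen} --- iterate it to build a \emph{fair}, non-terminating run \emph{all} of whose prefixes are polychromatic --- I would prove the sharper statement that the fair extension can be made to end with any prescribed enabled event $e_S$ (so that cycling through enabled events across iterations attains fairness), each of its steps preserving polychromaticity. The core claim is: \emph{if $\run{x}$ is polychromatic and $e_S$ is enabled at $\run{x}$, there is a fair extension $\run{E}$ of $\run{x}$ using no event whose principal set meets $S$ such that $\ex{\ex{\run{x}}{\run{E}}}{e_S}$ is polychromatic}; iterated $P_2$ keeps $e_S$ enabled along any such $S$-avoiding extension, and interleaving this with ($\star$) between checkpoints gives the sharper statement. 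Suppose the core claim fails, so for every $S$-avoiding fair extension the run $\ex{\run{z}}{e_S}$, with $\run{z}=\ex{\run{x}}{\run{E}}$, is monochromatic. Step~A: all these colors equal the color $\{c_0\}$ of $\ex{\run{x}}{e_S}$ --- by induction on the length of $\run{E}$, the inductive step (with $\run{z}=\ex{\run{z}'}{e'}$, $e'$ avoiding $S$, hence $e'\neq e_S$) applying \iod\ condition~\ref{cond:iod:e'e} at $\run{z}'$ when $\run{z}'$ is polychromatic and $C_3$ when it is monochromatic. Step~B: contradict polychromaticity of $\run{x}$ by picking $c_1\neq c_0$ in its color set, a proper extension $\run{y}$ of color $\{c_1\}$ (by $C_2$) with a witnessing event-path $\run{x}=\run{v}_0,\ldots,\run{v}_m=\run{y}$, and showing $c_0$ never drops out of the color set along this path --- contradicting $\mathrm{col}(\run{v}_m)=\{c_1\}$ --- by maintaining the invariant ``some event enabled at $\run{v}_j$ decides $\{c_0\}$'' (seeded at $\run{v}_0$ by $e_S$), which lets \iod\ condition~\ref{cond:iod:e'} push $c_0$ from $\run{v}_j$ to $\run{v}_{j+1}$.

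The hard part is Step~B, and within it the case of a path-step whose event shares principals with $S$: after such a step $e_S$ itself may cease to be enabled, so the deciding role cannot simply stay with $e_S$. This is exactly the ``same-processor'' case of the FLP argument, and it is precisely \iod\ condition~\ref{cond:iod:e'e} --- ``the decision is independent of an intervening event'' --- that is engineered to rescue the invariant: it passes the deciding role to the surrogate event it supplies, whose color is pinned to $\{c_0\}$ by Step~A together with condition~\ref{cond:iod:e'e}; and in the one subcase where that surrogate turns out polychromatic rather than monochromatic, one has in fact already exhibited the desired polychromatic fair extension and is done. I expect the delicate bookkeeping to be in verifying that after every $S$-touching step we either still have an event deciding $c_0$ or have already produced the sought extension; everything else --- ($\star$), Step~A, and the $S$-avoiding part of Step~B --- is routine from $C_1$--$C_3$, $\mathsf{D}$, \iod, and Lemma~\ref{lem:iodcolor}.
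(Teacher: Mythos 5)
Your observation ($\star$) is correct and matches the first case of the paper's proof, but it does not prove the lemma ``as stated'': although Definition~\ref{def:fairext} literally only requires each event of the extension to be enabled where it is applied, the paper's own proof (and the use of the lemma in Theorem~\ref{thm:gen}) make clear that a \emph{fair} extension must contain every event that is enabled at every prefix of the extension, and a length-one extension generally starves such events. You rightly sense that something stronger is needed, but your proof of the stronger core claim has a genuine gap in Step~B. Your invariant is that some event enabled at $\run{v}_j$ decides $\{c_0\}$, and you hope that \iod\ condition~\ref{cond:iod:e'e} will hand you a ``surrogate'' deciding event at $\run{v}_{j+1}$ once the path touches $S$. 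It will not: condition~\ref{cond:iod:e'e} is purely conditional---it constrains the color of $\ex{\ex{\run{v}_j}{e'}}{e_S}$ only \emph{if} $e_S$ is still enabled after $e'$ and only \emph{if} that run is monochromatic---and in this framework enabledness is not persistent (only $P_2$, which requires $[S]$-equivalence, is available), so after an $S$-touching step you may be left with no deciding event at all. Moreover, in the subcase where $\ex{\ex{\run{v}_j}{e'}}{e_S}$ turns out polychromatic you declare victory, but that run is not of the form $\ex{\ex{\run{x}}{\run{E}}}{e_S}$ with $\run{E}$ $S$-avoiding, so it neither contradicts the negation of your core claim nor yields a fair extension. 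The FLP argument you are transplanting closes exactly this hole with the resilience axiom (a deciding run avoiding the relevant process); in this paper that axiom appears only as $\mathsf{Res}$ in the consensus application of Appendix~\ref{sec:rdp}, and it is deliberately absent from the hypotheses of Lemma~\ref{lem:extend}, which must be proved from \iod\ alone.

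The paper closes the hole with an extremal device instead: assuming $\run{x}$ has no fair polychromatic extension, it picks a polychromatic extension $\run{y}$ using a maximum number of distinct events and, subject to that, starving a minimum number of events enabled throughout. That choice is what forces the starved event $e$ to remain enabled, and to keep yielding a monochromatic run of one fixed color $\{c\}$, at \emph{every} step of the auxiliary walk toward a $\{d\}$-colored run---precisely the two properties you cannot establish for $e_S$---after which the contradiction falls out of Lemma~\ref{lem:iodcolor}. Your round-robin architecture for assembling fairness is fine in principle, but to prove your core claim you would still need this extremal argument or an equivalent; as written, Step~B does not go through.
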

\begin{proof}
Assume that, for some $C'$, there is a run $\run{x}$ of color $C'$ that cannot be fairly extended to another polychromatic run.  Because $|C'|>1$, there must be some event that is enabled at $\run{x}$; if not, we would contradict $\mathsf{D}$.  Figure~\ref{fig:proof:extend} illustrates this (and the arguments in the rest of the proof below).

\begin{figure}[htbp]
\begin{center}
\includegraphics[height=2cm]{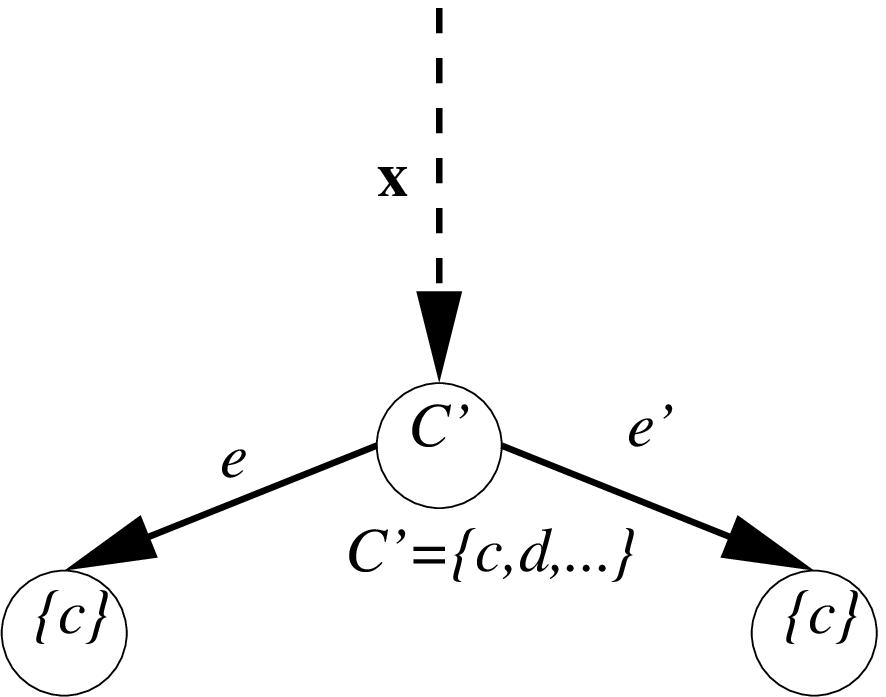}\qquad\qquad\qquad\qquad\qquad\includegraphics[height=3cm]{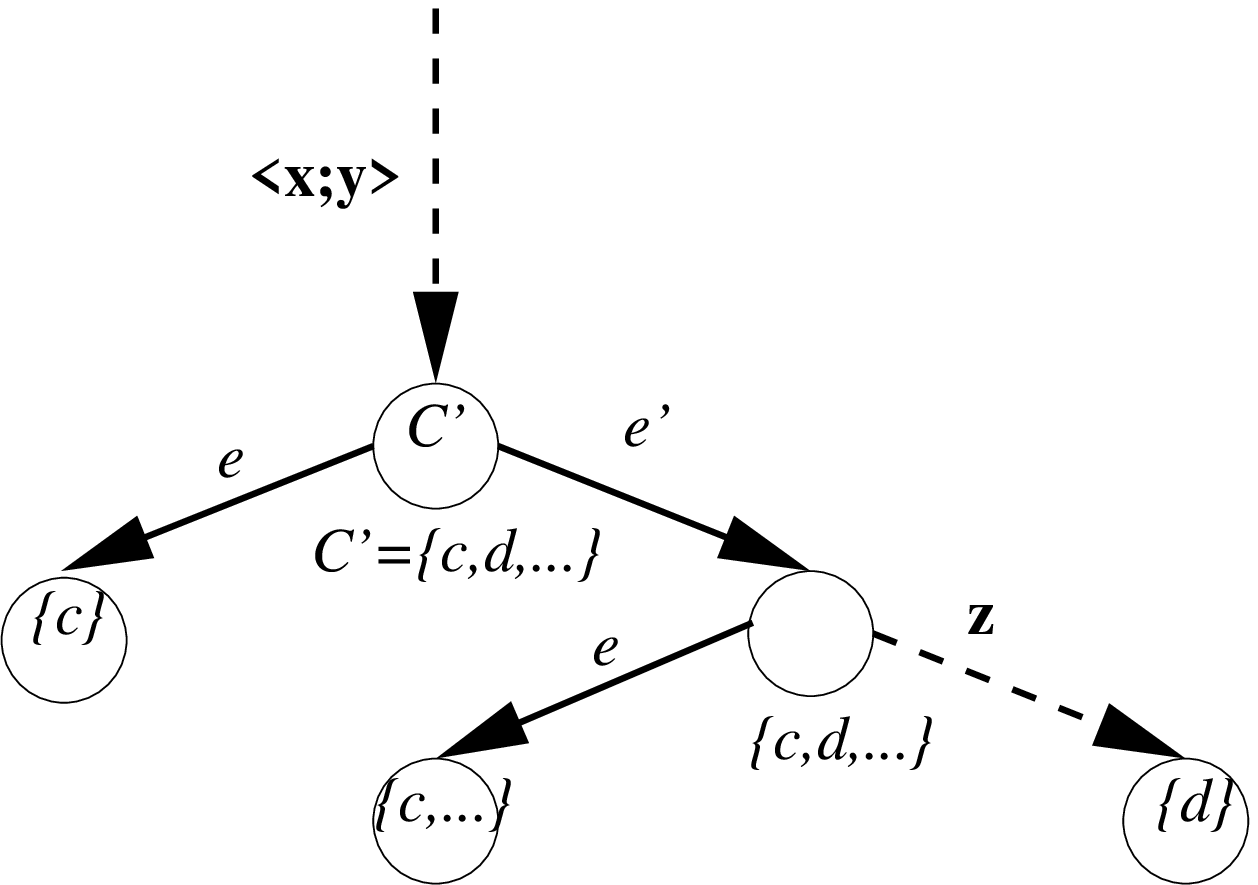}\\ \ \\ \ \\
\includegraphics[height=5cm]{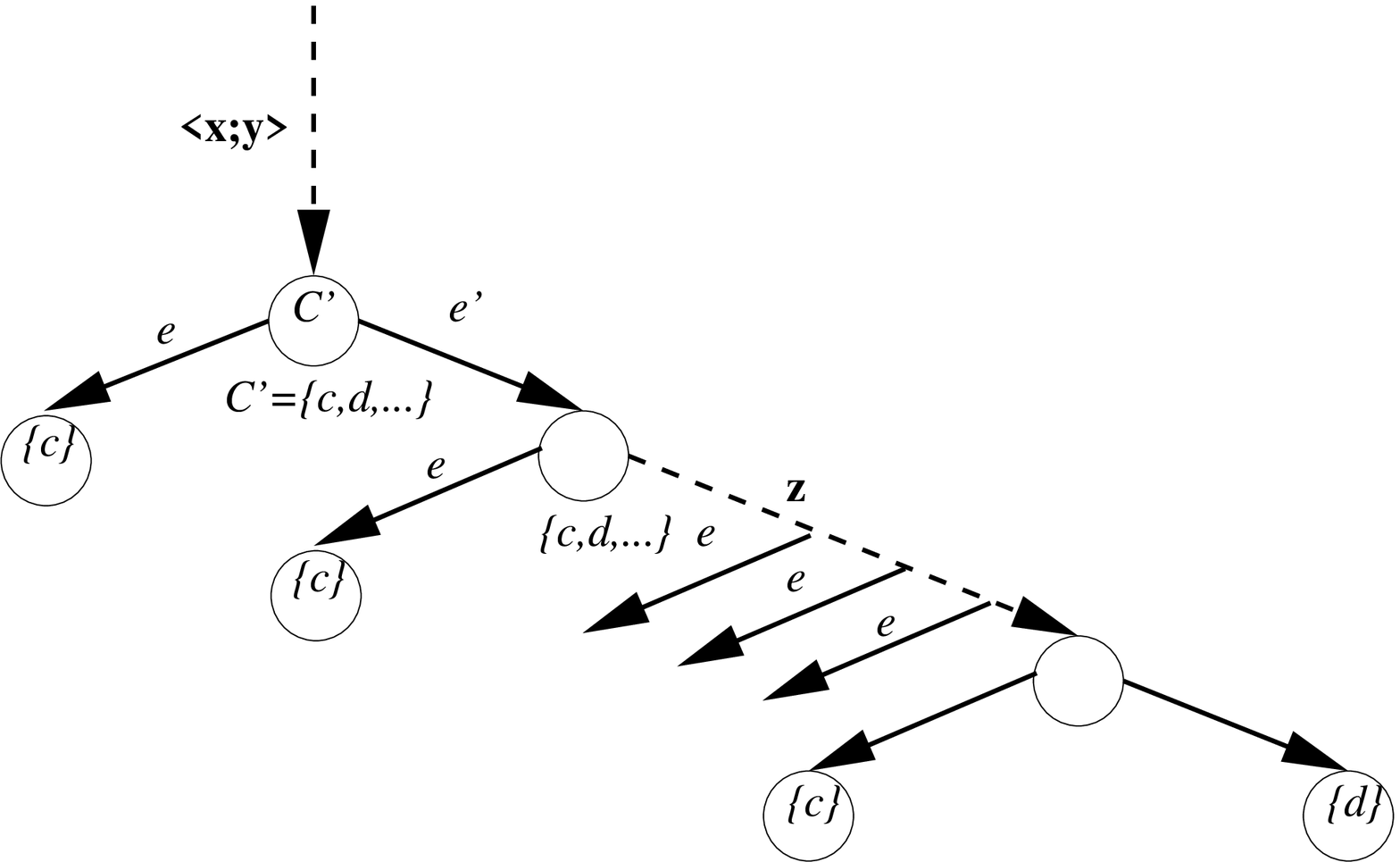}
\end{center}
\caption{Illustration of proof of Lem.~\ref{lem:extend}.}\label{fig:proof:extend}
\end{figure}

Consider the extensions of $\run{x}$ that use as many distinct events as possible and that are polychromatic, and pick one of these $\run{y}$ that minimizes the number of events that are enabled at every prefix of $\run{y}$ (after $\run{x}$ has already been executed) but that do not appear in $\run{y}$.  If $\run{y}$ contains no events (illustrated in the top left of Fig.~\ref{fig:proof:extend}), then every event $e$ that is enabled at $\run{x}$ is such that $\ex{x}{e}$ is monochromatic.  By Lemma~\ref{lem:iodcolor}, these singletons must all be the same color $\{c\}$; however, this means that for $c'\in C'\setminus \{c\}\neq\emptyset$, $\run{x}$ does not have any extensions whose color is $c'$, contradicting $C_2$.

If $\run{y}$ contains one or more events (illustrated in the top right and bottom of Fig.~\ref{fig:proof:extend}), then (because it is not a fair extension of $\run{x}$) there is at least one event $e$ that is enabled everywhere in the extension, including at $\ex{\run{x}}{\run{y}}$, but that does not appear anywhere in $\run{y}$.  Because $\run{y}$ was chosen instead of $\ex{\run{y}}{e}$ (or another extension with the same number of distinct events), the color of $\ex{\ex{\run{x}}{\run{y}}}{e}$ must be a singleton $\{c\}$.  Because $\ex{\run{x}}{\run{y}}$ is polychromatic, it has some extension $\run{z}$ that is (eventually) monochromatic with color $\{d\}\neq\{c\}$; let $e'$ be the first event in this extension.  Because \iod\ is satisfied, the color of $\ex{\ex{\run{x}}{\run{y}}}{e'}$ also contains $c$ and is thus polychromatic.  The event $e$ is again enabled here (else $\ex{\ex{\run{x}}{\run{y}}}{e'}$ would have been chosen instead of $\run{y}$).  If $\ex{\ex{\ex{\run{x}}{\run{y}}}{e'}}{e}$ is not monochromatic (top right of Fig.~\ref{fig:proof:extend}), then it is a polychromatic extension of $\run{x}$ that uses more distinct events than does $\run{y}$, a contradiction.  If $\ex{\ex{\ex{\run{x}}{\run{y}}}{e'}}{e}$ is monochromatic (bottom of Fig.~\ref{fig:proof:extend}), then by \iod\ it has color $\{c\}$.  We may then inductively move along the extension $\run{z}$; after each additional event from $\run{z}$ is appended to the run, the resulting run is polychromatic (its color set must include $d$, but if it is monochromatic it must have color $\{c\}$) and again enables $e$ (by our choice of $\run{y}$).  Again by our choice of $\run{y}$, appending $e$ to this run must produce a monochromatic run, which (by \iod) must have color $\{c\}$.  Proceeding along $\run{z}$, we must then eventually reach a polychromatic run at which $e$ is enabled (and produces a monochromatic run of color $\{c\}$) and which also enables a different event that yields a monochromatic run of color $\{d\}$.  This contradicts Lem.~\ref{lem:iodcolor}.
\end{proof}

\begin{theorem}\label{thm:gen}
Any \iod-satisfying asynchronous protocol with a polychromatic initial state has a fair sequence that starts at this initial state and never reaches a decision, \ie, it has a fair sequence that does not have a monochromatic prefix.
\end{theorem}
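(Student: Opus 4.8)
The plan is to build the desired non-terminating fair sequence as the limit of an increasing chain of polychromatic finite runs, using the Fair-Extension Lemma (Lemma~\ref{lem:extend}) repeatedly. Start with the polychromatic initial state, call it $\run{x}_0$. Lemma~\ref{lem:extend} guarantees that any polychromatic run can be extended by a \emph{fair extension} to another polychromatic run; so I would apply it once to get a polychromatic $\run{x}_1 \supseteq \run{x}_0$ (in the sense of run inclusion, and in fact with $\run{x}_1$ literally extending $\run{x}_0$ as a sequence), then again to get $\run{x}_2$, and so on, producing an infinite chain $\run{x}_0 \sqsubseteq \run{x}_1 \sqsubseteq \run{x}_2 \sqsubseteq \cdots$ of polychromatic runs, each obtained from the previous by appending a finite (fair-extension) block of events.

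The key additional ingredient is to arrange the construction so that the limiting infinite sequence $\run{x}_\infty := \lim_n \run{x}_n$ is \emph{fair} in the sense of Definition~\ref{def:fairext}, i.e., every event enabled at cofinitely many prefixes appears infinitely often. A plain fair-extension step need not hit a particular stubbornly-enabled event, so I would interleave a round-robin discipline: enumerate the finite event set $E$ as $e^{(1)},\dots,e^{(m)}$, and at stage $n$, \emph{before} invoking Lemma~\ref{lem:extend}, first check whether $e^{(n \bmod m)}$ is enabled at the current run; if it is, append it (this single append keeps the run polychromatic only if we are careful, so actually the cleaner route is: append $e^{(n\bmod m)}$ only when doing so keeps polychromaticity, and otherwise note that by IoD and Lemma~\ref{lem:iodcolor} appending it would force a monochromatic singleton color that, as in the proof of Lemma~\ref{lem:extend}, can be shown never to be forced permanently). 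The robust version is simply: apply Lemma~\ref{lem:extend} to get a polychromatic run, and observe that its proof actually lets us \emph{prescribe} which enabled event we want to fire next while still reaching a polychromatic run — more precisely, the argument in Lemma~\ref{lem:extend} shows that from a polychromatic run one can reach, via a fair extension, a polychromatic run, and by reapplying it with a rotating target event we ensure each event that remains enabled is eventually fired. Taking the union of all the $\run{x}_n$ then yields an infinite run $\run{x}_\infty$ such that (i) every finite prefix is a run (by $P_1$ and the chain construction), and (ii) every event enabled at all but finitely many prefixes appears infinitely often (by the round-robin), so $\run{x}_\infty$ is a fair sequence by Definition~\ref{def:fairext}.

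Finally I would verify that $\run{x}_\infty$ has \emph{no} monochromatic prefix: every $\run{x}_n$ is polychromatic by construction, and every prefix of $\run{x}_\infty$ is included in some $\run{x}_n$, so by $C_3$ its color set is a superset of the (size~$\geq 2$) color set of that $\run{x}_n$ — hence polychromatic. Thus $\run{x}_\infty$ is a fair sequence that never reaches a decision, which is exactly the claim; equivalently, $\run{x}_\infty$ witnesses that the \decision\ property fails, so a polychromatic decision protocol cannot satisfy IoD, and when specialized to bounded-recall self-independent dynamics this gives the non-convergent execution of Theorem~\ref{thm:historyless}.

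The main obstacle I anticipate is reconciling fairness (forcing every stubbornly-enabled event to fire) with the preservation of polychromaticity: Lemma~\ref{lem:extend} as stated only promises \emph{some} fair extension to a polychromatic run, not one that fires a designated event next. So the real work is to extract from (or lightly strengthen) the proof of Lemma~\ref{lem:extend} the statement that the troublesome event $e$ appearing in that proof can be taken to be \emph{any} prescribed enabled event — which the proof's case analysis (it picks $\run{y}$ maximizing distinct events used, then shows the leftover enabled event $e$ can be woven in) essentially already delivers. If that strengthening is awkward, the fallback is a diagonalization: enumerate (event, prefix-length) pairs, and at stage $n$ extend so as to either fire the $n$-th event or certify it disabled, always staying polychromatic by Lemma~\ref{lem:extend}; the limit is then fair and polychromatic on every prefix.
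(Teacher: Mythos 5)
Your overall route is the same as the paper's: start from the polychromatic initial run, iterate the Fair-Extension Lemma (Lemma~\ref{lem:extend}) to build an increasing chain of polychromatic finite runs, take the limit, and use $C_3$ to conclude that every prefix of the limit is polychromatic, hence no decision is ever reached. That last step of yours is exactly right. Where you diverge is the fairness argument, and there the extra machinery you propose is both unnecessary and, as you yourself sense, not sound on its own: the round-robin rule ``append $e^{(k)}$ only when doing so keeps polychromaticity'' can skip a stubbornly enabled event forever (if firing it always forces a monochromatic run), so it does not by itself certify fairness of the limit. The paper's resolution is simpler: the extension produced by Lemma~\ref{lem:extend} is a \emph{fair} extension in the sense that every event enabled at every step of the extension actually appears in it --- this is precisely what the minimization in that lemma's proof delivers (it picks, among maximal polychromatic extensions, one minimizing the number of always-enabled-but-unfired events and shows this number is zero). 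Consequently, if an event $e$ is enabled at all but finitely many prefixes of the limit sequence, then $e$ is enabled throughout all but finitely many of the extension blocks, hence fired in each of those blocks, hence appears infinitely often; no prescribed-event strengthening or diagonalization is needed. So your instinct that ``the real work is already in the proof of Lemma~\ref{lem:extend}'' is correct --- the fix is simply to use the lemma's conclusion in its intended (and proved) strength rather than as ``some polychromatic extension exists.''
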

\begin{proof}
Start with the empty (polychromatic) run and iteratively apply the fair-extension lemma to obtain an infinite polychromatic sequence.  If an event $e$ is enabled at all but finitely many prefixes in this sequence, then in all but finitely many of the fair extensions, $e$ is enabled at every step of the extension.  Because these extensions are fair (in the sense of Def.~\ref{def:fairext}), $e$ is activated in each of these (infinitely many) extensions and so appears infinitely often in the sequence, which is thus fair.
\end{proof}

\subsection{$1$-Recall, Stationary, Self-Independent Protocols Need Not Converge}

We first recall the statement of Thm.~\ref{thm:historyless}.  We then show that $1$-recall, historyless protocols satisfy \iod\ when colored as in Def.~\ref{def:stabcolor}.  Theorem~\ref{thm:gen} then implies that such protocols do not always converge; it immediately follows that this also applies to bounded-recall (and not just $1$-recall) protocols.
\theoremstyle{plain}
\newtheorem*{thm-historyless}{Theorem~\ref{thm:historyless}}
\begin{thm-historyless}
If each node $i$ has bounded recall, and each reaction function $f_i$ is self-independent and stationary, then the existence of two stable states implies that the computational network is not safe.
\end{thm-historyless}

\begin{definition}[Stable coloring]\label{def:stabcolor}
In a protocol defined as in Sec.~\ref{sec:model}, the \emph{stable coloring} of protocol states is the coloring that has a distinct color for each stable state and that colors each state in a run with the set of colors corresponding to the stable states that are reachable from that state.
\end{definition}

We model the dynamics of a $1$-recall, historyless protocol as follows.  There are two types of actions: the application of nodes' reaction functions, where $e_i$ is the action of node $i$ acting as dictated by $f_i$, and a ``reveal'' action $W$.  The nodes scheduled to react in the first timestep do so sequentially, but these actions are not yet visible to the other nodes (so that nodes after the first one in the sequence are still reacting to the initial state and not to the actions performed earlier in the sequence).  Once all the scheduled nodes have reacted, the $W$ action is performed; this reveals the newly performed actions to all the other nodes in the network.  The nodes that are scheduled to react at the next timestep then act in sequence, followed by another $W$ action, and so on.  This converts the simultaneous-action model of Sec.~\ref{sec:model} to one in which actions are performed sequentially; we will use this ``act-and-tell'' model in the rest of the proof.  We note that all actions are enabled at every step (so that, \eg, $e_i$ can be taken multiple times between $W$ actions; however, this is indistinguishable from a single $e_i$ action because the extra occurrences are not seen by other nodes, and they do not affect $i$'s actions, which are governed by a historyless reaction function).

Once we cast the dynamics of $1$-recall, historyless protocols in the act-and-tell model, the following lemma will be useful.
\begin{lemma}[Color equalities]\label{lem:ceq}
In a $1$-recall, historyless protocol (in the act-and-tell model):
\begin{enumerate}
    \item\label{it:ca1} For every run pair of runs $\run{x},\run{y}$ and every $i\in[n]$, the color of $\ex{\ex{\run{x}}{e_iWe_iW}}{\run{y}}$ is the same as the color of $\ex{\ex{\run{x}}{We_iW}}{\run{y}}$.
    \item\label{it:ca2} For every run pair of runs $\run{x},\run{y}$ and every $i,j\in[n]$, the color of $\ex{\ex{\run{x}}{e_ie_j}}{\run{y}}$ is the same as the color of $\ex{\ex{\run{x}}{e_je_i}}{\run{y}}$.
\end{enumerate}
Informally, the first color equality says that, if all updates are announced and then $i$ activates and then all updates are revealed again ($i$'s new output being the only new one), it makes no difference whether or not $i$ was activated immediately before the first reveal action.  The second color equality says that, as long as there are no intervening reveal event, the order in which nodes compute their outputs does not matter (because they do not have access to their neighbors' new outputs until the reveal event).
\end{lemma}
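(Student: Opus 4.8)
The plan is to reduce both equalities to a single piece of bookkeeping about the act-and-tell dynamics. For a run $\run{x}$, track the pair $(c,U)$ in which $c\in A$ is the configuration revealed by the last $W$ action of $\run{x}$ (the initial state if $\run{x}$ contains no $W$), and $U\subseteq[n]$ is the set of nodes that have performed an $e$-action since that $W$ (since the start of $\run{x}$ if it has no $W$). Since the reaction functions are $1$-recall and historyless, a node $i$ executing $e_i$ reacts to $c$ alone, so the pending output of each $i\in U$ is the fixed value $f_i(c)$. Hence $e_i$ merely sends $(c,U)$ to $(c,U\cup\{i\})$, while $W$ sends $(c,U)$ to $(\rho(U,c),\emptyset)$, where $\rho(S,c)$ denotes the configuration equal to $f_j(c)$ in coordinate $j\in S$ and to $c_j$ in coordinate $j\notin S$. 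Every event is always enabled and acts deterministically on $(c,U)$, and the stable coloring of Def.~\ref{def:stabcolor} colors a run by the set of stable states reachable from its current state; therefore the color of $\ex{\run{x}}{\run{y}}$ depends on $\run{x}$ only through the state that $\run{x}$ reaches. So it suffices to show that $\ex{\run{x}}{e_iWe_iW}$ and $\ex{\run{x}}{We_iW}$ reach the same state, and that $\ex{\run{x}}{e_ie_j}$ and $\ex{\run{x}}{e_je_i}$ reach the same state; appending the common suffix $\run{y}$ then preserves the equality, hence the color.

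Item~\ref{it:ca2} is then immediate: if $\run{x}$ reaches $(c,U)$, both $\ex{\run{x}}{e_ie_j}$ and $\ex{\run{x}}{e_je_i}$ reach $(c,U\cup\{i,j\})$, since an $e$-action leaves $c$ unchanged and is oblivious to which other $e$-actions have occurred since the last reveal.

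For item~\ref{it:ca1}, suppose $\run{x}$ reaches $(c,U)$. Along $\ex{\run{x}}{e_iWe_iW}$ the first reveal yields $c^A:=\rho(U\cup\{i\},c)$ and the state becomes $(\rho(\{i\},c^A),\emptyset)$; along $\ex{\run{x}}{We_iW}$ the first reveal yields $c^B:=\rho(U,c)$ and the state becomes $(\rho(\{i\},c^B),\emptyset)$. In every coordinate $j\neq i$ these final configurations agree, since $\rho(\{i\},\cdot)$ does not touch coordinate $j$ and $c^A_j=c^B_j$ (both equal $f_j(c)$ if $j\in U$ and $c_j$ otherwise). In coordinate $i$ they equal $f_i(c^A)$ and $f_i(c^B)$ respectively. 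Now $c^A$ and $c^B$ agree in all coordinates $j\neq i$, and $c^A_i=f_i(c)$ while $c^B_i$ is $f_i(c)$ or $c_i$ according as $i\in U$ or $i\notin U$; so either $c^A=c^B$ (when $i\in U$), in which case $f_i(c^A)=f_i(c^B)$ trivially, or $c^A$ and $c^B$ differ only in $i$'s own coordinate (when $i\notin U$), in which case the same equality follows from self-independence of $f_i$. Hence the two final states coincide, and the lemma follows.

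The only real point of care is this last step of item~\ref{it:ca1}: one must notice that whether node $i$ was activated just before the first reveal affects only $i$'s own revealed coordinate, and then use self-independence of $f_i$ to conclude that $i$'s subsequent recomputation is unaffected; everything else is a direct unwinding of the act-and-tell semantics. (Self-independence is legitimately available here, as Lemma~\ref{lem:ceq} is applied only to the self-independent protocols of Theorem~\ref{thm:historyless}.)
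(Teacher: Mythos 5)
Your proof is correct and follows essentially the same route as the paper's (much terser) argument: the second equality falls out of the act-and-tell semantics, and the first reduces to the observation that whether $i$ acted before the first reveal changes only $i$'s own coordinate of the revealed configuration, which self-independence of $f_i$ (implicitly assumed by the paper's proof as well, and legitimately available in every application of the lemma) renders irrelevant to $i$'s recomputation, while $1$-recall ensures the suffix $\run{y}$ is unaffected. Your explicit $(c,U)$ state abstraction is simply a careful unwinding of the same idea.
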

\begin{proof}
For the first color equality, because the protocol is self-independent, the first occurrence of $e_i$ (after $\run{x}$) in $\ex{\ex{\run{x}}{e_iWe_iW}}{\run{y}}$ does not affect the second occurrence of $e_i$.  Because the protocol has $1$-recall, the later events (in $\run{y}$) are also unaffected.

The second color equality is immediate from the definition of the act-and-tell model.
\end{proof}

\begin{lemma}\label{lem:satiod}
If a protocol is $1$-recall and historyless, then the protocol (with the stable coloring) satisfies \iod.
\end{lemma}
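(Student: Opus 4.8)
The plan is to check the two defining conditions of \iod\ directly, by a short case analysis on the pair of events involved, using only two already-established ingredients: monotonicity of colors under extension (property $C_3$) and the two color equalities of Lemma~\ref{lem:ceq}. Throughout I work in the act-and-tell model with the stable coloring of Def.~\ref{def:stabcolor}, and I make the standing assumption that the protocol is convergent (as I may, since otherwise Theorem~\ref{thm:historyless} holds vacuously), so that the color of \emph{every} run is a nonempty subset of $C$. I also use that every event --- each reaction event $e_i$ and the reveal event $W$ --- is enabled at every run, so all the runs written below are legitimate. Finally I record the trivial observation that if $\run{y}$ is a \emph{literal} extension of $\run{x}$, i.e.\ $\run{y}=\ex{\run{x}}{\run{w}}$ for some sequence $\run{w}$, then $\run{y}$ includes $\run{x}$, so $\mathrm{color}(\run{y})\subseteq\mathrm{color}(\run{x})$ by $C_3$; combined with nonemptiness, a literal extension of a monochromatic run is monochromatic of the same color. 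I will invoke the color equalities of Lemma~\ref{lem:ceq} only with the empty run in place of the trailing run: $\mathrm{color}(\ex{\run{x}}{e_iWe_iW})=\mathrm{color}(\ex{\run{x}}{We_iW})$ and $\mathrm{color}(\ex{\run{x}}{e_ie_j})=\mathrm{color}(\ex{\run{x}}{e_je_i})$.

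So fix a polychromatic run $\run{x}$ and an event $e$ with $\ex{\run{x}}{e}$ monochromatic of color $\{c\}$, and let $e'\neq e$ be any other event. Each of $e,e'$ is a reaction event or $W$; excluding $e=e'=W$, this leaves three cases. The reaction--reaction case $e=e_i$, $e'=e_j$ ($i\neq j$) is immediate: $\ex{\run{x}}{e_ie_j}$ literally extends $\ex{\run{x}}{e_i}$, hence has color $\{c\}$, and then the second color equality gives $\mathrm{color}(\ex{\run{x}}{e_je_i})=\{c\}$ as well; since $\ex{\run{x}}{e_je_i}$ literally extends $\ex{\run{x}}{e'}=\ex{\run{x}}{e_j}$ we get $c\in\mathrm{color}(\ex{\run{x}}{e'})$, which is condition~\ref{cond:iod:e'}, and since $\ex{\ex{\run{x}}{e'}}{e}=\ex{\run{x}}{e_je_i}$ is monochromatic of color $\{c\}$, condition~\ref{cond:iod:e'e} holds as well.

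The two cases involving $W$ are the crux, and both use the same ``double-up'' device --- feeding Lemma~\ref{lem:ceq} the pattern $e_kWe_kW$ rather than a bare $e_k$, precisely so that the trailing reveal event in that lemma lines up. If $e=e_i$ and $e'=W$: the run $\ex{\run{x}}{e_iWe_iW}$ literally extends $\ex{\run{x}}{e_i}$ and so has color $\{c\}$; by the first color equality $\mathrm{color}(\ex{\run{x}}{We_iW})=\{c\}$; and $\ex{\run{x}}{We_iW}$ literally extends both $\ex{\run{x}}{e'}=\ex{\run{x}}{W}$ and $\ex{\ex{\run{x}}{e'}}{e}=\ex{\run{x}}{We_i}$, which gives $c\in\mathrm{color}(\ex{\run{x}}{e'})$ and forces $\mathrm{color}(\ex{\run{x}}{We_i})=\{c\}$ whenever that run is monochromatic. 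If $e=W$ and $e'=e_j$: the run $\ex{\run{x}}{We_jW}$ literally extends $\ex{\run{x}}{e}=\ex{\run{x}}{W}$ and so has color $\{c\}$, hence so does $\ex{\run{x}}{e_jWe_jW}$ by the first color equality; and $\ex{\run{x}}{e_jWe_jW}$ literally extends both $\ex{\run{x}}{e'}=\ex{\run{x}}{e_j}$ and $\ex{\ex{\run{x}}{e'}}{e}=\ex{\run{x}}{e_jW}$, which again yields both \iod\ conditions. This exhausts the cases.

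I do not expect a genuine obstacle here: self-independence and $1$-recall enter only through Lemma~\ref{lem:ceq}, which is already in hand, so what remains is the bookkeeping above. The one place that needs care --- and the reason the doubled pattern is the right object rather than a bare $e_ke_k$ --- is matching the exact shape of the color equalities (with their trailing $W$'s) to the three runs $\ex{\run{x}}{e}$, $\ex{\run{x}}{e'}$, and $\ex{\ex{\run{x}}{e'}}{e}$ that appear in the definition of \iod.
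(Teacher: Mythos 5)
Your proof is correct and follows essentially the same route as the paper's: the same case analysis on the pair $(e,e')$, the same use of the second color equality for the $e_i$-versus-$e_j$ case, and the same ``doubled'' pattern $e_iWe_iW$ versus $We_iW$ from the first color equality for the cases involving $W$, combined with monotonicity under extension ($C_3$). Your explicit standing assumption of convergence (so that all colors are nonempty) is exactly the context in which the paper invokes the lemma, so nothing is lost.
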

\begin{proof}
Color each state in the protocol's runs according to the stable states that can be reached from it.  Assume $\run{x}$ is a polychromatic run (with color $C'$) and that some event $e$ is such that $\ex{\run{x}}{e}$ is monochromatic (with color $\{c\}$).  Let $e'$ be another event (recall that all events are always enabled).  If $e$ and $e'$ are two distinct node events $e_i$ and $e_j$ ($i\neq j$), respectively, then the color of $\ex{\ex{\run{x}}{e_j}}{e_i}$ is the color of $\ex{\ex{\run{x}}{e_i}}{e_j}$ and thus the (monochromatic) color of $\ex{\run{x}}{e_i}$, \ie, $\{c\}$.  If $e$ and $e'$ are both $W$ or are the same node event $e_i$, then the claim is trivial.

If $e=e_i$ and $e'=W$ (as illustrated in the left of Fig.~\ref{fig:hless}), then we may extend $\ex{\run{x}}{e_i}$ by $We_iW$ to obtain a run whose color is again $\{c\}$.  By the second color equality, this is also the color of the extension of $\ex{\run{x}}{W}$ by $e_iW$, so the color of $\ex{\run{x}}{W}$ contains $c$ and if the extension of $\ex{\run{x}}{W}$ by $e_i$ is monochromatic, its color must be $\{c\}$ as well.  If, on the other hand, $e=W$ and $e'=e_i$ (as illustrated in the right of Fig.~\ref{fig:hless}), we may extend $\ex{\run{x}}{W}$ by $e_iW$ and $\ex{\run{x}}{e_i}$ by $We_iW$ to obtain runs of color $\{c\}$; so the color of $\ex{\run{x}}{e_i}$ must contain $c$ and, arguing as before, if the intermediate extension $\ex{\ex{\run{x}}{e_i}}{W}$ is monochromatic, its color must also be $\{c\}$.
\begin{figure}[htbp]
\begin{center}
\includegraphics[width=4cm]{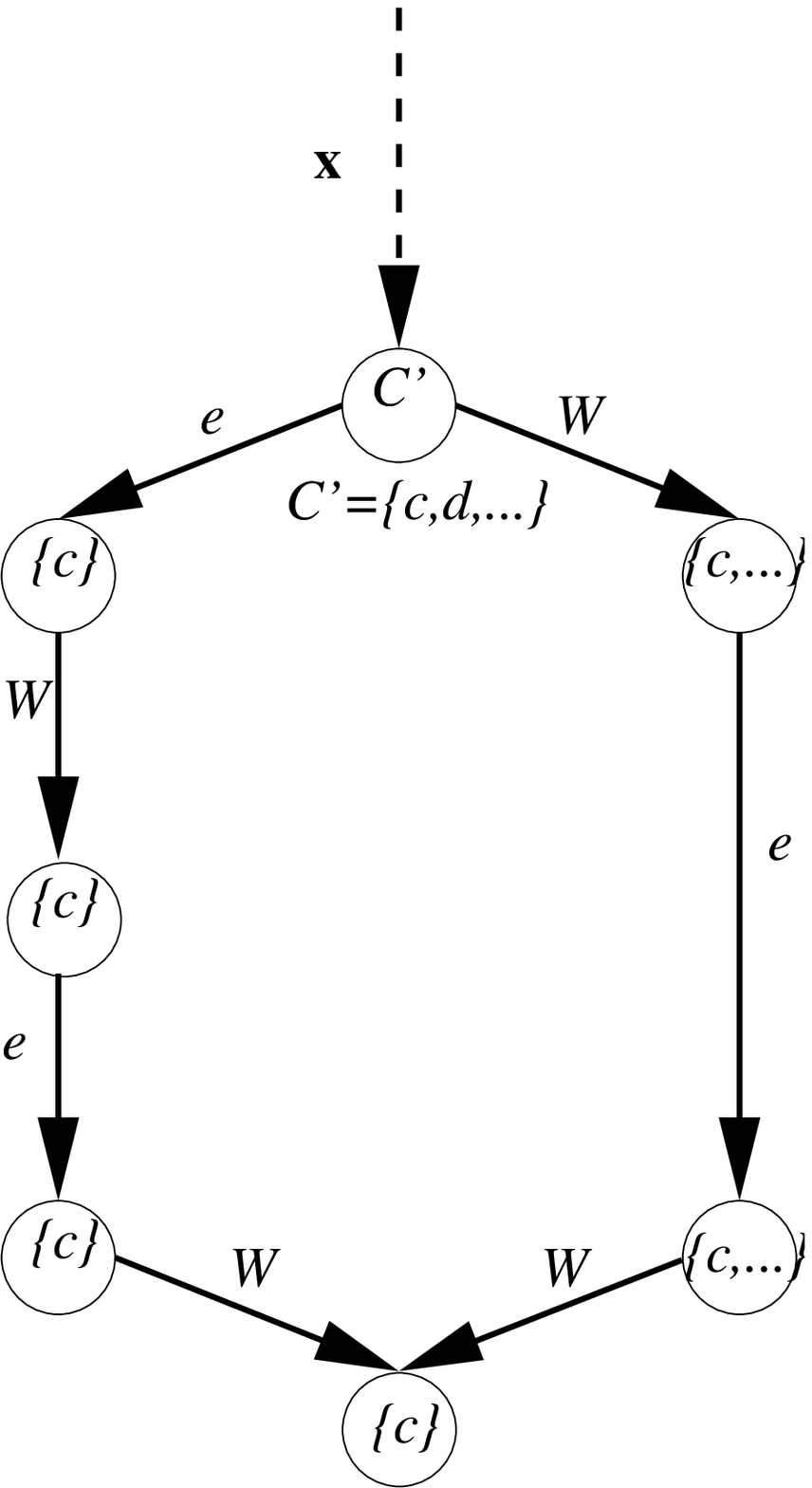}\qquad\qquad\qquad\includegraphics[width=4cm]{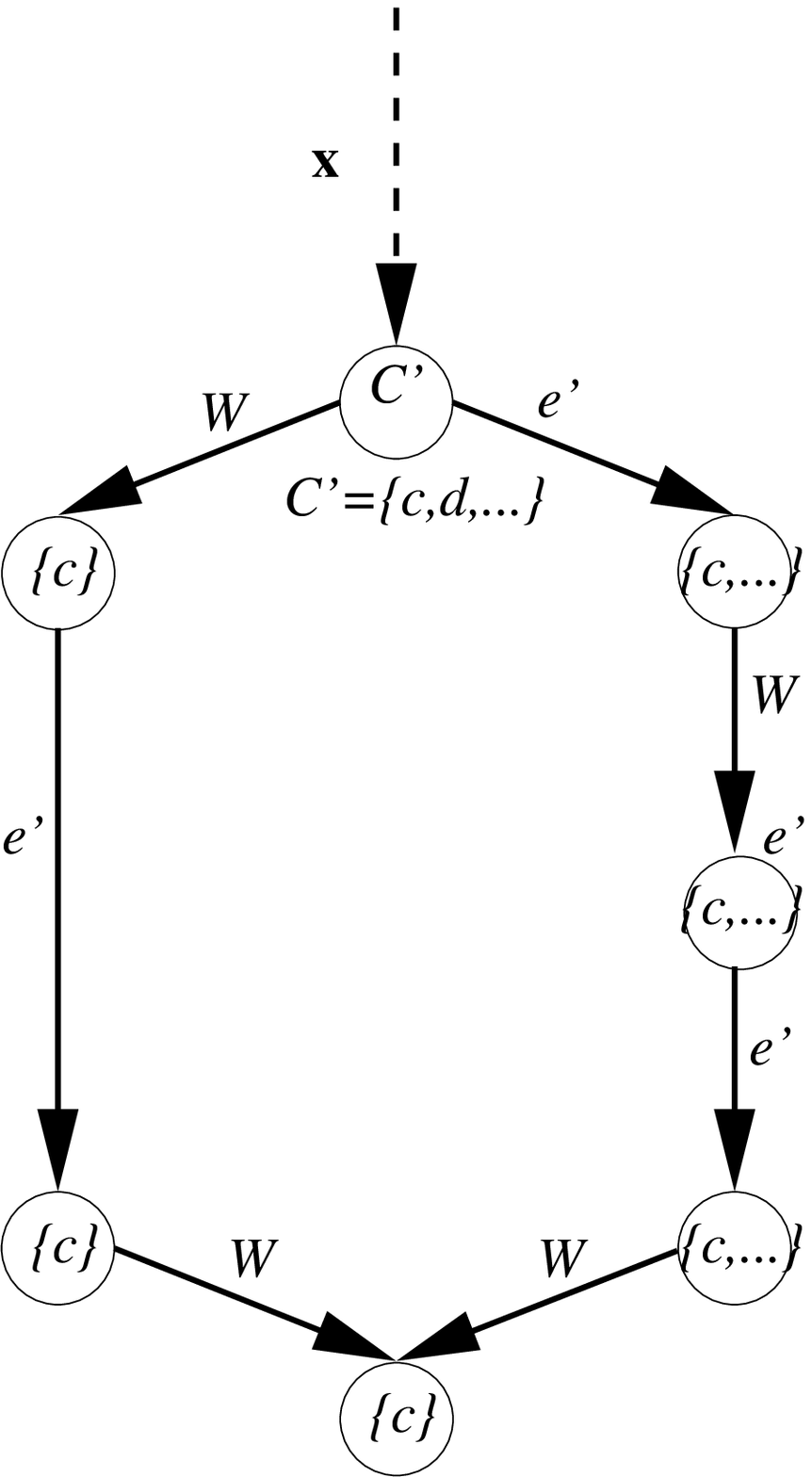}
\end{center}
\caption{Illustrations of the arguments in the proof of Lem.~\ref{lem:satiod}.}\label{fig:hless}
\end{figure}
\end{proof}

\begin{lemma}\label{lem:hlessmulti}
If a $1$-recall, historyless computation that always converges can, for different starting states, converge to different stable states then there is some input from which the computation can reach multiple stable states.  In particular, under the stable coloring, there is a polychromatic state.
\end{lemma}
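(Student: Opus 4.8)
The plan is to run an FLP-style ``bivalence along a Hamming path'' argument, using self-independence to collapse the usual indistinguishability step into a single observation. By hypothesis there are two distinct stable states, i.e.\ two distinct fixed points $a^*\ne b^*$ of $g=(g_1,\dots,g_n)$ (every stable state is reachable from itself, so ``can converge to different stable states from different starting states'' is just ``there are at least two fixed points''). Let $D=\{i : a^*_i\ne b^*_i\}$, enumerate $D=\{i_1,\dots,i_m\}$ with $m\ge 1$, and interpolate: set $c^0=a^*$ and let $c^k$ be obtained from $c^{k-1}$ by overwriting coordinate $i_k$ with $b^*_{i_k}$, so that $c^m=b^*$ and consecutive profiles $c^{k-1},c^k$ differ in exactly the one coordinate $i_k$. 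Each $c^k$ is a legitimate initial action profile, so under the stable coloring of Def.~\ref{def:stabcolor} it receives the color set of stable states reachable from it; this set is nonempty because the system is convergent, and the endpoints receive $\{a^*\}$ and $\{b^*\}$ respectively.

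First I would assume toward a contradiction that every $c^k$ is monochromatic; since $\{a^*\}\ne\{b^*\}$ there is then an index $k$ with $\mathrm{color}(c^{k-1})=\{d\}$ and $\mathrm{color}(c^k)=\{d'\}$ for distinct stable states $d\ne d'$. The crux, and it is a one-liner: activate node $i_k$ alone (schedule $\sigma(1)=\{i_k\}$) from $c^{k-1}$. Because $f_{i_k}$ is self-independent, its output depends only on the coordinates $j\ne i_k$, and those coordinates agree in $c^{k-1}$ and $c^k$; hence $f_{i_k}(c^{k-1})=f_{i_k}(c^k)$, and activating $i_k$ alone from either profile yields the \emph{same} successor profile $c'$ (namely $c'_{i_k}=f_{i_k}(c^{k-1})$, all other coordinates unchanged). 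Now color monotonicity along runs (the $C_3$ property for the stable coloring: anything reachable from $c'$ is reachable from its predecessor) gives $\mathrm{color}(c')\subseteq\mathrm{color}(c^{k-1})=\{d\}$ and $\mathrm{color}(c')\subseteq\mathrm{color}(c^k)=\{d'\}$, so $\mathrm{color}(c')\subseteq\{d\}\cap\{d'\}=\emptyset$; but convergence forces $\mathrm{color}(c')\ne\emptyset$, a contradiction. Therefore some $c^k$ is polychromatic, i.e.\ it is an initial state from which at least two distinct stable states are reachable, which is exactly the claimed polychromatic state.

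I would phrase steps two and three at the level of action profiles and one-step reachability in the simultaneous model of Section~\ref{sec:model}, and only afterwards transcribe ``$c^k$ is polychromatic'' into the axiomatic/act-and-tell language; that transcription is immediate, since an initial action profile corresponds to the empty run and the stable coloring is defined via reachable stable states in either presentation. The main obstacle I anticipate is not conceptual but bookkeeping: making sure the color-monotonicity property ($C_3$) is genuinely available as stated for runs in the act-and-tell model, and that ``activate $i_k$ alone'' is a legitimate run in that model (it is, since all node events and the reveal event are always enabled). Two minor points to dispatch carefully are the degenerate case $m=1$ (the argument is verbatim the same) and the possibility that $c'$ coincides with $c^{k-1}$ or $c^k$ (harmless: the two $\subseteq$-inclusions still give $d=d'$).
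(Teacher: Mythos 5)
Your proof is correct and takes essentially the same route as the paper's: interpolate between the two stable states along a Hamming path, and use self-independence to show that activating the single differing node from two adjacent (assumed monochromatic) states yields a common successor, forcing their colors to agree. The paper phrases the final step as ``the computations subsequently unfold in the same way,'' while you phrase it via color monotonicity and an empty intersection, but these are the same observation.
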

\begin{proof}
Assume there are (under the stable coloring) two different monochromatic input states for the computation, that the inputs differ only at one node $v$, and that the computation always converges (\ie, for every fair schedule) on both input states.  Consider a fair schedule that activates $v$ first and then proceeds arbitrarily.  Because the inputs to $v$'s reaction function are the same in each case, after the first step in each computation, the resulting two networks have the same node states.  This means that the computations will subsequently unfold in the same way, in particular producing identical outputs.

If a historyless computation that always converges can produce two different outputs, then iterated application of the above argument leads to a contradiction unless there is a polychromatic initial state.
\end{proof}

\begin{proof}[Proof of $1$-recall, stationary part of Thm.~\ref{thm:historyless}]
Consider a protocol with $1$-recall, self independence, and stationarity, and that has two different stable states.  If there is some non-convergent run of the protocol, then the network is not safe (as claimed).  Now assume that all runs converge; we will show that this leads to a contradiction.  Color all states in the protocol's runs according to the stable coloring (Def.~\ref{def:stabcolor}).  Lemma~\ref{lem:hlessmulti} implies that there is a polychromatic state.  Because, by Lem.~\ref{lem:satiod}, the \iod\ is satisfied, we may apply Thm.~\ref{thm:gen}.  In this context (with the stable coloring), this implies that there is an infinite run in which every state can reach at least two stable states; in particular, the run does not converge.
\end{proof}

\subsection{Extension to Non-stationary Protocols}

We may extend our results to non-stationary protocols as well.
\begin{theorem}
If each node $i$ has $1$-recall, the action spaces are all finite, and each reaction function $f_i$ is self-independent but not necessarily stationary, then the existence of two stable states implies that the computational network is not safe.
\end{theorem}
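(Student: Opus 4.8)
The plan is to reduce the non-stationary, $1$-recall case to the stationary case already handled above, by absorbing the time counter into the node states. The key observation is that a $1$-recall reaction function $f_{(i,t)}:H_1 = A \rightarrow \Delta(A_i)$ depends on $t$ only through the index, and since action spaces are finite we may lift the dynamics to an expanded system that records enough of a local notion of ``time'' to make the reaction functions stationary. First I would augment each node's action space: replace $A_i$ by $A_i \times T_i$, where $T_i$ is a suitable finite set of ``local clock'' values, and have node $i$'s new reaction function advance its clock component deterministically whenever it is activated, while computing its action component via $f_{(i,\tau)}$ where $\tau$ is read off from $i$'s current clock value. The subtlety is that a node's clock must be incremented each time \emph{that node} reacts, independent of what other nodes do; this is compatible with self-independence of the action component precisely because the clock is $i$'s own data and does not feed back into $f_{(i,\tau)}$'s genuine dependence (which is on $A_{-i}$).

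Next I would argue that this lifted system is again $1$-recall, self-independent, and now \emph{stationary}, so the $1$-recall stationary part of Theorem~\ref{thm:historyless} (proved via Theorem~\ref{thm:gen} and Lemma~\ref{lem:satiod}) applies to it. For this I need two things. First, stable states of the lifted system must correspond to stable states of the original: a profile is stable for the lifted dynamics only if the clock components have stabilized, which forces the underlying non-stationary reaction functions to have become ``eventually periodic'' at the relevant argument; one shows that from each of the two given stable states $a, a'$ of the original system one can build a stable state of the lifted system, giving the required multiplicity of stable states upstairs. Second, a non-convergent fair run of the lifted system projects to a non-convergent fair run of the original system (the projection forgets clock components, and convergence of the projection would force convergence of the lift since the clocks are deterministic functions of activation counts). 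Combining these, multiplicity of stable states in the original system yields multiplicity upstairs, hence a non-convergent fair run upstairs, hence a non-convergent fair run of the original system.

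The main obstacle is making the local-clock set $T_i$ genuinely \emph{finite} while still faithfully simulating the non-stationary behavior. Since the $f_{(i,t)}$ are arbitrary functions of $t$ with no a priori periodicity, a naive clock would need to be unbounded. The fix is to note that for the impossibility argument we do not need the lifted system to reproduce the original dynamics forever from \emph{every} state — we only need that (i) the two original stable states lift to stable states, and (ii) Theorem~\ref{thm:gen}'s conclusion transfers back. One route: observe that Theorem~\ref{thm:gen} and the Fair-Extension Lemma only ever examine runs of bounded length before the next polychromatic extension, so it suffices to truncate clocks at a bound determined by the (finitely many) colors and events, i.e., work with the act-and-tell model directly and redo the proof of Lemma~\ref{lem:satiod} allowing the reaction function to depend on a bounded activation-count; the color equalities in Lemma~\ref{lem:ceq} still hold because self-independence is untouched and the extra $e_i$ occurrences between reveals are still invisible to others, though one must now check that repeated $e_i$ between reveals advances $i$'s clock in a way that does not change $i$'s revealed output — which holds since the clock increments but the stale reveals are what other nodes see. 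I expect this bounded-clock bookkeeping to be the delicate part; everything else is a routine transfer of the stationary argument.
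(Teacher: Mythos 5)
Your reduction-to-stationarity via local clocks has a gap that you partly identify but do not close. First, the clock cannot be both finite and faithful: the $f_{(i,t)}$ are arbitrary in $t$, so truncating or saturating the clock at any finite bound means that beyond that bound the lifted system no longer simulates the original dynamics, and a non-convergent fair run of the lifted system (which is necessarily infinite) need not project to a non-convergent run of the original. Your justification for truncation --- that Theorem~\ref{thm:gen} ``only ever examines runs of bounded length before the next polychromatic extension'' --- is not right: that theorem builds its infinite run by iterating the Fair-Extension Lemma forever, with no bound on total length, so unboundedly large time indices genuinely occur along the run you need. Second, a clock that increments when node $i$ reacts records $i$'s activation count, not the global time step $t$ on which $f_{(i,t)}$ depends; under an asynchronous schedule these differ, so even an unbounded local clock feeds the wrong index into $f_{(i,\tau)}$. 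Moreover, the clock is part of node $i$'s own action in the lifted system and $f_i$ reads it, which breaks self-independence precisely where the stationary argument needs it (the first color equality of Lemma~\ref{lem:ceq}). Third, even granting the lift, multiplicity of stable states upstairs is not by itself enough; you need a polychromatic state, and it is unclear that the lifted system satisfies the hypotheses of Lemma~\ref{lem:hlessmulti}.

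The paper avoids the reduction entirely. It re-examines the proof of Lemma~\ref{lem:satiod} and observes that the two paths through each diagram of Figure~\ref{fig:hless} consume the same number of timesteps, so the time-indexed reaction functions are evaluated at identical global times along either path and IoD holds without stationarity. It then replaces Lemma~\ref{lem:hlessmulti} by an adjacency argument on the states at time $T$, where $T$ bounds the stabilization times of the finitely many stable states (this is where finiteness of the action spaces actually enters), to produce the required polychromatic state. Your closing suggestion to ``redo the proof of Lemma~\ref{lem:satiod} allowing the reaction function to depend on a bounded activation-count'' points in roughly the right direction, but the relevant invariant is equality of global timestep counts along the two compared paths, not boundedness of activation counts, and you would still need the time-$T$ adjacency argument; as written, the proposal does not supply either.
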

\begin{proof}
In this context, a stable state is a vector of actions and a time $t$ such that, after $t$, the action vector is a fixed point of the reaction functions.  Let $T$ be the largest such $t$ over all the (finitely many) stable states (and ensure that $T$ is at least $k$ for generalizing to $k$-recall).  Assume that the protocol is in fact safe; this means that, under the stable coloring, every state gets at least one color.  If there are only monochromatic states, consider the states at time $T$; we view two of these states as adjacent if they differ only in the action (or action history for the generalization to $k$-recall) of one node.  Because the protocol is self-independent, that node may be activated ($k$ times if necessary) to produce the same state.  In particular, this means that adjacent states must have the same monochromatic color.  Because (among he states at time $T$) there is a path (following state adjacencies) from any one state to any other, only one stable state is possible, contradicting the hypotheses of the theorem.

Considering the proof of Lem.~\ref{lem:satiod}, we see that the number of timesteps required to traverse each of the subfigures in Fig.~\ref{fig:hless} does not depend on which path (left or right) through the subfigure we take.  In particular, this means that the reaction functions are not affected by the choice of path.  Furthermore, the non-$W$ actions in each subfigure only involve a single node $i$; the final action performed by $i$ along each path occurs after one $W$ action has been performed (after $\run{x}$), so these final actions are the same (because the timesteps at which they occur are the same, as are the actions of all the other nodes in the network).
\end{proof}

\subsection{Extension to Bounded-Recall Protocols}

If we allow $k$-recall for $k > 1$, we must make a few straightforward adjustments to the proofs above.  Generalizing the argument used in the proof of the color equalities (Lem.~\ref{lem:ceq}), we may prove an analogue of these for $k$-recall; in particular, we replace the first color equality by an equality between the colors of $\ex{\ex{\run{x}}{e_iW(e_iW)^k}}{\run{y}}$ and $\ex{\ex{\run{x}}{W(e_iW)^k}}{\run{y}}$.  This leads to the analogue of Lem.~\ref{lem:satiod} for bounded-recall protocols; as in Lem.~\ref{lem:satiod}, the two possible paths through each subfigure (in the $k$-recall analogue of Fig.~\ref{fig:hless}) require the same number of timesteps, so non-stationarity is not a problem.

Considering adjacent states as those that differ only in the actions of one node (at some point in its depth-$k$ history), we may construct a path from any monochromatic initial state to any other such state.  Because the one node that differs between two adjacent states may be (fairly) activated $k$ times to start the computation, two monochromatic adjacent states must have the same color; as in the $1$-recall case, the existence of two stable states thus implies the existence of a polychromatic state.

\section{Implications for Resilient Decision Protocols}\label{sec:rdp}

The consensus problem is fundamental to distributed computing research.  We give a brief description of it here, and we refer the reader to~\cite{taubenfeld91} for a detailed explanation of the model.  We then show how to apply our general result to this setting.  This allows us to show that the impossibility result in~\cite{FLP85}, which shows that no there is no protocol that solves the consensus problem, can be obtained as a corollary of Thm.~\ref{thm:gen}.

\subsection{The Consensus Problem}

\vspace{0.05in}\noindent{\bf Processes and consensus.} There are $N\geq 2$ \emph{processes} $1,\ldots,N$, each process $i$ with an initial value $x_i\in\{0,1\}$. The processes communicate
with each other via \emph{messages}. The objective is for all \emph{non-faulty} processes to eventually agree on some value
$x\in\{0,1\}$, such that $x=x_i$ for some $i\in [N]$ (that is, the value that has been decided must match the initial value of some process).
No computational limitations whatsoever are imposed on the processes. The difficulty in reaching an agreement (consensus) lies elsewhere:
the network is asynchronous, and so there is no upper bound on the length of time processes may take to receive, process and respond to an incoming message. Intuitively, it is therefore impossible to tell whether a process has failed, or is simply taking a long time.

\vspace{0.05in}\noindent{\bf Messages and the message buffer.}  Messages are pairs of the form $(p,m)$, where $p$ is the process the message is intended for, and $m$ is the contents of the message. Messages are
stored in an abstract data structure called the \emph{message buffer}. The message buffer is a multiset of messages, \ie, more than
one of any pair $(p,m)$ is allowed, and supports two operations: (1) \emph{send(p,m)}: places a message in the message buffer. (2) \emph{receive(p)}: returns a message for processor $p$ (and removes it from the message buffer) or the special value, that has no effects. If there are
several messages for $p$ in the message buffer then receive(p) returns one of them at random.

\vspace{0.05in}\noindent{\bf Configurations and system evolution.} A configuration is defined by the following two factors: (1) the internal state of all of the processors (the current step in the protocol that they are executing, the contents of their memory), and (2) the contents of the message buffer. The system moves from one configuration to the next by a step which consists of a process $p$ performing \emph{receive(p)} and moving to another internal state. Therefore, the only way that the system state may evolve is by some processor receiving a message (or null) from the message buffer. Each step is therefore uniquely defined by the message that is received (possibly) and the process that received it.

\vspace{0.05in}\noindent{\bf Executions and failures.} From any initial starting state of the system, defined by the initial values of the processes,
there are many different possible ways for the system to evolve (as the \emph{receive(p)} operation is non-deterministic). We say that a \emph{protocol}
solves consensus if the objective is achieved for \emph{every} possible execution. Processes are allowed to fail according to the fail-stop model, that is, processes that fail do so by ceasing to work correctly. Hence, in each execution, non-faulty processes participate in infinitely many steps (presumably eventually just receiving once the algorithm has finished its work), while processes that stop participating in an execution at some point are considered faulty. We are concerned with the handling of (at most) \emph{a single} faulty process. Hence, an execution is \emph{admissible} if at most one process is faulty.

\subsection{Impossibility of Resilient Consensus}

We now show how this fits into the formal framework of Ap.~\ref{ap:axiomatic}. The events are (as in~\cite{FLP85}) messages annotated with the intended recipient (\eg, $m_i$).  In addition to the axioms of Ap.~\ref{ap:axiomatic}, we also assume that the protocol satisfies the following resiliency property, which we adapt from Taubenfeld~\cite{taubenfeld91}; we call such a protocol a \emph{resilient consensus protocol}.  (Intuitively, this property ensures that if node $i$ fails, the other nodes will still reach a decision.)
\begin{description}
  \item[$\mathsf{Res}$] For each run $\run{x}$ and node $i$, there is a monochromatic run $\run{y}$ that extends $\run{x}$ such that $\run{x}\left[i\right]\run{y}$.
\end{description}

We show that resilient consensus protocols satisfy \iod.  Unsurprisingly, the proof draws on ideas of Fischer, Lynch, and Paterson.
\begin{lemma}\label{lem:cons}
Resilient consensus protocols satisfy \iod.
\end{lemma}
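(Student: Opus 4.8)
The plan is to instantiate the axiomatic framework of Appendix~\ref{ap:axiomatic} with the consensus model of the previous subsection: an event is a message tagged with its recipient, written $m_i$, and its associated principal set is the singleton $\{i\}$. I would color each run by the \emph{set of decision values reachable from it} --- the ``valence'' coloring. This coloring is total, and it is nonempty on every run (given a run $\run{x}$, $\mathsf{Res}$ supplies a monochromatic run including $\run{x}$, whose unique color must, by $C_3$, lie in the color of $\run{x}$); it clearly satisfies $C_1$--$C_3$, and $\mathsf{D}$ is exactly the termination/agreement requirement of consensus. What remains is to verify the two clauses of \iod. The reasoning is the commutativity-and-valency argument of Fischer--Lynch--Paterson, but carried out entirely through $P_2$ (closure of runs under appending an enabled event), $C_3$ (colors shrink along the inclusion order), and --- crucially --- the resiliency axiom $\mathsf{Res}$. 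In each clause I would split on whether $e$ and $e'$ have the \emph{same} recipient or \emph{different} recipients.

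\textbf{Different recipients.} Say $e=m_p$ and $e'=m_q$ with $p\neq q$. Since $e'$ does not affect $p$, we have $\run{x}[\{p\}]\ex{\run{x}}{e'}$ and $\ex{\run{x}}{e'}$ includes $\run{x}$, so $P_2$ makes $\ex{\ex{\run{x}}{e'}}{e}$ a run. Comparing per-principal restrictions shows $\ex{\ex{\run{x}}{e'}}{e}$ includes $\ex{\run{x}}{e}$: the $p$-restrictions of the two are literally equal, and for any other principal the restriction of $\ex{\run{x}}{e}$ is just that of $\run{x}$, hence a prefix. By $C_3$ the color of $\ex{\ex{\run{x}}{e'}}{e}$ is contained in $\{c\}$, and being nonempty it equals $\{c\}$. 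As $\ex{\ex{\run{x}}{e'}}{e}$ also extends (hence includes) $\ex{\run{x}}{e'}$, a second use of $C_3$ gives $c$ in the color of $\ex{\run{x}}{e'}$ --- clause~1 --- and, if $\ex{\ex{\run{x}}{e'}}{e}$ is monochromatic, it is $\{c\}$ --- clause~2.

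\textbf{Same recipient --- the crux.} Say $e=m_p$ and $e'=m'_p$. Here I would apply $\mathsf{Res}$ to $\run{x}$ and node $p$ to get a monochromatic run $\run{z}$ extending $\run{x}$ with $\run{x}[\{p\}]\run{z}$; write its color $\{b\}$. By $P_2$ (base $\run{x}$), both $\ex{\run{z}}{e}$ and $\ex{\run{z}}{e'}$ are runs, and the per-principal comparison shows $\ex{\run{z}}{e}$ includes $\ex{\run{x}}{e}$ and $\ex{\run{z}}{e'}$ includes $\ex{\run{x}}{e'}$, while both extend $\run{z}$. From $\ex{\run{z}}{e}$: its color lies in $\{c\}$ (it includes $\ex{\run{x}}{e}$) and in $\{b\}$ (it extends $\run{z}$), and nonemptiness forces $b=c$. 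Hence the color of $\ex{\run{z}}{e'}$ is $\{b\}=\{c\}$, and since it includes $\ex{\run{x}}{e'}$ we get $c$ in the color of $\ex{\run{x}}{e'}$: clause~1. For clause~2 I may assume $e$ is enabled at $\ex{\run{x}}{e'}$ (else the clause is vacuous), so $\ex{\ex{\run{x}}{e'}}{e}$ is a run; one more $P_2$ step, with base $\ex{\run{x}}{e'}$, makes $\ex{\ex{\run{z}}{e'}}{e}$ a run. It includes $\ex{\ex{\run{x}}{e'}}{e}$ and extends $\run{z}$, so its color lies in $\{b\}=\{c\}$ and in the color of $\ex{\ex{\run{x}}{e'}}{e}$; nonemptiness then yields that it equals $\{c\}$ and that $c$ lies in the color of $\ex{\ex{\run{x}}{e'}}{e}$, which is therefore $\{c\}$ whenever it is monochromatic.

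\textbf{Where the work is.} The different-recipient case is pure bookkeeping, using only that $m_i$ sees nobody but $i$ (so appending it commutes past events with other recipients \emph{at the level of per-principal restrictions}) together with $C_3$. The same-recipient case is the genuine core --- it is precisely the point at which FLP's original argument is nontrivial --- and it is carried entirely by the single extra hypothesis $\mathsf{Res}$: without the ability to run $p$ off toward a decision while $p$ itself stays quiet, the run $\run{z}$ cannot be produced and the argument collapses, which is the axiomatic shadow of the fact that one crashable process suffices for FLP. Two side points I would keep honest: (i) the run-membership premises of each $P_2$ invocation, which in every instance reduce to an equivalence $\run{x}[\{p\}]\run{y}$ for the recipient $p$ of the appended event, holding because the extra events in $\run{y}$ have recipients other than $p$; and (ii) the nonemptiness of colors on all runs, which is the second place $\mathsf{Res}$ is used.
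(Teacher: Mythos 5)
Your proof is correct and follows essentially the same route as the paper's: split on whether $e$ and $e'$ share a recipient, handle the distinct-recipient case by commutation, and in the shared-recipient case invoke $\mathsf{Res}$ to obtain a decided extension avoiding that node, which commutes with both events and forces the colors to agree. The only difference is presentational---you carry out the commutation bookkeeping explicitly through $P_2$, inclusion, and $C_3$ (and track color nonemptiness), where the paper argues the same steps informally via the sequence $\sigma$.
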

\begin{proof}
Assume $\run{x}$ is a polychromatic run of a resilient consensus protocol and that $\ex{\run{x}}{m_i}$ is monochromatic (of color $\{c\}$).  If $e'=m'_j$ for $j\neq i$, then $e=m_i$ and $e'$ commute (because the messages are processed by different nodes) and the \iod\ conditions are satisfied.  (In particular, $\ex{\ex{\run{x}}{e}}{e'}$ and $\ex{\ex{\run{x}}{e'}}{e}$ both have the same monochromatic color.)

\begin{figure}[htbp]
\begin{center}
\includegraphics[height=8cm]{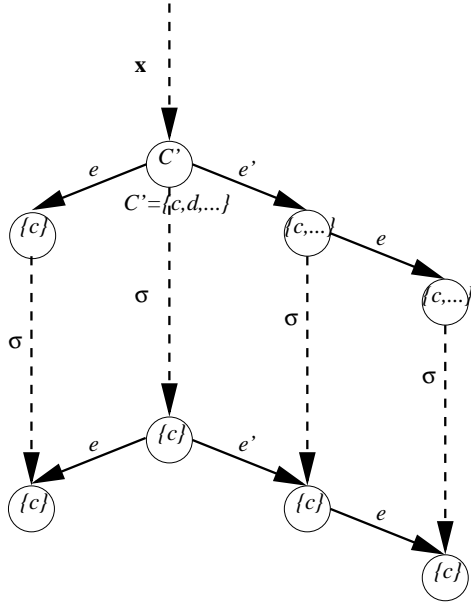}
\end{center}
\caption{Illustration of argument in the proof of Lem.~\ref{lem:cons}.}\label{fig:proof:cons}
\end{figure}

If $e'=m'_i$, then consider a sequence $\sigma$ from $\run{x}$ that reaches a monochromatic run and that does not involve $i$ (the existence of $\sigma$ is guaranteed by $\mathsf{Res}$); this is illustrated in Fig.~\ref{fig:proof:cons}.  Because $\sigma$ doesn't involve $i$, it must commute with $e$ and $e'$; in particular, the color of the monochromatic run reachable by applying $\sigma$ to $\ex{\run{x}}{e}$ is the same as the color of the run $\ex{\ex{\run{x}}{\sigma}}{e}$.  Thus $\sigma$ must produce the same color $\{c\}$ that $e$ does in extending $\run{x}$.  On the other hand, we may apply this same argument to $e'$ to see that $\ex{\ex{\run{x}}{e'}}{\sigma}$ must also have the same color as $\ex{\run{x}}{\sigma}$, so the color of $\ex{\run{x}}{e'}$ contains the color of $\ex{\run{x}}{e}$.  The remaining question is whether $\ex{\ex{\run{x}}{e'}}{e}$ can be monochromatic of a different color than $\ex{\run{x}}{e}$.  However, the color (if it is monochromatic) of $\ex{\ex{\ex{\run{x}}{e'}}{e}}{\sigma}$ must be the same (because $\sigma$ does not involve $i$) as the color of $\ex{\ex{\ex{\run{x}}{e'}}{\sigma}}{e}$, which we have already established is the color of $\ex{\run{x}}{e}$; thus, $\ex{\ex{\run{x}}{e'}}{e}$ cannot be monochromatic of a different color.
\end{proof}

Using Thm.~\ref{thm:gen} and the fact that there must be a polychromatic initial configuration for the protocol (because it can reach multiple outcomes, as shown in~\cite{FLP85}), we obtain from this lemma the following celebrated result of Fischer, Lynch, and Paterson~\cite{FLP85}.
\begin{theorem}[Fischer--Lynch--Paterson\cite{FLP85}]
There is no always-terminating protocol that solves the consensus problem.
\end{theorem}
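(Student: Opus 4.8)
The plan is to obtain the Fischer--Lynch--Paterson theorem as a corollary of the machinery already in place: Lemma~\ref{lem:cons} (resilient consensus protocols satisfy \iod) and Theorem~\ref{thm:gen} (any \iod-satisfying protocol with a polychromatic initial state admits a fair run with no monochromatic prefix). The only genuinely new ingredient needed is the existence of a polychromatic (``bivalent'') initial configuration, which is the classical valency argument.

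First I would cast a consensus protocol into the framework of Section~\ref{ap:axiomatic}: events are messages annotated with their intended recipient (e.g.\ $m_i$), runs are the sequences of such events generated by the nondeterministic $\mathsf{receive}$ operation, and a configuration records the processes' internal states together with the message buffer. One checks that $P_1$ (prefix-closure), $P_2$ (messages delivered to distinct processes commute, since a message for $j$ is invisible to $i\neq j$), and $P_3$ (only finitely many pending messages plus the null delivery are enabled) all hold, and that the fail-stop, one-faulty-process model is exactly the resiliency axiom $\mathsf{Res}$. Color each configuration by the set of decision values in $\{0,1\}$ reachable from it along some admissible continuation; validity gives a $\{c\}$-colored run from the all-$c$ initial configuration ($C_1$), the definition of the coloring gives $C_2$ directly, and irrevocability of decisions together with the fact that an including run only shrinks the set of reachable futures gives $C_3$. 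The hypothesis being contradicted---that the protocol always terminates---is precisely the decision axiom $\mathsf{D}$: every fair run has a finite monochromatic (decided, univalent) prefix.

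Next comes the step I expect to be the crux: exhibiting a polychromatic initial configuration. I would argue as in~\cite{FLP85}. By validity the all-$0$ initial configuration has color $\{0\}$ and the all-$1$ configuration has color $\{1\}$. Arrange the initial configurations in a chain in which consecutive ones differ in exactly one process's input bit; if \emph{every} initial configuration were monochromatic, then somewhere along this chain two adjacent configurations, differing only in the input of some process $p$, would have colors $\{0\}$ and $\{1\}$. Apply $\mathsf{Res}$ to obtain a fair run $\run{y}$ from the first configuration that never involves $p$ (so $p$ behaves as a failed process) and that reaches a decision. Since $\run{y}$ is seen identically by every process other than $p$ when started from the second configuration as well (the two configurations are $\{q\}$-equivalent for all $q\neq p$), the same decision value is forced from both configurations---contradicting that they have different colors. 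Hence some initial configuration is polychromatic.

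Finally I would assemble the pieces: Lemma~\ref{lem:cons} gives that the protocol satisfies \iod; the reachable-decision coloring makes it a $\{0,1\}$-chromatic decision protocol with a polychromatic initial configuration; so Theorem~\ref{thm:gen} yields a fair run from that configuration with no monochromatic prefix. Such a run never decides, contradicting $\mathsf{D}$, i.e.\ contradicting the assumption that the protocol always terminates. The main obstacle is the bivalence argument of the previous paragraph, and within it the careful use of $\mathsf{Res}$ to realize ``$p$ silent forever'' as a genuine indistinguishability argument between the two adjacent initial configurations; everything else is bookkeeping in the axiomatic framework plus a direct invocation of Theorem~\ref{thm:gen}. (One could alternatively import the existence of a bivalent initial configuration as a black box from~\cite{FLP85}, as the excerpt's remark suggests.)
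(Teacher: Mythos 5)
Your proposal is correct and follows essentially the same route as the paper: cast consensus into the axiomatic framework, use Lemma~\ref{lem:cons} to establish that resilient consensus protocols satisfy \iod, and apply Theorem~\ref{thm:gen} to a polychromatic initial configuration to produce a fair non-deciding run. The only difference is that you spell out the classical chain/indistinguishability argument for the existence of a bivalent initial configuration, whereas the paper imports that fact as a black box from~\cite{FLP85} --- an alternative you yourself acknowledge.
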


\end{document}